\tikzset{every loop/.style={}}
\definecolor{blue}{rgb}{0.1,0.2,0.5}
\definecolor{brown}{rgb}{0.6,0.6,0.2}
\newcommand*\patchAmsMathEnvironmentForLineno[1]{%
  \expandafter\let\csname old#1\expandafter\endcsname\csname #1\endcsname
  \expandafter\let\csname oldend#1\expandafter\endcsname\csname end#1\endcsname
  \renewenvironment{#1}%
     {\linenomath\csname old#1\endcsname}%
     {\csname oldend#1\endcsname\endlinenomath}}%
\newcommand*\patchBothAmsMathEnvironmentsForLineno[1]{%
  \patchAmsMathEnvironmentForLineno{#1}%
  \patchAmsMathEnvironmentForLineno{#1*}}%
\theoremstyle{plain}
\newtheorem{theorem}{Theorem}
\newcommand{\newtheoremwithcrefformat}[2]{%
  \newtheorem{#1}[theorem]{#2}%
  \crefformat{#1}{##2\MakeUppercase#1~##1##3}%
  \Crefformat{#1}{##2\MakeUppercase#1~##1##3}%
}
\newcommand{\newseptheoremwithcrefformat}[2]{%
  \newtheorem{#1}{#2}%
  \crefformat{#1}{##2\MakeUppercase#1~##1##3}%
  \Crefformat{#1}{##2\MakeUppercase#1~##1##3}%
}
\newtheorem*{claim*}{Claim}
\theoremstyle{definition}
\newtheorem*{example*}{Example}
\theoremstyle{nonumberplain}
\renewcommand{\qed}{\hfill$\square$}
\newcommand{\siz}{{$n^2+32n+4m+2k$}}
\newtheorem{claimm}{Claim}
\newenvironment{inproof}{\noindent {\emph{Proof of Claim.}}}{\hfill$\blacksquare$\smallskip

}
\newenvironment{customtheorem}[1]
  {\innercustomtheorem}
  {\endinnercustomtheorem}
\renewcommand{\phi}{\varphi}
\renewcommand{\epsilon}{\varepsilon}
\renewcommand{\leq}{\leqslant}
\renewcommand{\geq}{\geqslant}
\newcommand{\sat}{\textsc{Sat}\xspace}
\newcommand{\is}{\textsc{Independent Set}\xspace}
\newcommand{\maxcut}{\textsc{Max Cut}\xspace}
\newcommand{\maxbis}{\textsc{Max Bisection}\xspace}
\newcommand{\whomo}[1]{\textsc{WHom}(#1)\xspace}
\newcommand{\lihomo}[1]{\textsc{LIHom}(#1)\xspace}
\newcommand{\lbhomo}[1]{\textsc{LBHom}(#1)\xspace}
\newcommand{\lshomo}[1]{\textsc{LSHom}(#1)\xspace}
\newcommand{\kdir}[1]{#1-\textsc{DIR}\xspace}
\newcommand{\pnaesat}{\textsc{PosNAE 3-Sat}\xspace}
\newcommand{\oct}{\textsc{Odd Cycle Transversal}\xspace}
\newcommand{\ioct}{\textsc{Independent Odd Cycle Transversal}\xspace}
\newcommand{\tos}{\xrightarrow{s}}
\newcommand{\cS}{\mathcal{S}}
\newcommand{\prop}{property $(\star )$\xspace}
\newcommand{\R}{\mathbb{R}}
\begin{document}
\title{Subexponential algorithms for variants of homomorphism problem in string graphs\thanks{This work was partially supported by Polish National Science Centre grant no. 2018/31/D/ST6/00062. The extended abstract of this paper was presented at the conference WG 2019~\cite{WG}.}}

\author{
Karolina Okrasa\thanks{
Faculty of Mathematics and Information Science, Warsaw University of Technology, Poland, \texttt{k.okrasa@mini.pw.edu.pl}}
\and
Pawe\l{}~Rz\k{a}\.zewski\thanks{
Faculty of Mathematics and Information Science, Warsaw University of Technology, Poland, \texttt{p.rzazewski@mini.pw.edu.pl}}\;\;\thanks{Corresponding author.}
}

\begin{titlepage}
\def\thepage{}
\thispagestyle{empty}
\maketitle

\begin{abstract}
We consider the complexity of finding weighted homomorphisms from intersection graphs of curves (string graphs) with $n$ vertices to a fixed graph $H$. We provide a complete dichotomy for the problem: if $H$ has no two vertices sharing two common neighbors, then the problem can be solved in time $2^{O(n^{2/3} \log n)}$, otherwise there is no algorithm working in time $2^{o(n)}$, even in intersection graphs of segments, unless the ETH fails.
This generalizes several known results concerning the complexity of computational problems in geometric intersection graphs.

Then we consider two variants of graph homomorphism problem, called locally injective homomorphism and locally bijective homomorphism, where we require the homomorphism to be injective or bijective on the neighborhood of each vertex. We show that for each target graph $H$, both problems can always be solved in time $2^{O(\sqrt{n} \log n)}$ in string graphs.

For the locally surjective homomorphism, defined analogously, the situation seems more complicated. We show the dichotomy theorem for simple connected graphs $H$ with maximum degree 2. If $H$ is isomorphic to $P_3$ or $C_4$, then the existence of a locally surjective homomorphism from a string graph with $n$ vertices to $H$ can be decided in time $2^{O(n^{2/3} \log^{3/2} n)}$, otherwise, assuming ETH, the problem cannot be solved in time $2^{o(n)}$.

As a byproduct, we obtain results concerning the complexity of variants of homomorphism problem in $P_t$-free graphs -- in particular, the weighted homomorphism dichotomy, analogous to the one for string graphs.
\end{abstract}
\end{titlepage}

\section{Introduction}
The theory of NP-completeness gives us tools to identify problems which are unlikely to admit polynomial-time algorithms, but it does not give any insight into possible complexities of problems that are considered hard.
For example, the best algorithms we know for most canonical problems like {\sc 3-Coloring}, {\sc Independent Set}, {\sc Dominating Set}, {\sc Vertex Cover}, {\sc Hamiltonian Cycle}, are single-exponential, i.e., with complexity $2^{O(n)}$ ($n$ will always denote the number of vertices in the input graph). On the other hand, in planar graphs these problems are still NP-complete, but they admit {\em subexponential} algorithms (i.e., working in time $2^{o(n)}$). Indeed, most canonical problems in planar graphs admit a certain ``square-root phenomenon'', i.e., can be solved in time $2^{\widetilde{O}(\sqrt{n})}$\footnote{in the $\widetilde{O}(\cdot)$ notation we suppress polylogarithmic factors}. The core building block in construction of subexponential algorithms for planar graphs is the celebrated planar decomposition theorem by Lipton and Tarjan \cite{LiptonTarjan}, which asserts that every planar graph has a balanced separator of size $O(\sqrt{n})$.

To argue whether those algorithms are asymptotically optimal and, in general, to prove meaningful lower bounds on the complexity of hard problems, we need a stronger assumption than ``P $\neq$ NP''. Such a stronger assumption, commonly used in complexity theory, is the Exponential Time Hypothesis (ETH) by Impagliazzo and Paturi \cite{ImpagliazzoPaturi}, which implies that {\sc 3-Sat} with $n$ variables cannot be solved in time $2^{o(n)}$. For example, assuming the ETH, {\sc 3-Coloring}, {\sc Independent Set}, {\sc Dominating Set}, {\sc Vertex Cover}, {\sc Hamiltonian Cycle} cannot be solved in time $2^{o(n)}$ in general graphs or in time $2^{o(\sqrt{n})}$ in planar graphs. Thus the algorithms we know are asymptotically tight, unless the ETH~fails.

A natural direction of research is to consider restricted graph classes and try to classify problems solvable in subexponential time. Geometric intersection graphs provide a rich family of graph classes, which are potentially interesting from the point of view of fine-grained complexity, as they lie ``in between'' planar graphs and all graphs.
For a family $\cS$ of sets, we define its {\em intersection graph}, whose vertices are in one-to-one correspondence to members of $\cS$, and two vertices are adjacent if and only if their corresponding sets intersect. We will be interested in intersection graphs of sets of geometric objects in the plane.

For example, in {\em unit disk graphs}, i.e., intersection graphs of unit-radius disks in the plane, {\sc Independent Set}, {\sc Hamiltonian Cycle}{, {\sc Vertex Cover} can be solved in time $2^{\tilde{O}(\sqrt{n})}$ \cite{DBLP:journals/jal/AlberF04,MarxP15,DBLP:conf/icalp/FominLPSZ17}, and {\sc $k$-Coloring} can be solved in time $2^{\tilde{O}(\sqrt{nk})}$ for every $k$ \cite{DBLP:conf/ciac/Kisfaludi-BakZ17,Biro17}.  All these bounds are essentially tight under the ETH, up to polylogarithmic factors in the exponent.
Many algorithms for (unit) disk graphs use the fact that disk intersection graphs also have small separators. Indeed, Miller {\em et al.} showed that the intersection graph of a family of $n$ disks, such that at most $k$ of them share a single point, has a balanced separator of size $O(\sqrt{nk})$ \cite{DBLP:journals/jacm/MillerTTV97}. This was later generalized to intersection graphs of families of arbitrary convex shapes that are fat, i.e., with bounded ratio of the radius of the smallest enclosing circle to the radius of the largest enclosed circle \cite{SmithW98}.

It is perhaps interesting to note that, by the celebrated kissing lemma by Koebe \cite{Koebe}, every planar graph is an intersection graph of interior-disjoint disks. Note that in such a representation each point is contained in at most two disks, so the separator theorem for disk graphs implies the planar separator theorem.

In this paper we are interested in intersection graphs of non-fat geometric objects.
In particular, we will investigate {\em string graphs}, i.e., intersection graphs of continuous curves in the plane (see Kratochv\'il \cite{DBLP:journals/jct/Kratochvil91,DBLP:journals/jct/Kratochvil91a}) and {\em segment graphs}, i.e., intersection graphs of straight-line segments (see Kratochv\'il and Matou\v{s}ek \cite{KRATOCHVIL1994289}). We can restrict the representation even further and consider \kdir{$k$} graphs, which are intersection graphs of segments using at most $k$ distinct slopes \cite{KRATOCHVIL1994289}. It is known that planar graphs form another subclass of segment graphs \cite{ChalopinG09,DBLP:conf/soda/GoncalvesIP18}.

General string separator theorems have been proven by Fox and Pach~\cite{DBLP:journals/cpc/FoxP10} and Matou\v{s}ek \cite{DBLP:journals/cpc/Matousek14}. The following, asymptotically tight version, was shown by Lee~\cite{Lee16}.

\begin{theorem}[Lee~\cite{Lee16}]\label{thm-stringsep}
Every string graph with $m$ edges has a balanced separator of size $O(\sqrt{m})$. It can be found in polynomial time, if the geometric representation is given.  \qed
\end{theorem}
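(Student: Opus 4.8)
The plan is to work with the ``region intersection graph'' picture underlying string graphs. First I would planarize the given representation: overlaying all the curves yields a plane graph $G_0$ whose vertices are the curve endpoints and the pairwise crossing points, and whose edges are the arcs between consecutive such points along a curve. Each curve $c_v$ then becomes a \emph{connected} subgraph $R_v\subseteq G_0$, and two vertices $u,v$ of the string graph $G$ are adjacent exactly when $R_u\cap R_v\neq\emptyset$; that is, $G$ is a region intersection graph over a planar host $G_0$. The task thus becomes to extract a small balanced separator of $G$ out of structure living in the well-understood graph $G_0$, and note that all of this is polynomial-time once a representation is given.

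Second, I would express ``balanced separator of size $s$'' as a cut problem and pass to its fractional relaxation. Up to constants, such a separator of $G$ exists iff a suitable vertex-capacitated concurrent multicommodity flow in $G$ -- with demands chosen so that the total demand is $\Theta(m)$ (e.g.\ proportional to products of degrees) -- has value $O(1/s)$; by LP duality together with a region-growing rounding it suffices to \emph{upper bound} this flow value by $O(1/\sqrt m)$, and the relevant LP/flow relaxations admit polynomial-time (approximate) solution. The crux is then to transfer a flow in $G$ into a flow in the host $G_0$ without blowing up congestion. A flow path $v_1v_2\cdots v_k$ in $G$ is turned into a walk in $G_0$ by choosing, for each $i$, a vertex of $G_0$ inside $R_{v_i}\cap R_{v_{i+1}}$ and joining consecutive choices by a path inside the connected region $R_{v_i}$; the danger is that one vertex of $G_0$ may belong to very many regions, and that the intra-region routing is not canonical. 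I would route inside each region along a fixed low-congestion structure (a spanning tree, or a recursive edge decomposition) and randomize the connector vertices, arguing that the expected load of any vertex of $G_0$ is only a constant multiple of the flow it ``ought'' to carry, so that the transferred flow value is, up to a constant factor, the original one.

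Finally, since $G_0$ is planar -- hence its balanced-separator LP relaxation has $O(1)$ integrality gap, as holds for any minor-closed class, and it in particular has $O(\sqrt{|V(G_0)|})$-size balanced separators -- the transferred flow, and therefore the original flow in $G$, is forced to be $O(1/\sqrt m)$, which yields the desired separator of $G$ of size $O(\sqrt m)$. The congestion analysis in the flow-transfer step is where I expect the real difficulty to lie: proving that pushing a flow from the rig $G$ to the host $G_0$ costs only a constant factor is exactly the nontrivial input, while the remaining ingredients are either bookkeeping (planarization, duality, rounding) or black-box appeals to planar-separator technology. I would also keep in mind, as a fallback, a self-improving argument in the spirit of Fox--Pach and Matou\v{s}ek -- recursively separating $G_0$ and charging the losses back to $G$ -- which already gives separators of size $m^{3/4+o(1)}$, but pushing this down to the tight $O(\sqrt m)$ appears to require the flow-transfer idea above.
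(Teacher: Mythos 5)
You should first note what you are comparing against: the paper does not prove this statement at all --- it is imported as a black box from Lee~\cite{Lee16} (hence the bare \qed), so the only meaningful benchmark is Lee's actual argument, whose general framework (region intersection graph over a planar host, vertex-capacitated multiflow/metric duality in the spirit of Biswal--Lee--Rao and Feige--Hajiaghayi--Lee, which is also how Matou\v{s}ek got $O(\sqrt m\log m)$) your sketch does echo. Within that framework, however, your proposal has a genuine gap, and it sits exactly where you yourself flag it. The flow-transfer step is not a verification that can be deferred: a single vertex of the host $G_0$ may lie in arbitrarily many regions $R_v$ (this is precisely what allows string graphs to contain large cliques), so ``route inside each region along a fixed spanning tree and randomize the connector vertices'' yields no bound of the form ``expected load of a host vertex is a constant times the flow it ought to carry'' --- there is no well-defined flow a host vertex ``ought'' to carry, because capacities in $G$ live on regions while the transferred load accumulates on shared host vertices, and charging that load back to regions is the whole difficulty. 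Lee's proof does not obtain the theorem by such a rerouting with constant congestion loss; the heart of his argument is a much more delicate treatment (on the dual, conformal-metric side of the congestion LP) of exactly this multiplicity issue, and removing even the single $\log$ factor present in Matou\v{s}ek's congestion argument is the main technical achievement. Asserting the transfer is therefore assuming the theorem.

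The concluding step is also not sound as stated. Planarity of $G_0$ gives you flow-cut gaps and separators \emph{for $G_0$}, measured either in $|V(G_0)|$ or in host vertices, and neither quantity controls what you need. The number of crossing points $|V(G_0)|$ is not bounded by any function of $m$ --- the paper itself recalls that some string graphs require representations with exponentially many crossings~\cite{KratochvilM91} --- so $O\bigl(\sqrt{|V(G_0)|}\bigr)$-type bounds are vacuous here; and a small vertex cut of $G_0$ does not translate into a small separator of $G$ unless it meets only few regions, which is again the overlapping-regions problem (nor can you contract the regions to terminals to restore a bound in $n$ or $m$, since the regions intersect and contraction of overlapping subgraphs need not preserve planarity). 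In short, the bound $O(1/\sqrt m)$ on the flow value of $G$ does not follow from the planar flow-cut gap of $G_0$ plus bookkeeping: the missing congestion/metric-transfer lemma \emph{is} the theorem. Your roadmap identifies the right general machinery, but as a proof it is incomplete, and in the context of this paper the statement is correctly treated as an external result of Lee rather than something recoverable from planar-separator technology.
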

Observe that since planar graphs are string graphs and have linear number of edges, \autoref{thm-stringsep} implies the planar separator theorem.

Using the string separator theorem, Fox and Pach~\cite{FoxP11} showed that {\sc Independent Set} (and thus {\sc Vertex Cover}) can be solved in subexponential time in string graphs. Combining \autoref{thm-stringsep} with their approach gives the complexity $2^{\tilde{O}(n^{2/3})}$.  The algorithm is a simple win-win strategy: either we have a vertex of large degree and we branch on choosing it to the solution or not, or all degrees are small and thus there exists a small balanced separator, which allows us for one step of divide \& conquer.
Recently, Marx and Pilipczuk \cite{MarxP15} used a different approach to obtain a $2^{{O}(\sqrt{n})} p^{O(1)}$ algorithm for {\sc Independent Set} in string graphs, where $p$ is the number of {\em geometric vertices} in the representation.

While the algorithm of Marx and Pilipczuk seems difficult to generalize to other problems, Bonnet and Rzążewski \cite{DBLP:conf/wg/BonnetR18} showed that the win-win strategy of Fox and Pach can be successfully applied to obtain subexponential algorithms for {\sc 3-Coloring}, {\sc Feedback Vertex Set}, and {\sc Max Induced Matching}. 
Quite surprisingly, they showed that for every $k\geq 4$, {\sc $k$-Coloring} cannot be solved in time $2^{o(n)}$, even in \kdir{2} graphs, unless the ETH fails.
They also showed that assuming the ETH, {\sc Dominating Set}, {\sc Independent Dominating Set}, and {\sc Connected Dominating Set} do not admit subexponential algorithms in segment graphs, and {\sc Clique} does not admit such an algorithm in string graphs.

This shows that the complexity landscape in string and segment graphs appears to be much more interesting than in planar graphs or intersection graphs of fat objects. In order to understand which problems can be solved in subexponential time, it would be especially desirable to obtain full dichotomy theorems for some natural families of problems, instead of proving ad-hoc results for single problems.
A natural language to describe these families in a uniform way is provided by graph homomorphisms. For graphs $G$ and $H$ (with possible loops), a {\em homomorphism} from $G$ to $H$, denoted by $h \colon G \to H$, is an edge-preserving mapping from the vertex set of $G$ to the vertex set of $H$ (see the book by Hell and Ne\v{s}et\v{r}il \cite{ Hell2004}).
A homomorphism $h \colon G \to H$ will be often called an $H$-{\em coloring} of $G$ and we will think of vertices of $H$ as {\em colors}.
Note that the notion of homomorphisms is flexible and allows us to impose additional restrictions, such as vertex/edge lists \cite{DBLP:journals/combinatorica/FederHH99,DBLP:conf/soda/HellR11} or vertex/edge weights \cite{DBLP:journals/ejc/GutinHRY08}.
This way many well-known problems can be formulated as problems of finding a homomorphism to a certain graph $H$, possibly with additional constraints. For example, {\sc $k$-Coloring} is equivalent to a homomorphism to $K_k$, and {\sc Independent Set} is equivalent to  a weight-maximizing homomorphism to $H=$
\begin{tikzpicture}[scale=.7, every node/.style={draw,circle,fill=white,inner sep=1pt,minimum size=5pt}]
\draw[line width=1] (0,0) -- (1,0);
\node at (0,0) {\tiny 1} edge [line width=1,in=60,out=120,loop] ();
\node at (1,0) {\tiny 0};
%\node at (0,0) {\tiny 0} edge [line width=1,in=155,out=200,loop] ();
%\node at (1,0) {\tiny 1};
\end{tikzpicture}
(numbers denote weights of vertices of $G$ mapped to particular vertices of $H$).

\paragraph*{Weighted homomorphisms.} Let $H=(V(H),E(H))$ be a fixed graph (with possible loops), and consider the following computational problem called \whomo{$H$}. The instance consists of a graph $G=(V(G),E(G))$, a {\em weight function} $w \colon (V(G) \times V(H)) \cup (E(G) \times E(H)) \to \R$, and an integer $k$. For simplicity we also allow $-\infty$ as a weight, but this can be avoided by shifting all weights and using a sufficiently small integer to represent the weight corresponding to a forbidden choice.
For a homomorphism $h: G \to H$ and for any $X \subseteq V(G)\cup E(G)$ we define the weight of $X$ by $w_h(X)=\sum_{x \in X}w(x,h(x))$. The {\em weight of $h$} is defined as $w_h(V(G) \cup E(G))$. We ask if there exists a homomorphism from $G$ to $H$ whose total weight is at least $k$. It is straightforward to see that this problem generalizes some well-studied variants of graph homomorphism problem, including {\sc List Homomorphism} \cite{DBLP:journals/combinatorica/FederHH99} and {\sc Min Cost Homomorphism} \cite{DBLP:journals/ejc/GutinHRY08}.

We show the following dichotomy theorem for \whomo{$H$} in string graphs.

\begin{theorem} \label{thm-mainhom}
Let $H$ be a fixed graph.
\begin{compactenum}[(a)]
\item If $H$ has no two vertices with two common neighbors, then the \whomo{$H$} problem can be solved in time $2^{{O}(n^{2/3} \log n)}$ for string graphs with $n$ vertices.
\item Otherwise, the \whomo{$H$} problem in NP-complete and cannot be solved in time $2^{o(n)}$ for segment graphs, unless the ETH fails.
\end{compactenum}
\end{theorem}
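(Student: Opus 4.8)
The plan is to prove the two parts separately. For part~(a) I would use a win-win strategy in the spirit of Fox and Pach, and for part~(b) gadget reductions tailored to the forbidden configuration in $H$. For part~(a), I would solve the slightly more general problem of \whomo{$H$} in which every vertex of $G$ also carries a \emph{list} of admissible colours (encoded by weight $-\infty$ outside the list); the original problem is the case of full lists, and this class is closed under the operations below. Fix a threshold $d=\Theta(n^{1/3})$. If every vertex has degree $<d$, then $G$ has $O(n^{4/3})$ edges, so by \autoref{thm-stringsep} it has a balanced separator $S$ with $|S|=O(n^{2/3})$; I would branch over all $|V(H)|^{|S|}$ colourings of $S$, push each into the lists, and recurse on the two sides. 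If instead $G$ has a vertex of degree $\ge d$ (so $G$ is dense), then, invoking a Tur\'an-type theorem for string graphs, $G$ contains a biclique $K_{A,B}$ with $|A|,|B|=\Omega(n^{1/3})$ up to polylogarithmic factors. This is the only place the hypothesis on $H$ enters, through the observation that \emph{for every homomorphism $h\colon G\to H$, $h$ is constant on $A$ or constant on $B$}: if $a,a'\in A$ had $h(a)\ne h(a')$ and $b,b'\in B$ had $h(b)\ne h(b')$, then $h(b)$ and $h(b')$ would be two distinct common neighbours of the two distinct vertices $h(a),h(a')$ of $H$, contradicting the hypothesis. So I would branch over which side is monochromatic and over its single colour $c$ (only $O(|V(H)|)=O(1)$ branches), delete all but one vertex of that side, restrict the lists of its former neighbours to $N_H(c)$, and fold the affected weights into the surviving unary weights; the outcome is again a list-weighted \whomo{$H$} instance on a string graph with $\Omega(n^{1/3})$ (up to polylog) fewer vertices. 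Interleaving the two cases, each root-to-leaf branch of the search tree performs $O(\log n)$ separator splits and at most $\widetilde{O}(n^{2/3})$ collapse steps, which after balancing the parameters gives the claimed $2^{O(n^{2/3}\log n)}$ bound.

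For part~(b), let $p,q$ be two vertices of $H$ with two common neighbours $r,s$. I would distinguish cases by how $p,q,r,s$ overlap: the generic case, where they are distinct and induce at least a $4$-cycle, and the degenerate cases, where some coincide, forcing loops and producing e.g.\ a reflexive edge or a reflexive triangle. In each case the substructure induced on $\{p,q,r,s\}$, together with arbitrary vertex and edge weights, is expressive enough to implement Boolean variables and binary constraints: in the $4$-cycle case a curve whose list is $\{p,q,r,s\}$ behaves like a Boolean variable --- its value being which colour of its bipartition class it receives --- and a prescribed forbidden pattern on an edge can be penalised by an edge weight; the degenerate cases are only more flexible. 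I would then reduce from an ETH-hard problem such as \pnaesat (or encode \maxcut), arranging the reduction so that the constructed graph $G$ is a segment graph, indeed a \kdir{2} graph, with $O(1)$ curves per variable and per constraint, following the geometric constructions of Bonnet and Rz\k{a}\.zewski for \coloring{k} in \kdir{2} graphs. As the reduction is linear, ETH rules out $2^{o(n)}$ and NP-hardness follows.

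I expect the main obstacle to lie in part~(b): the reduction must be \emph{uniform} over all target graphs $H$ that merely contain the configuration, so the gadgets may exploit only the $\{p,q,r,s\}$-substructure and the weights and nothing global about $H$, and it must simultaneously output an explicit segment (even \kdir{2}) representation of the whole instance together with the weight function. In part~(a) the delicate point is only quantitative --- one needs the Tur\'an-type bound for string graphs to be strong enough that the density threshold matches the $O(n^{2/3})$ separator size --- while the structural content rests on the short observation above and \autoref{thm-stringsep}.
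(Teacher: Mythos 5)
Your key structural observation for part (a) --- that if $G$ contains a biclique $K_{A,B}$ and $H$ has \prop, then every homomorphism is constant on $A$ or on $B$ --- is correct, and the collapse-and-fold step is compatible with the weighted formulation. However, the win-win dichotomy as you state it is broken: a single vertex of degree at least $n^{1/3}$ does \emph{not} make $G$ dense (a star has maximum degree $n-1$ and is $K_{2,2}$-free), so in your second case you cannot invoke the Tur\'an-type bound for string graphs (\autoref{thm-stringedges}) to extract a biclique with sides of size $\Omega(n^{1/3})$; as written, the case distinction does not cover graphs that have a high-degree vertex but few edges. The repair is to split on the number of edges (above or below $\Theta(n^{4/3}\log n)$), exactly as the paper does for \lshomo{$P_3$} in \autoref{thm:lsh-p3}; with that fix your algorithm is sound, but in the robust setting the separator has size $O(\sqrt{m})=O(n^{2/3}\log^{1/2}n)$ and exhaustive search for it costs $2^{O(n^{2/3}\log^{3/2}n)}$, slightly above the claimed $2^{O(n^{2/3}\log n)}$. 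The paper's dense-case branching is different and avoids both issues: it branches on a single high-degree vertex $v$ and one colour $a\in L(v)$, where \prop guarantees a colour $b$ that disappears from the lists of $\Omega(n^{1/3})$ neighbours of $v$ in the ``assign'' branch; progress is measured by total list size, so only a high-degree vertex (not global density) is needed, and the sparse case keeps maximum degree $n^{1/3}$, hence a separator of size $O(n^{2/3})$ and the stated bound.

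For part (b) you have essentially asserted, rather than constructed, the reductions, and the case analysis is slightly off: when $p,q,r,s$ are not all distinct the minimal configurations are the reflexive edge and a triangle with a \emph{single} loop (the paper's graphs (a) and (b)), not a reflexive triangle, and the claim that the degenerate cases are ``only more flexible'' is unjustified. For the reflexive edge every map is a homomorphism, so all structure must come from weights, and this is the paper's most involved case: a chain \pnaesat $\to$ bounded-degree \maxcut $\to$ \maxbis $\to$ \maxcut on segment graphs (\autoref{thm-mchomo}), built on a grid of segments whose forced crossings create a large complete bipartite background that the 16-segment vertex gadgets and edge gadgets must tolerate. The triangle-with-one-loop case needs its own reduction (\is $\to$ \oct, \autoref{thm-octhomo}), and only the $C_4$-containing cases are done the way you envisage, via edge weights on a complete bipartite \kdir{2} grid (\autoref{thm-homo-c4}). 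Your weight-only plan is salvageable --- since \whomo{$H$} permits $-\infty$ edge weights one can emulate binary constraints over a complete or complete bipartite segment background, and the paper notes such weight-based variants in its $P_t$-free corollaries --- but the concrete gadget and weight constructions together with their segment (or \kdir{2}) realizations constitute the bulk of the hardness proof and are missing from your proposal.
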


Very recently Groenland {\em et al.} \cite{DBLP:journals/corr/abs-1803-05396} observed that if $H$ has no two vertices with two common neighbors, then  \whomo{$H$} can be solved in time $2^{O(\sqrt{n})}$ in $P_t$-free graphs, for every fixed $t$. Note that there are string graphs with arbitrarily long induced paths, and there are $P_t$-free graphs that are not string graphs.

The algorithm proving \autoref{thm-mainhom} (a) is a slight adaptation of the win-win approach by Fox and Pach \cite{FoxP11}, later extended by Bonnet and Rzążewski \cite{DBLP:conf/wg/BonnetR18}, and Groenland {\em et al.} \cite{DBLP:journals/corr/abs-1803-05396}.
The proof of part (b) is divided into a few cases, depending on the structure of $H$. In our reductions we try not to use the whole expressibility of the \whomo{$H$} problem, but aim to obtain hardness even for some natural special cases.
All hardness proofs follow the same pattern -- we start with a grid-like arrangement of segments, inducing a clique or a biclique, and then add constant-size gadgets to encode a specific problem. Note that this requires the objects to be non-fat and gives some intuition why problems in segment graphs tend to be harder than in intersection graphs of fat objects, and how the hardest instances look like. Finally, all graphs we construct are actually $P_t$-free for some fixed $t$, which, along with the result of Groenland {\em et al.} \cite{DBLP:journals/corr/abs-1803-05396}, gives the following dichotomy theorem for $P_t$-free graphs.

\begin{restatable}{theorem}{ptfreethm}\label{thm-mainptfree}
Let $H$ be a fixed graph.
\begin{compactenum}[(a)]
\item If $H$ has no two vertices with two common neighbors, then for all fixed $t$ the \whomo{$H$} problem can be solved in time $2^{{O}(\sqrt{n})}$ for $P_t$-free graphs with $n$ vertices.
\item Otherwise, the \whomo{$H$} problem is NP-complete and cannot be solved in time $2^{o(n)}$ for $P_t$-free graphs with $n$ vertices for some fixed $t$, unless the ETH fails.
\end{compactenum}
\end{restatable}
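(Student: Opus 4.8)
The plan is to split along the statement: part (a) is a direct citation and part (b) is a corollary of the segment-graph hardness together with a structural observation about the instances produced there. For part (a) I would simply invoke the observation of Groenland \emph{et al.}~\cite{DBLP:journals/corr/abs-1803-05396} quoted above: if $H$ has no two vertices with two common neighbors, then \whomo{$H$} can be solved in time $2^{O(\sqrt{n})}$ on $P_t$-free graphs with $n$ vertices, for every fixed $t$. Nothing further is needed for (a).

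For part (b), the key point is that the hardness reductions behind \cref{thm-mainhom}(b) already output graphs that are not only segment graphs but also $P_t$-free for a constant $t$ depending only on $H$. So the proof I would write consists of two moves: (1) recall each of the reductions from the proof of \cref{thm-mainhom}(b), grouped by the structure of $H$, and (2) for each of them, verify that the constructed graph has no induced path on $t$ vertices, for a suitable constant. Since those reductions are polynomial and already give NP-hardness and rule out a $2^{o(n)}$ algorithm under the ETH, the same consequences transfer verbatim to the $P_t$-free setting.

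For step (2) I would use a single uniform argument rather than handling each gadget separately, exploiting the common skeleton of all the constructions. Each constructed graph is a ``backbone'' -- a grid-like family of segments whose intersection graph is a clique or a complete bipartite graph, which are $P_3$-free and $P_4$-free respectively by a one-line check -- together with a collection of essentially pairwise-disjoint gadgets, each of constant size and attached to the backbone through a constant number of interface vertices. Given a hypothetical induced path $P$ in such a graph, restricting $P$ to the backbone splits it into subpaths of at most three vertices, restricting $P$ to a single gadget yields a subpath of $O(1)$ vertices, $P$ can move between the backbone and a given gadget only through that gadget's $O(1)$ interface vertices, and one checks that $P$ meets each gadget only $O(1)$ times; bounding the number of distinct gadgets that $P$ can visit (again because the backbone contains no long induced path) caps $|V(P)|$ by a constant, which is the desired $t$.

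The main obstacle I expect is making this interface argument watertight across all the reductions at once. The delicate cases are (i) reductions in which two gadgets share a vertex or an edge, or in which a gadget is glued to the backbone along more than a couple of vertices, and (ii) reductions in which the grid-like arrangement is not literally a clique or a biclique but one with a few added or deleted edges used to force a specific coloring; for these one must separately argue that the local deviations do not let an induced path thread through unboundedly many copies of a small gadget. Once each pair of gadgets is shown to interact only through a bounded, ``path-unfriendly'' interface, the counting argument goes through and the value of $t$ can simply be read off from the gadget sizes and the number of gadget types occurring for a given $H$.
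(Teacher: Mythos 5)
Your overall plan coincides with the paper's: part (a) is exactly the citation of Groenland {\em et al.}, and part (b) is obtained by revisiting the reductions behind \autoref{thm-mainhom}~(b) and checking that the graphs they output are $P_t$-free for a constant $t$. The gap is in the central counting step of your ``uniform'' verification. You bound the number of backbone vertices, and hence of gadget visits, of a hypothetical long induced path $P$ by appealing to the fact that the backbone (a clique or a biclique) has no long induced path. But $V(P)$ intersected with the backbone need not induce a path \emph{of the backbone}: since $P$ is induced in $G$, it only forces that set to induce a linear forest, and in the biclique $K_{n,n}$ arbitrarily many same-side vertices form an independent set, which is a perfectly admissible linear forest. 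So nothing in your argument excludes a path of the form $x_{i_1}$, gadget detour, $x_{i_2}$, gadget detour, $x_{i_3},\dots$ threading through unboundedly many gadgets, which is precisely the scenario you must rule out. What actually kills it in the constructions of \autoref{thm-mchomo} and \autoref{thm-octhomo} is the attachment pattern, not the backbone alone: every gadget meets the backbone in exactly one $X$-vertex and one $Y$-vertex (respectively, in one or two backbone vertices in the clique case), and these two backbone neighbours are adjacent to each other; consequently, once $P$ leaves the backbone into a gadget it can never re-enter the backbone, since the exit and re-entry vertices would be adjacent in $G$ but non-consecutive on $P$. With that observation the bound is immediate, but it is a property of the specific constructions, not of the generic ``constant-size gadgets glued to a clique or biclique'' picture, so the uniform lemma as you state it is false in the biclique case and must be repaired.

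The paper avoids the issue by a direct per-construction check, which is easier than you anticipate: in the \maxcut instance the longest induced path is written down explicitly ($\beta_{ii'},\alpha_{ii'},x_i,y_j,\alpha_{jj'},\beta_{jj'}$, so the instance is $P_7$-free); the \oct instance is $P_{13}$-free; and for all graphs of Figure~\ref{fig-themagnificentseven}~(c)--(g), i.e.\ whenever $H$ contains $C_4$, the instance of \autoref{thm-homo-c4} has no gadgets at all --- it is a weighted $K_{n,n}$, hence $P_4$-free. The paper additionally notes that in cases (a) and (b) one may replace the gadgets by edge weights, obtaining hardness already on complete and on complete bipartite graphs ($P_3$- and $P_4$-free). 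In particular, the delicate situations you worry about (gadgets sharing vertices, perturbed grids) simply do not occur in these reductions, so the honest fix for your write-up is either to prove the re-entry argument above for each construction or to fall back on the paper's explicit longest-induced-path checks.
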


\paragraph*{Locally constrained homomorphisms.} Interesting variants of graph homomorphism problems can be obtained by imposing some additional constrains on the neighborhood of each vertex. A homomorphism $h$ from $G$ to $H$ is called {\em locally injective} ({\em locally bijective}, {\em locally surjective}) if for every $v \in V(G)$ it induces an injective (bijective, surjective, resp.) mapping between the neighborhood of $v$ and the neighborhood of $h(v)$. Locally bijective homomorphisms have been studied from combinatorial \cite{DBLP:conf/mfcs/FialaPT05,DBLP:journals/ejc/FialaM06} and computational point of view \cite{DBLP:journals/dm/MacGillivrayS10,DBLP:conf/wg/FialaK06,DBLP:journals/tcs/ChaplickFHPT15,DBLP:journals/tcs/FialaP05}. Let \lihomo{$H$}, \lbhomo{$H$}, and \lshomo{$H$} denote, respectively, the computational problems of determining the existence of a locally injective, bijective, and surjective homomorphism from a given graph to $H$.

Some well-known graph problems can be expressed as locally constrained homomorphism. For example, locally injective homomorphism to the complement of the $k$-vertex path appears to be equivalent to $k$-$L(2,1)$-labeling, i.e., a mapping from the vertex set of the input graph to the set $\{1,2,\ldots,k\}$, in which adjacent vertices get labels differing by at least 2, and vertices with a common neighbor get different labels \cite{DBLP:journals/siamdm/GriggsY92,DBLP:journals/algorithmica/HavetKKKL11}. If $H$ is the complete graph $K_k$, then \lihomo{$H$} is exactly the $k$-coloring of the square of the graph \cite{DBLP:journals/jgt/HeuvelM03,227702}. Finally, if $H$ is a complete graph with $k$ vertices, where every vertex has a loop, then \lihomo{$H$} is equivalent to the injective $k$-coloring \cite{DBLP:conf/latin/HellRS08,DBLP:journals/dm/HahnKSS02}, in which the only restriction is that no two vertices with a common neighbor get the same color.

We show that, unlike \whomo{$H$}, both \lihomo{$H$} and \lbhomo{$H$} can always be solved in subexponential time in string graphs.

\begin{restatable}{theorem}{mainlihom}\label{thm-mainlihom}
For every fixed graph $H$, the \lihomo{$H$} problem and the \lbhomo{$H$} problem can be solved in time $2^{{O}(\sqrt{n}\log n)}$ in string graphs.
\end{restatable}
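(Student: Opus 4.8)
The plan is to reduce the problem, via a simple degree bound, to the case of a bounded‑degree string graph and then run a divide‑and‑conquer based on \autoref{thm-stringsep}. The key observation is that if $h\colon G\to H$ is locally injective — in particular if it is locally bijective — then for every $v\in V(G)$ the map $h$ restricts to an injection of $N_G(v)$ into $N_H(h(v))$, hence $\deg_G(v)\le\deg_H(h(v))\le\Delta(H)$. So, writing $d:=\Delta(H)$, a yes‑instance $G$ satisfies $\Delta(G)\le d$; the algorithm rejects immediately if this fails, and otherwise $G$ has at most $dn/2=O(n)$ edges. By \autoref{thm-stringsep} it then has a balanced separator of size $O(\sqrt n)$, and the same bound holds for every induced subgraph arising in the recursion, since string graphs of maximum degree at most $d$ form a hereditary class (delete the corresponding curves from the representation).

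On an induced subgraph $G'$ with $n'$ vertices we pick a balanced separator $S$, $|S|=O(\sqrt{n'})$; because $\Delta(G')\le d$, its closed neighbourhood $N_{G'}[S]$ still has size $O(\sqrt{n'})$. We branch over all $|V(H)|^{O(\sqrt{n'})}=2^{O(\sqrt{n'})}$ colourings of $N_{G'}[S]$. For each one we first check that the local injectivity (resp.\ bijectivity) requirement holds at every vertex of $S$ — possible because the whole closed neighbourhood of such a vertex is now coloured, counting also its neighbours outside $G'$, whose colours are carried as part of the bookkeeping. If this check passes, the components of $G'-S$ become independent: each component $C$, together with the colouring already fixed on $C\cap N_{G'}[S]$ and a constant amount of data per vertex — the set of colours used by its already‑coloured neighbours, and for the bijective variant the set of $H$‑neighbours of its image already hit — is an instance of the same type on at most $2n'/3$ vertices, on which we recurse. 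Since each level uses a separator of size $O(\sqrt{n'})$ and the recursive instances partition essentially the vertex set, the running time obeys $T(n)\le 2^{O(\sqrt n)}\bigl(\mathrm{poly}(n)+\sum_i T(n_i)\bigr)$ with $\sum_i n_i\le n$ and $n_i\le 2n/3$, and a routine induction gives $T(n)=2^{O(\sqrt n\log n)}$.

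I expect the one genuinely delicate point to be the decoupling step. For ordinary homomorphism or CSP‑type problems, fixing the colours of a separator $S$ makes the pieces of $G-S$ independent; here it does not, because the constraint at a vertex $v\in S$ — that $h$ be injective (resp.\ bijective) on $N_G(v)$ — couples all components that contain a neighbour of $v$. The remedy is to branch over the colouring of the enlarged set $N[S]$ rather than of $S$ itself (affordable since $\Delta(G)$ is bounded) and to thread the bounded ``already‑used colours'' information through the recursion, so that each component‑subproblem records exactly what it needs about the rest of the graph; the correctness statement then factors cleanly as: a colouring of $G'$ is a valid locally injective (resp.\ bijective) homomorphism iff the constraints hold at every vertex of $S$ and, for each component of $G'-S$, the recorded constraints hold. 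Everything else is a direct adaptation of the Fox--Pach‑style win‑win, here with no branching on high‑degree vertices at all — the degree bound $\Delta(G)\le\Delta(H)$ does that work for us.
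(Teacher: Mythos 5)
Your proposal is correct and follows essentially the same strategy as the paper: observe that local injectivity forces $\Delta(G)$ to be bounded by a constant depending on $H$ (reject otherwise), apply the string separator theorem to the resulting linear-size edge set, and recurse with constant-size per-vertex bookkeeping that preserves the local injectivity/bijectivity constraints across the cut, giving $2^{O(\sqrt{n}\log n)}$. The only difference is mechanical: the paper keeps $S$ in both subinstances and guesses, for each $v\in S$, its colour plus disjoint sets $\sigma_1,\sigma_2$ prescribing the colours of its neighbours on each side (checked via a ``$\sigma$-constraint'' threaded through the recursion), whereas you colour all of $N[S]$ outright (affordable by the degree bound), verify the constraints at $S$ immediately, and carry ``already-used colour'' (resp.\ ``already-hit neighbour'') sets into the components of $G'-S$; both decouplings are valid and yield the same bound.
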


The \lshomo{$H$} problem appears to be much harder.
In particular, we show the following dichotomy for simple graphs $H$ with $\Delta(H) \leq 2$ (observe that if $|H| \leq 2$, the problem can trivially be solved in polynomial time).

\begin{restatable}{theorem}{thmlsh} \label{thm:lsh-together}
Let $H$ be a connected simple graph with $\Delta(H) \leq 2$ and $|H| \geq 3$.
\begin{compactenum}[(a)]
\item If $H \in \{P_3,C_4\}$, then the \lshomo{$H$} problem can be solved in time $2^{O(n^{2/3}\log^{3/2} n)}$ for string graphs, even if geometric representation is not given.
\item Otherwise, the \lshomo{$H$} problem cannot be solved in time $2^{o(n)}$ in \kdir{$2$} graphs, unless the ETH fails.
\end{compactenum}
\end{restatable}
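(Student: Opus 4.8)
We address the algorithmic part~(a) and the hardness part~(b) in turn.

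\emph{Part (a).} Since $P_3$ and $C_4$ are bipartite, a locally surjective homomorphism into either of them exists only from a bipartite graph, so the first step is to test bipartiteness and work component‑wise, fixing the bipartition $(P_1,P_2)$ of a component. For $H=P_3$ such a homomorphism is exactly a choice of which of $P_1,P_2$ maps to the centre of $P_3$ --- say $P_1$ --- together with a $2$‑colouring of $P_2$ by the two leaves so that every vertex of $P_1$ sees both colours in its neighbourhood; for $H=C_4$, writing its bipartition as $\{1,3\}\cup\{2,4\}$ and noting that $1$ and $3$ have the same neighbourhood in $C_4$ (and likewise $2,4$), the condition decouples into two independent requirements of the same type. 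Thus in all cases it suffices to solve: given a bipartite string graph with sides $P_1,P_2$, colour $P_2$ with two colours so that every vertex of $P_1$ sees both --- equivalently, $2$‑colour the hypergraph whose hyperedges are the sets $N_G(v)$ for $v\in P_1$. The second step is the win--win scheme of Fox--Pach (as used by Bonnet and Rz\k{a}\.zewski and by Groenland et al.), with threshold $D=n^{1/3}$. If some vertex has degree at least $D$, we branch on its role; using that $\Delta(P_3),\Delta(C_4)\le 2$ together with the precise structure of these two graphs, each branch either essentially fixes the whole colouring of the component or lets us delete that vertex together with an $\Omega(D)$‑size block, carrying $O(1)$ bits of ``still needs colour $x$'' for each deleted vertex. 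If no such vertex exists, then $G$ has at most $nD/2=n^{4/3}/2$ edges, so by \autoref{thm-stringsep} it has a balanced separator of size $O(n^{2/3})$; since we are not given a representation we instead compute a balanced separator $S$ of size $O(n^{2/3}\sqrt{\log n})$ in polynomial time, combining \autoref{thm-stringsep} with the $O(\sqrt{\log n})$‑approximation for balanced vertex separators, then guess the colours on $S$ and, for each vertex of $S$ and each colour still needed in its neighbourhood, which side supplies it --- this is $2^{O(|S|)}$ choices because $|H|$ is constant --- and recurse on the two sides. Unrolling the recursion --- the high‑degree branches shrink the instance by $\Omega(n^{1/3})$, and along each branch the separator sizes form a geometric series over $O(\log n)$ levels --- gives total running time $2^{O(n^{2/3}\log^{3/2}n)}$.

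The technical heart of part~(a), and the reason the tractable cases are exactly $P_3$ and $C_4$, is the branching lemma just invoked: one must show that in a string graph, branching on a high‑degree vertex genuinely simplifies the instance even though local surjectivity creates non‑local ``unmet‑colour'' obligations --- one has to verify these obligations come in $O(1)$ types, that they can be threaded consistently through both the branching and the separator dynamic program, and that each high‑degree branch either essentially solves the component or removes $\Omega(\deg_G(v))$ vertices rather than only the branched vertex. This rests on the fact that in $P_3$ and $C_4$ a degree‑$2$ target vertex has two \emph{non‑adjacent} neighbours and $H$ has no vertex of degree $\ge 3$; for every other $H$ allowed in the statement these properties fail, and part~(b) shows the problem then has no subexponential algorithm even in \kdir{2} graphs.

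\emph{Part (b).} The remaining targets are $H=C_3$, $H=P_k$ for $k\ge 4$, and $H=C_k$ for $k\ge 5$. For each, the plan is a polynomial‑time reduction from {\sc 3-Sat} (or from the ETH‑equivalent \pnaesat) that turns a formula with $N$ variables and $M$ clauses into a \kdir{2} graph on $O(N+M)$ vertices; combined with the sparsification lemma, the nonexistence of a $2^{o(N+M)}$ algorithm for {\sc 3-Sat} under the ETH then gives the stated bound. Following the template sketched in the introduction, the construction starts from a grid‑like family of horizontal and vertical segments, which yields large bicliques (rigid under locally surjective maps to $H$) and lets us route ``wires'' carrying truth values between gadgets, and then attaches constant‑size variable gadgets and clause gadgets built from segments of the same two slopes. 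For a cycle target $C_k$ one exploits that along a wire the $C_k$‑coordinate must change by $\pm1$ at each step, so that traversing a closed wire imposes a ``net change $\equiv 0 \pmod{k}$'' constraint, which can be compiled into Boolean consistency constraints; for a path target $P_k$ with $k\ge 4$ one uses that a vertex mapped to an endpoint of $P_k$ forces all its neighbours to a single colour, which yields implication gadgets; and $C_3=K_3$ is the classical $3$‑role‑assignment problem, for which a grid‑based gadgetisation is direct. The main obstacle is to satisfy three requirements simultaneously: each gadget must realise its Boolean function \emph{exactly} under local surjectivity --- which, unlike for ordinary homomorphisms, also forbids ``under‑using'' a colour, so the gadgets are tighter than in typical $H$‑colouring reductions --- the whole picture must be drawable with only two slopes, and the size must stay linear so that the ETH lower bound transfers without loss; the large cycles and the long paths need somewhat different gadgets, but the grid skeleton and the wire‑propagation idea are common to all of them.
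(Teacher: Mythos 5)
There are genuine gaps in both halves. For part~(a), your whole algorithm rests on the ``branching lemma'' that you yourself call the technical heart but never prove, and as stated it is false: if $x$ is a high-degree vertex on the side mapped to $\{1,3\}$, fixing $h(x)=1$ does \emph{not} let you delete an $\Omega(D)$-size block, nor does it ``essentially fix the colouring of the component'' --- every neighbour of $x$ still carries the obligation to see the colour $3$ somewhere else, so no vertices can be removed and no obvious measure drops by $\Omega(D)$ in both branches. (One could try to salvage a branching argument with the measure $\sum_y|\sigma(y)|$ plus a greedy rule when one branch is void, but that is a different argument from the one you wrote, and you would still have to work it out and thread it through the separator recursion.) The paper deliberately avoids vertex-branching here: it notes that no good high-degree branching exists for \lshomo{$P_3$} and instead uses \autoref{thm-stringedges} --- a dense string graph contains a biclique $K_{n^{1/3},n^{1/3}}$, and one branches on the biclique's colour pattern (all of $X'$ gets $1$, all of $X'$ gets $3$, or two chosen vertices of $X'$ get $1$ and $3$, in which case the whole opposite side $Y'$ becomes happy and can be deleted); this is what makes the win--win work (\autoref{thm:lsh-p3}), and the $C_4$ case is then reduced to two $P_3$-type instances exactly as you describe (\autoref{thm:lsh-c4}). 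Your reformulation as hypergraph $2$-colouring and the decoupling for $C_4$ are fine; the dense case is not. Also, the paper does not need a separator-approximation algorithm: it simply guesses the $O(n^{2/3}\log^{1/2}n)$-size separator exhaustively within the stated time bound, which keeps the algorithm robust.

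For part~(b) you correctly identify the hard targets ($C_3$, $P_k$ for $k\ge 4$, $C_k$ for $k\ge 5$) and the general strategy (linear-size reduction from {\sc 3-Sat} drawn with two slopes), but what you give is a plan, not a proof: the variable, membership and clause gadgets --- which are the entire content of the hardness argument, since local surjectivity forbids both over- and under-using colours --- are never constructed, and you explicitly leave the ``main obstacle'' unresolved. The paper's proofs (\autoref{thm:lsh-pp-lower} and \autoref{thm:lsh-cp-lower}) supply exactly these gadgets: an anchoring path $T$ whose degree-one endpoint (for $P_k$) or whose interaction with the extra segments $X'$, $Y'$ (for $C_k$, including $C_3$) forces the colours of all occurrence and clause segments, variable gadgets that are paths of prescribed length encoding the truth value, and membership gadgets $\alpha_i,\beta_i$ plus a connecting path that transmit ``this literal is true'' to the clause segment. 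Some of your sketched mechanisms also do not match what is needed: the $C_k$ argument in the paper is not a winding/``net change $\bmod\ k$'' argument over closed wires, and ``an endpoint of $P_k$ forces all neighbours to one colour'' is a property of plain homomorphisms that by itself does not give the rigidity used here --- the paper instead uses that a degree-one vertex must map to a degree-one vertex (\autoref{obs-lsh}). Without the explicit gadgets and the verification that the reduction is correct in both directions, part~(b) is not established.
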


We also show that \lshomo{$H$} cannot be solved in subexponential time for $H=$ \begin{tikzpicture}[scale=.7, every node/.style={draw,circle,fill=white,inner sep=1pt,minimum size=5pt}]
\draw[line width=1] (0,0) -- (1,0);
\node at (0,0) {} edge [line width=1,in=60,out=120,loop] ();
\node at (1,0) {};
%\node at (0,0) {} edge [line width=1,in=155,out=200,loop] ();
%\node at (1,0) {};
\end{tikzpicture}.
Note that none of the graphs $H$, for which we obtain negative results for \lshomo{$H$} problem, has two vertices with two common neighbors. Thus they are all ``easy'' cases of  \whomo{$H$}.

\paragraph{Representation and robust algorithms.} When dealing with geometric intersection graphs, we need to be careful, whether the input consist of the graph along with the representation, or just the graph (with a promise that a geometric representation exists). This distinction might be crucial, since finding a representation is often a computationally hard task.

Recognizing string and segment graphs was shown to be NP-hard by Kratochv\'il \cite{KRATOCHVIL1994289}, and Kratochv\'il and Matou\v{s}ek \cite{KratochvilM91}, respectively. However, for a very long time it was unclear whether these problems are in NP. This is because there are string graphs, whose every representation requires exponential number of crossing points \cite{KratochvilM91} and there are segment graphs, whose every representation requires points with double exponential coordinates \cite{KRATOCHVIL1994289,DBLP:journals/jct/McDiarmidM13}.
Finally, Schaefer, Sedgwick, and \v{S}tefankovi\v{c} showed that recognizing string graph is in NP \cite{DBLP:journals/jcss/SchaeferSS03}, while recognizing segment graph appears to be complete for the complexity class $\exists \R$ \cite{Schaefer2017,DBLP:journals/corr/Matousek14}. This is a strong indication that the problem might not be in NP.

For these reasons, it is desirable for an algorithm not to require explicit representation. Such algorithms are called {\em robust} -- they either compute a solution, or report that the input graph does not belong to the required class. All algorithms presented in the paper are robust, but can be made slightly faster, if the representation is given. On the other hand, all hardness results hold even if the graph is given along with the geometric representation.

%%%%%%%%%%%%%%%%%%%%%%
\section{Weighted homomorphism problem}
In this section we prove \autoref{thm-mainhom}. Let us first discuss the notation and some preliminary observations.
For every vertex $v$ of graph $G$, let $N_G(v)$ denote the set of neighbors of $v$ in $G$ and for any $V' \subseteq V(G)$ let $N_G(V')=\bigcup_{v\in V'}N_G(v)$. Let $d_G(v)=|N_G(v)|$. If the graph is clear from the context, we will omit the subscript $G$ and simply write $N(v)$ and $d(v)$.
By $\Delta(G)$ we denote $\max_{v \in V(G)}d(v)$.

%Let $H$ be a fixed graph (with possible loops). By \homo{$H$} we denote the computational problem of deciding whether an input graph has a homomorphism to $H$. The instance of the {\em list homomorphism problem}, denoted by \lhomo{$H$}, is a graph $G$ and vertex-lists $L \colon V(G) \to 2^{V(H)}$. We ask for the existence of a homomorphism from $G$ to $H$, respecting the lists $L$. Clearly \homo{$H$} is a special case of \lhomo{$H$}, where every list is equal to~$V(H)$.

Let us recall the definition of \whomo{$H$}. The instance consists of a graph $G$, an integer $k$, and a \emph{weight function} $w: \left( V(G) \times V(H) \right) \cup \left( E(G) \times E(H) \right) \rightarrow \mathbb{R}$.
For a homomorphism $h: G \to H$ and for any $X \subseteq V(G)\cup E(G)$ we define the weight of $X$ by $w_h(X)=\sum_{x \in X}w(x,h(x))$. The {\em weight of $h$} is defined as $w_h(V(G) \cup E(G))$.
We ask if there exists a homomorphism, whose weight is at least~$k$.
A homomorphism $h \colon G \to H$ will be often called a {\em coloring} of $G$ and we will think of vertices of $H$ as {\em colors}.

%Observe that \whomo{$H$} generalizes \lhomo{$H$}, if for every $(v,a) \in V(G) \times V(H)$ we set $w(v,a)=1$ if $a \in L(v)$, and $w(v,a)=0$ otherwise. It is straightforward to see that $G,L$ is a yes-instance of \lhomo{$H$} iff $G$ has a homomorphism $h$ to $H$ with total weight $|V(G)|$.

We will say that $H$ (with possible loops) has \emph{\prop}, if it does not contain any pair of vertices $u,v$ such that $|N(u) \cap N(v)|\geq 2$. Note that if $H$ is loopless, then it has \prop if and only if it does not contain a copy of $C_4$ as a (non-necessarily induced) subgraph.

%%%%%%%%%%%%%%%%%%%%%%%%

\subsection{Algorithm} \label{sec:algorithm}
In this section we prove statement (a) of \autoref{thm-mainhom}.
\begin{customtheorem}{2 (a)}
Let $H$ be a fixed graph. If $H$ has no two vertices with two common neighbors, then the \whomo{$H$} problem can be solved in time $2^{{O}(n^{2/3} \log n)}$ for string graphs with $n$ vertices.
\end{customtheorem}

\begin{proof}
Let $G$ be a graph and $w$ be a weight function. We will find a homomorphism from $G$ to $H$ of maximum total weight (if one exists). Actually, we will assume that we are additionally given lists $L \colon V(G) \to 2^{V(H)}$, and we ask for a homomorphism respecting these lists. Note that this is not really necessary, as list can be expressed with appropriate choice of weights, but this requires modifying the weight function and makes the argument more complicated.
Define $N:=\sum_{v\in V(G)}|L(v)|$. Let $\ell := 2^{|H|}$, note that there are at most $\ell$ possible lists $L(v)$.

We start with preprocessing the instance. For any two adjacent vertices $u$ and $v$ of $G$, if $L(v)$ contains a vertex $b \in V(H)$, which is non-adjacent to every vertex from $L(u)$, we can safely remove $b$ from $L(u)$. Moreover, if there exists a vertex $v \in V(G)$, such that $L(v)$ is a singleton, say $L(v)=\{a\}$, then we can map $v$ to $a$ and remove it from $G$. We repeat these steps while possible, this  clearly takes only polynomial time, as $N \leq n|H|$. 
Due to this step, we can assume that every list has at least two elements and for every $uv \in E(G)$ and every $a \in L(v)$ there exists $a' \in L(u)$ such that $aa' \in E(H)$.

First, consider the case that $G$ has a vertex $v$ such that $d(v)> n^{1/3}$. It means that there is a list $L$, which is assigned to at least $n^{1/3}/ \ell$ neighbors of $v$. We observe that there exist $a \in L(v)$ and $b \in L$ such that $ab \not\in E(H)$. Indeed, due to the preprocessing step, $|L(v)| \geq 2$ and $|L| \geq 2$. So if every element of $L(v)$ is adjacent to every element of $L$, then the \prop is violated.

We branch on assigning $a$ to $v$: either we remove $a$ from $L(v)$ or we color $v$ with $a$ and update the lists of neighbors of $v$ by removing from them every non-neighbor of $a$. In particular, we will remove $b$ from lists of at least $n^{1/3}/ \ell$ neighbors of $v$. Let $F(N)$ be the complexity of this step, clearly is it bounded by the following:
\begin{align*}
F(N) \leq &  F(N-1)+F ( N-n^{1/3}/ \ell) \leq  \left(n^{1/3}/ \ell +1\right)^{N \ell/ n^{1/3}} \\
\leq & 2^{O(N \log n / n^{1/3})} \leq  2^{{O}(n^{2/3}\log n)}.
\end{align*}
In the other case we have that every vertex of $G$ has degree at most $n^{1/3}$. This means that $G$ has $O(n^{4/3})$ edges, so, by Theorem~\ref{thm-stringsep}, there exists a balanced separator $S$ of size $O(n^{2/3})$. We can find $S$ in polynomial time using the geometric representation, or by exhaustive guessing in time $n^{O(n^{2/3})} = 2^{O(n^{2/3}\log n)}$, if the geometric representation is not given. Then, we consider all possible $H$-colorings of $S$ and run one step of a standard divide and conquer algorithm. The complexity of this step is $n^{O(n^{2/3})}= 2^{{O}(n^{2/3}\log n)}$ and so is the total combined complexity of the algorithm.
\end{proof}

Observe that even if the input graph $G$ is not a string graph, but has an appropriate structure of separators, the algorithm can still give the correct answer in time $2^{{O}(n^{2/3}\log n)}$. Otherwise, the exhaustive search for a separator will fail and we can report that $G$ is not a string graph. This means that the presented algorithm is robust.
Moreover, the algorithm can be easily adapted to count all feasible solutions.

%%%%%%%%%%%%%%%%%%%%%%%

\subsection{Hardness results} \label{sec:hard}
Now we show that \prop is essential for the existence of subexponential algorithms: for all remaining graphs $H$, an algorithm solving \whomo{$H$} for string graphs in subexponential time would contradict the ETH. To begin with, observe that we can express the \prop in terms of forbidden subgraphs.

\begin{observation}
A graph has \prop if and only if it does not contain any of the seven graphs shown on the Figure \ref{fig-themagnificentseven} as an induced subgraph. \qed
\end{observation}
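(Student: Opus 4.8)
The plan is to prove the two implications separately. The forward implication (if $H$ has \prop then it avoids all seven graphs) is immediate once one observes that \prop is hereditary under induced subgraphs: if $H'$ is an induced subgraph of $H$ and some $u,v\in V(H')$ have two common neighbors in $H'$, then those vertices are still common neighbors in $H$. So it suffices to check, one graph at a time, that each of the seven graphs in \cref{fig-themagnificentseven} itself violates \prop, by pointing at a pair of vertices and two of their common neighbors --- e.g. the two non-adjacent vertices of a $C_4$, the looped vertex of a looped triangle together with either other vertex (whose common neighbors are the looped vertex itself and the third vertex), and the two endpoints of a loop-decorated edge, which are common neighbors of each other.

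For the reverse implication I would argue by contraposition: assume $H$ violates \prop, pick distinct $u,v\in V(H)$ and distinct $w_1,w_2\in N(u)\cap N(v)$, and set $S:=\{u,v,w_1,w_2\}$, so that $|S|\in\{2,3,4\}$; the goal is to show that $H[S]$, or some induced subgraph of it on fewer vertices, is one of the seven graphs. First I would dispose of $|S|=2$ (then $\{w_1,w_2\}=\{u,v\}$, forcing $uv\in E(H)$ and loops at both $u$ and $v$, which is exactly the looped edge) and $|S|=3$ (then $S$ carries a triangle with a loop at whichever of $w_1,w_2$ lies in $\{u,v\}$; if another vertex of $S$ is also looped then an adjacent looped pair appears, giving the looped edge, and otherwise $H[S]$ is exactly the looped triangle). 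The bulk of the work is $|S|=4$: here the four edges $uw_1,uw_2,vw_1,vw_2$ are present, so the loopless part of $H[S]$ is $C_4$, the diamond $K_4-e$, or $K_4$ (matching the remark that a loopless graph fails \prop iff it contains $C_4$ as a not-necessarily-induced subgraph), and I would then split on the set of looped vertices of $H[S]$: if two of them are adjacent we get the looped edge; if they form an independent set (necessarily of size at most $2$, since $C_4\subseteq H[S]$ gives independence number at most $2$) then with no loop we land on $C_4$/diamond/$K_4$, with exactly one loop we either read off $C_4$-with-one-loop directly or, when the loopless part is the diamond or $K_4$, pass to a triangle through the looped vertex to obtain the looped triangle, and with two non-adjacent loops (only possible on the pair $\{u,v\}$ or the pair $\{w_1,w_2\}$) we similarly get either $C_4$-with-two-opposite-loops or the looped triangle.

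The delicate point, and the one I would be most careful about, is making sure that in the $|S|=4$ case the passage to a smaller induced subgraph really lands on one of the seven listed graphs and not on some further ``loop variant'' of $C_4$, the diamond, or $K_4$. The uniform reason the list stays finite is exactly the observation used above: any two looped vertices that are adjacent already induce the looped edge, so once that case is removed only the loop configurations depicted in \cref{fig-themagnificentseven} can survive, and each of the remaining subcases reduces to a one-line check.
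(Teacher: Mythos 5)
Your proposal is correct. The paper states this as an observation with no written proof, and your argument is exactly the routine verification it leaves implicit: the forward direction via heredity of \prop plus a pointwise check that each of the seven graphs has a pair of vertices with two common neighbors (using the convention that a looped vertex belongs to its own neighborhood), and the reverse direction via the case analysis on $|\{u,v,w_1,w_2\}| \in \{2,3,4\}$, with the $4$-vertex case correctly reduced to the listed graphs by passing to the looped edge, a triangle through a looped vertex, or reading off one of (c)--(g) directly.
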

\begin{figure}[h]
%\vskip -.3cm
\begin{subfigure}[b]{0.13\textwidth}
\centering\begin{tikzpicture}[scale=.7, every node/.style={draw,circle,fill=white,inner sep=0pt,minimum size=5pt}]
\draw[line width=1] (0,0) -- (1,0);
\node[draw=none,fill=none] at (0.5,-0.5) {(a)};
\node at (0,0) {} edge [line width=1,in=60,out=120,loop] ();
\node at (1,0) {} edge [line width=1,in=60,out=120,loop] ();
\end{tikzpicture}
\end{subfigure}\hskip .1cm
\begin{subfigure}[b]{0.13\textwidth}
\centering\begin{tikzpicture}[scale=.7, every node/.style={draw,circle,fill=white,inner sep=0pt,minimum size=5pt}]
\draw[line width=1] (1,.5) -- (0,0);
\draw[line width=1] (1,.5) -- (0,1);
\draw[line width=1] (0,0) -- (0,1);
\node[draw=none,fill=none] at (0.5,-0.5) {(b)};
\node at (1,.5) {} edge [line width=1,in=-35,out=35,loop] ();
\node at (0,1) {} ;
\node at (0,0) {} ;
\end{tikzpicture}
\end{subfigure}\hskip .1cm
\begin{subfigure}[b]{0.13\textwidth}
\centering\begin{tikzpicture}[scale=.7, every node/.style={draw,circle,fill=white,inner sep=0pt,minimum size=5pt}]
\draw[line width=1] (0,1) -- (0,0);
\draw[line width=1] (0,1) -- (1,1);
\draw[line width=1] (0,0) -- (1,0);
\draw[line width=1] (1,1) -- (1,0);
\node[draw=none,fill=none] at (0.5,-0.5) {(c)};
\node at (0,0) {};
\node at (0,1) {};
\node at (1,0) {};
\node at (1,1) {};
\end{tikzpicture}
\end{subfigure}\hskip .1cm
\begin{subfigure}[b]{0.13\textwidth}
\centering\begin{tikzpicture}[scale=.7, every node/.style={draw,circle,fill=white,inner sep=0pt,minimum size=5pt}]
\draw[line width=1] (0,1) -- (0,0);
\draw[line width=1] (0,1) -- (1,1);
\draw[line width=1] (0,0) -- (1,0);
\draw[line width=1] (1,1) -- (1,0);
\node[draw=none,fill=none] at (0.5,-0.5) {(d)};
\node at (0,0) {};
\node at (0,1) {}  edge [line width=1,in=60,out=120,loop] ();
\node at (1,0) {};
\node at (1,1) {};
\end{tikzpicture}
\end{subfigure}\hskip .1cm
\begin{subfigure}[b]{0.13\textwidth}
\centering\begin{tikzpicture}[scale=.7, every node/.style={draw,circle,fill=white,inner sep=0pt,minimum size=5pt}]
\draw[line width=1] (0,1) -- (0,0);
\draw[line width=1] (0,1) -- (1,1);
\draw[line width=1] (0,0) -- (1,0);
\draw[line width=1] (1,1) -- (1,0);
\node[draw=none,fill=none] at (0.5,-0.5) {(e)};
\node at (0,0) {};
\node at (0,1) {} edge [line width=1,in=60,out=120,loop] ();
\node at (1,0) {} edge [line width=1,in=240,out=300,loop] ();
\node at (1,1) {};
\end{tikzpicture}
\end{subfigure}\hskip .1cm
\begin{subfigure}[b]{0.13\textwidth}
\centering\begin{tikzpicture}[scale=.7, every node/.style={draw,circle,fill=white,inner sep=0pt,minimum size=5pt}]
\draw[line width=1] (0,1) -- (0,0);
\draw[line width=1] (0,1) -- (1,1);
\draw[line width=1] (0,0) -- (1,0);
\draw[line width=1] (1,1) -- (1,0);
\draw[line width=1] (0,1) -- (1,0);
\node[draw=none,fill=none] at (0.5,-0.5) {(f)};
\node at (0,0) {};
\node at (0,1) {};
\node at (1,0) {};
\node at (1,1) {};
\end{tikzpicture}
\end{subfigure}\hskip .1cm
\begin{subfigure}[b]{0.13\textwidth}
\centering\begin{tikzpicture}[scale=.7, every node/.style={draw,circle,fill=white,inner sep=0pt,minimum size=5pt}]
\draw[line width=1] (0,1) -- (0,0);
\draw[line width=1] (0,1) -- (1,1);
\draw[line width=1] (0,0) -- (1,0);
\draw[line width=1] (1,1) -- (1,0);
\draw[line width=1] (0,1) -- (1,0);
\draw[line width=1] (0,0) -- (1,1);
\node[draw=none,fill=none] at (0.5,-0.5) {(g)};
\node at (0,0) {};
\node at (0,1) {};
\node at (1,0) {};
\node at (1,1) {};
\end{tikzpicture}
\end{subfigure}
\caption{Characterization of \prop by forbidden induced subgraphs.}
\label{fig-themagnificentseven}
\end{figure}
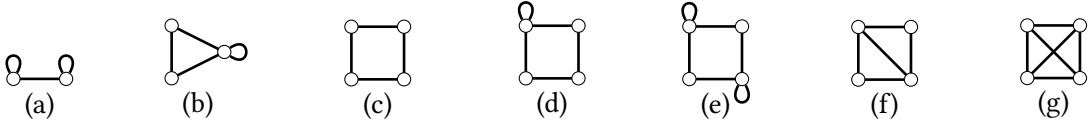

Note that to prove \autoref{thm-mainhom} (b), it is enough to show hardness of \whomo{$H$} for graphs $H$ shown in Fig. \ref{fig-themagnificentseven}.
Indeed, let $H$ be an induced subgraph of $H'$ and consider an instance $(G,w,k)$ of \whomo{$H$}. Define $w' \colon V(G) \times V(H') \cup E(G) \times E(H') \to \R$ as follows: for $x \in V(G) \cup E(G)$, if $a \in V(H) \cup E(H)$, then $w'(x,a)=w(x,a)$, otherwise $w'(x,a) = -\infty$. Note that $(G,w',k)$ is an instance of \whomo{$H'$}, equivalent to the instance $(G,w,k)$ of~\whomo{$H$}.

We prove Theorem \autoref{thm-mainhom} (b) for the graph (a) in Section \ref{sec-mc}, for (b) in Section \ref{sec-oct}, and for all the remaining cases in Section \ref{sec-c}.
Note that the problem  of finding a homomorphism to $K_4$ (the graph (g)) is exactly {\sc $4$-Coloring}. It is known that assuming the ETH, this problem does not admit a subexponential algorithm, even for \kdir{2} graphs~\cite{DBLP:conf/wg/BonnetR18}.

\subsubsection{Maximum Cut.}\label{sec-mc}
In this section, $H$ is the graph (a) from Fig. \ref{fig-themagnificentseven}. Note that any function $h \colon V(G) \to V(H)$ is a homomorphism and thus determining the existence of a homomorphism with just vertex lists and weights is trivial. However, it becomes more interesting if we include the edge weights.

We denote the vertices of $H$ by $a$ and $b$, so we have $E(H)=\{aa,ab,bb\}$ (see Fig.~\ref{fig-mch}). We also define the weight function as follows. Let $w(v,u)=0$ for every $(v,u) \in V(G) \times V(H)$, and for every $e \in E(G)$ we set $w(e, aa) = w(e,bb)=0$ and $w(e,ab)=1$. Note that the value of $w(e,f)$ does not depend on $e$, so $w$ is in fact an edge-weighting of $H$ (see Fig.~\ref{fig-mch}). 

\begin{figure}[h]
\centering\begin{tikzpicture}[scale=1, every node/.style={draw,circle,fill=white,inner sep=0pt,minimum size=5pt}]
\draw[line width=1] (0,0) -- node[above,draw=none] {1}  (2,0);
\node[label = {below:$a$}] at (0,0) {} edge [line width=1,in=60,out=120,loop, label = {[yshift = 0.4cm]:0}] ();
\node[label = {below:$b$}] at (2,0) {} edge [line width=1,in=60,out=120,loop, label = {[yshift = 0.4cm]:0}] ();
\end{tikzpicture}
\caption{Graph $H$ with its corresponding weights defined by $w$}
\label{fig-mch}
\end{figure}
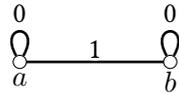

Observe that the weight of a homomorphism  $h \colon G \to H$ equals the number of edges mapped to $ab$. Thus finding a homomorphism of maximum weight is equivalent to partitioning the $V(G)$ into two subsets, so that the number of edges crossing this partition is maximized. Such a set of edges is called a {\em cut} in $G$ and the computational problem of finding the maximum cut is denoted by \maxcut. 

Thus for our result, it is enough to show the hardness of \maxcut on segment graphs.

\begin{restatable}{theorem}{mchomo} \label{thm-mchomo}
There is no algorithm solving \maxcut for segment graphs $G$ on $n$ vertices in the time $2^{o(n)}$, unless the ETH fails.
\end{restatable}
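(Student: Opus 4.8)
The plan is to reduce from \maxcut on general graphs, which is known not to admit a $2^{o(n)}$ algorithm under the ETH (since \maxcut on graphs with $m$ edges has no $2^{o(m)}$ algorithm, and on sparse instances $m=O(n)$). The core task is therefore to take an arbitrary graph $G_0$ on $n_0$ vertices and $m_0=O(n_0)$ edges, and produce a segment graph $G$ on $N=O(n_0)$ vertices together with a correspondence that turns a maximum cut of $G_0$ into a maximum cut of $G$ and back. Following the pattern announced in the introduction, the construction will start from a grid-like arrangement of segments whose intersection graph is a clique (or a controlled variant), and then attach constant-size gadgets that encode the adjacency structure of $G_0$.

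First I would fix a convenient segment representation of an edge of $G_0$: for each vertex $v$ of $G_0$ build a long ``spine'' segment, route these spines in a brick-wall / grid fashion so that two spines can be made to cross exactly where the corresponding vertices are adjacent in $G_0$, and use short auxiliary segments as crossing gadgets. The key design constraint is that in an optimum cut of $G$, each spine segment together with its local gadgetry behaves as a single ``super-vertex'' that is assigned one of two sides of the cut, consistently along its whole length; this is enforced by padding the spine into a bundle of many parallel copies joined so that splitting the bundle across the cut is strictly suboptimal (a standard trick: replacing a vertex by a large clique or a large set of forced-together vertices makes cutting inside the gadget never beneficial). The crossing gadget at an edge $uv$ of $G_0$ should contribute exactly one extra unit to the cut precisely when the super-vertices of $u$ and $v$ lie on opposite sides, and contribute the same fixed amount otherwise, so that $\mathrm{maxcut}(G) = \mathrm{maxcut}(G_0) + C$ for a constant $C$ depending only on the gadget sizes and $m_0$.

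The main obstacle, and where most of the work lies, is the geometric realizability together with the ``all-or-nothing'' analysis of the gadgets: I must check (i) that the whole arrangement is drawable with straight-line segments (only finitely many slopes, bounded-size grid, no unintended crossings), which is exactly why segments rather than fat objects are needed, and (ii) that in every optimal (or near-optimal) solution the padding forces each bundle to be monochromatic, so that the residual optimization is genuinely \maxcut on $G_0$. Handling (ii) rigorously means bounding the loss incurred by any non-monochromatic bundle against the gain it could possibly provide, uniformly over all cuts — a counting argument comparing intra-bundle cut edges to inter-bundle cut edges. Once monochromaticity is established, the correspondence between cuts of $G$ and cuts of $G_0$ is immediate, and since $N=O(n_0)$ and $m_0=O(n_0)$, a $2^{o(N)}$ algorithm for \maxcut on segment graphs would yield a $2^{o(n_0)}$ algorithm for \maxcut on $G_0$, contradicting the ETH. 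I would also note in passing that the graph $G$ produced is $P_t$-free for a fixed $t$, which is what feeds into \autoref{thm-mainptfree}.
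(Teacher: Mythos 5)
Your high-level plan (grid of per-vertex spines plus constant-size crossing gadgets, linear size, ETH transfer) matches the spirit of the paper, but it has a genuine gap at exactly the point you flag as ``where most of the work lies''. You assume the spines can be routed so that two spines cross precisely when the corresponding vertices of $G_0$ are adjacent. That amounts to exhibiting a segment representation of $G_0$ itself, which is impossible in general: the ETH-hard instances of \maxcut are arbitrary bounded-degree graphs, and such graphs need not be segment graphs (nor even string graphs), so ``no unintended crossings'' cannot be arranged. The whole difficulty that the paper's construction is built to overcome is that in a grid-like arrangement \emph{every} pair of long segments crosses; the paper therefore lets the spine sets $X$ and $Y$ form a clique-plus-biclique structure and distinguishes edges from non-edges of $G_0$ only through the constant-size gadgets placed at the (unavoidable) crossing points.

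Once you accept that all spines pairwise cross, your accounting $\mathrm{maxcut}(G)=\mathrm{maxcut}(G_0)+C$ with a fixed constant $C$ breaks down: the spine--spine edges contribute roughly $p(2n-p)$ to any cut, where $p$ is the number of spine vertices on one side, so the contribution of the dense part depends on how balanced the induced partition of $V(G_0)$ is, and this $\Theta(n)$-scale distortion cannot be absorbed by $O(1)$-size unweighted gadgets. The paper resolves this with an extra reduction step you have no analogue of: Lemma~\ref{lem:bisection} shows \maxcut remains ETH-hard on instances in which every maximum cut is a bisection, which pins the dense part's contribution at exactly $n^2$ and makes the target value $n^2+32n+4m+2k$ well defined. (Your bundle-padding idea addresses a different issue, consistency along a spine, which the paper instead handles with the vertex gadgets $D_i$ and the ``pretty homomorphism'' recoloring argument.) Without (i) a mechanism for the non-edge crossings and (ii) a bisection-type normalization or weighted edges, the reduction as proposed does not go through.
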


To prove Theorem \ref{thm-mchomo}, we present a sequence of linear reductions, starting from a well-known problem \pnaesat. In \pnaesat, for a given set of variables $U$ and clauses $C$, such that every clause contains at most 3 variables and all of them are non-negated, we ask if there exists a truth assignment $f:U \rightarrow \{0,1\}$ such that for every clause $c \in C$ we have that $f(c)=\{0,1\}$. It is known that, if we assume the ETH, there is no algorithm to solve \pnaesat on $n$ variables in time $2^{o(n)}$, even if each variable occurs at most 3 times. We start with the following lemma.
\begin{lemma}
There is no algorithm solving \maxcut in time $2^{o(n)}$ for graphs $G$ on $n$ vertices, even if $\Delta(G)\leq 6$, unless the ETH fails.
\label{lem-mc}
\end{lemma}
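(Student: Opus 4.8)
\medskip

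The plan is to give a linear (in the number of variables) reduction from \pnaesat{} with each variable occurring at most $3$ times to \maxcut{} on graphs of maximum degree at most $6$. First I would recall the standard correspondence between \pnaesat{} and \maxcut{}: given an instance $\varphi$ of \pnaesat{} with variable set $U$ and clause set $C$, I build a graph $G_\varphi$ as follows. For each variable $x \in U$ introduce a vertex $v_x$; for each clause $c = \{x,y,z\}$ (or a $2$-clause) introduce a clause gadget. The key point is that a truth assignment $f \colon U \to \{0,1\}$ corresponds to a bipartition $(A,B)$ of $V(G_\varphi)$ via ``$v_x \in A \iff f(x)=1$'', and we must arrange the gadgets so that the size of the cut is maximized exactly when every clause is ``not-all-equal'' satisfied. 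For a $3$-clause $c=\{x,y,z\}$ a natural gadget is a triangle on the three literal-copies together with edges joining each literal-copy to the corresponding variable vertex; a triangle contributes at most $2$ to any cut, with equality iff its three vertices are not all on the same side, which is precisely the NAE condition. One then sets up the accounting so that the maximum cut value is some explicit number $K$ (depending only on $|U|$ and $|C|$) achievable iff $\varphi$ is a yes-instance.

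\medskip

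The steps, in order, would be: (1) describe the gadget precisely and fix the target cut value $K$; (2) prove completeness — given a NAE-satisfying assignment, exhibit a bipartition of value $\ge K$, by putting variable vertices according to $f$ and then, inside each clause gadget, choosing the optimal local bipartition consistent with the (already fixed, and by NAE non-constant) placement of the literal endpoints; (3) prove soundness — given a bipartition of value $\ge K$, argue first that without loss of generality (by a local exchange argument) the cut is ``locally optimal'' inside every gadget, which forces the three literal-copies of each clause to be non-constant, and hence the induced assignment on $U$ is NAE-satisfying; (4) check the two size/structure conditions: $|V(G_\varphi)| = O(|U| + |C|) = O(|U|)$ using that each variable occurs at most $3$ times (so $|C| = O(|U|)$), and $\Delta(G_\varphi) \le 6$ — the variable vertex $v_x$ has degree at most $3$ (one edge per occurrence), and each literal-copy has degree $3$ inside its triangle-plus-link, while if a literal-copy needs a constant number of extra edges (e.g. to pin it or to balance parity) one must verify the bound $6$ is not exceeded. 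Finally invoke the ETH-hardness of \pnaesat{} with variables occurring $\le 3$ times, quoted in the excerpt, to conclude: a $2^{o(n)}$ algorithm for \maxcut{} on bounded-degree graphs would give a $2^{o(|U|)}$ algorithm for \pnaesat{}, contradicting the ETH.

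\medskip

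The main obstacle I expect is item (3) together with the degree bound in (4): one has to choose the gadget so that (a) \emph{every} clause-respecting bipartition, not just the optimal one, behaves well enough that a simple exchange argument upgrades an arbitrary large cut to one that is locally optimal at every gadget (otherwise soundness fails), and simultaneously (b) the literal-copies and variable vertices stay within degree $6$. These two desiderata pull in opposite directions: making the gadget ``rigid'' enough for the soundness argument tempts one to add more edges (or higher-multiplicity connections, or equalizer paths to force all copies of a variable to agree), which threatens the degree bound. The resolution is the usual one for such reductions — use paths/triangles of bounded length to propagate equality between the multiple occurrences of a variable, spending only $O(1)$ additional degree per vertex, and carefully do the bookkeeping so the constant comes out to $6$. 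Everything else (the arithmetic defining $K$, the completeness direction, the linear size bound) is routine once the gadget is fixed.
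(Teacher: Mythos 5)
Your overall route is the same as the paper's: a linear-size reduction from \pnaesat (each variable occurring at most three times) followed by the quoted ETH lower bound for that problem. The only substantive difference is the clause gadget: you use a triangle on three literal-copies, each tied to its variable vertex by a single link edge, whereas the paper threads the three variable vertices themselves onto a $9$-cycle (each occurrence then costs the variable vertex degree $2$, giving $\Delta\le 6$; your version would even give $\Delta\le 3$, and the extra ``pinning'' edges and equalizer paths you worry about are unnecessary, since one variable vertex per variable with one link per occurrence already respects the degree bound). Your gadget is workable with the right target value: with the three link edges cut, a gadget contributes $5$ exactly when its variable vertices are not all on one side, so $K=m_2+5m_3$ plays the role of the paper's $m_2+8m_3$ (and, as in the paper, $2$-clauses should just be single edges and $1$-clauses rejected outright).

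There is, however, a genuine slip in your soundness step (3). Local optimality inside a gadget does force the three \emph{literal-copies} to be non-constant, but that does not imply the three \emph{variable vertices} are non-constant, so the inference ``copies non-constant, hence the induced assignment is NAE-satisfying'' is false. Concretely, if $v_x,v_y,v_z$ all lie in $A$, the locally optimal internal placement puts two copies in $B$ and one in $A$: the copies are non-constant, yet the clause is violated and the gadget contributes only $4$ (two links plus two triangle edges). The argument you need instead is a per-gadget counting bound: every gadget contributes at most $5$, and at most $4$ whenever its variable vertices are monochromatic, so any cut of total size at least $m_2+5m_3$ forces every clause to be not-all-equal. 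This is precisely the style of argument the paper makes with its $9$-cycle (an odd cycle puts at most $8$ of its $9$ edges in any cut, and $8$ is attainable only if the variable vertices are split between the sides). With that correction the completeness direction, the size bound $|V(G_\varphi)|=O(|U|)$ from $|C|\le 3|U|$, and the appeal to the ETH-hardness of \pnaesat all go through as you describe.
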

\begin{proof}
Let $\Phi=(U, C)$ be a \pnaesat formula such that $U=\{u_1,\dots, u_n\}$ and $C=\{c_1,\dots,c_m\}$. We can assume that all clauses are distinct and each clause $c_i$ contains 2 or 3 variables, as if $|c_i|=1$, then clearly the satisfying assignment does not exist. Let $m_2$ and $m_3$, respectively, be the number of clauses of each cardinality.
%Denote by $u_1^i, u_2^i$ and possibly $u_3^i$ the variables of the clause $c_i$.

For $\Phi$, we construct a graph $G$, in which $\Delta(G)\leq 6$ and $|V(G)|=O(n)$. For every variable $u_i$ we create a {\em variable vertex} $x_i$.
For every 2-element clause $c = (u_i,u_j)$, we add an edge $x_ix_j$. For every 3-element clause $c = (u_i,u_j,u_k)$, we add a gadget containing six new vertices $l_1, r_1,l_2, r_2,l_3, r_3$ and nine new edges in a way that $(l_1,x_i, r_1,l_2,x_j, r_2,l_3, x_k,r_3)$ induce a 9-cycle (see Figure \ref{fig-clauses}~(a)). 

Let us now show that $\Phi$ is satisfiable if and only if $G$ has a cut of size at least $m_2+8m_3$.
First, assume that $\Phi$ is satisfiable. Let $f$ be a satisfying assignment and let us define the cut $(A,B)$ of $G$. For every $i$, we set $x_i \in A$ iff $f(u_i)=0$. Since $f$ is a satisfying assignment, all edges coming from 2-element clauses are in the cut. If $v$ is not a variable vertex, observe that it is adjacent to exactly one variable vertex $x_i$, and we set $v \in A$ iff $f(u_i)=1$. This way, since $f$ is satisfying, in each clause gadget we have eight edges in the cut, which gives $m_2+8m_3$ in total (see \autoref{fig-clauses} (b)).

Now let $(A,B)$ be a cut in $G$ of size at least $m_2+8m_3$. This means it has size exactly $m_2+8m_3$, as $|E(G)|=m_2+9m_3$ and there are $m_3$ edge-disjoint cycles $C_9$ in $G$. Moreover, exactly 8 edges from each clause gadget belong to the cut, so exactly two adjacent vertices are in the same part of cut. This gives us that for every clause at least one of its variable vertices is in  $A$ and at least one is in $B$ (see Figure \ref{fig-clauses}~(b)). Also, as the cut has size $m_2+8m_3$, all edges from 2-element clause gadgets must belong to the cut as well. So the truth assignment $f:U\rightarrow \{0,1\}$ such that $f(u_i)=0$ iff $x_i \in A$ satisfies  $\Phi$.

\begin{figure}[h]
\begin{subfigure}[b]{0.5\textwidth}
\centering\begin{tikzpicture}[scale=0.6,every node/.style={draw,circle,fill=white,inner sep=0pt,minimum size=5pt}]
\node[draw=none] at (-1,-1) {a)};
\node (a) at (0,1.5) {};
\node[draw=none,fill=none] at (-0.4,1.5) {\footnotesize{$x_i$}};
\node (b) at (3,1.5) {};
\node[draw=none,fill=none] at (2.6,1.6) {\footnotesize{$x_j$}};
\node (c) at (6,1.5) {};
\node[draw=none,fill=none] at (6.5,1.5) {\footnotesize{$x_k$}};
\node (la) at (0,-1.5) {};
\node[draw=none,fill=none] at (-0.3,-1.5) {\scriptsize{$l_1$}};
\node (ra) at (0.6,0) {};
\node[draw=none,fill=none] at (0.8,-0.4) {\scriptsize{$r_1$}};
\node (lb) at (2.2,0) {};
\node[draw=none,fill=none] at (2,-0.4) {\scriptsize{$l_2$}};
\node (rb) at (3.8,0) {};
\node[draw=none,fill=none] at (3.9,-0.4) {\scriptsize{$r_2$}};
\node (lc) at (5.4,0) {};
\node[draw=none,fill=none] at (5.1,-0.4) {\scriptsize{$l_3$}};
\node (rc) at (6,-1.5) {};
\node[draw=none,fill=none] at (6.5,-1.5) {\scriptsize{$r_3$}};
\draw (la) -- (a);
\draw (a) -- (ra);
\draw (ra) -- (lb);
\draw (lb) -- (b);
\draw (b) -- (rb);
\draw (rb) -- (lc);
\draw (lc) -- (c);
\draw (c) -- (rc);
\draw (rc) -- (la);
\end{tikzpicture}
\end{subfigure}%
\begin{subfigure}[b]{0.5\textwidth}
\centering\begin{tikzpicture}[scale=0.6,every node/.style={draw,circle,fill=white,inner sep=0pt,minimum size=5pt}]
\node[draw=none] at (-1,-1) {b)};
\node[fill=black] (a) at (0,1.5) {};
\node[draw=none,fill=none] at (-0.4,1.5) {\footnotesize{$x_i$}};
\node[fill=yellow] (b) at (3,1.5) {};
\node[draw=none,fill=none] at (2.6,1.6) {\footnotesize{$x_j$}};
\node[fill=yellow] (c) at (6,1.5) {};
\node[draw=none,fill=none] at (6.5,1.5) {\footnotesize{$x_k$}};
\node[fill=yellow] (la) at (0,-1.5) {};
\node[draw=none,fill=none] at (-0.3,-1.5) {\scriptsize{$l_1$}};
\node[fill=yellow] (ra) at (0.6,0) {};
\node[draw=none,fill=none] at (0.8,-0.4) {\scriptsize{$r_1$}};
\node[fill=black] (lb) at (2.2,0) {};
\node[draw=none,fill=none] at (2,-0.4) {\scriptsize{$l_2$}};
\node[fill=black] (rb) at (3.8,0) {};
\node[draw=none,fill=none] at (3.9,-0.4) {\scriptsize{$r_2$}};
\node[fill=black] (lc) at (5.4,0) {};
\node[draw=none,fill=none] at (5.1,-0.4) {\scriptsize{$l_3$}};
\node[fill=black] (rc) at (6,-1.5) {};
\node[draw=none,fill=none] at (6.5,-1.5) {\scriptsize{$r_3$}};
\draw (la) -- (a);
\draw (a) -- (ra);
\draw (ra) -- (lb);
\draw (lb) -- (b);
\draw (b) -- (rb);
\draw[style=dashed]  (rb) -- (lc);
\draw (lc) -- (c);
\draw (c) -- (rc);
\draw (rc) -- (la);
\end{tikzpicture}
\end{subfigure}
\caption{(a) A construction of gadget for a clause $c = (u_i,u_j,u_k)$ and (b) a maximum cut in this gadget.}
\label{fig-clauses}
\end{figure}
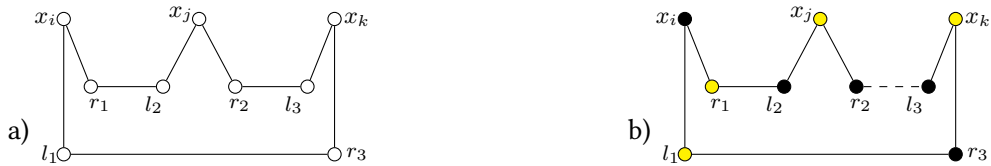
Finally, we observe that as each variable occurred in at most 3 clauses, we have $m\leq 3n$, so $|V(G)|\leq n+6m=O(n)$, which completes the proof.
\end{proof}

We say that a cut $(A,B)$ of $G$ is a \emph{bisection}, if $|A|=|B|$. We will use the following lemma, which says that \maxcut remains hard for graphs $G$ of bounded degree, even if every maximum cut in $G$ is a bisection. 

\begin{lemma} \label{lem:bisection}
Assuming the ETH, there is no algorithm solving \maxcut in time $2^{o(n)}$ for a graph $G$ on $n$ vertices, even if $\Delta(G)\leq 7$ and every maximum cut in $G$ is a bisection.
\end{lemma}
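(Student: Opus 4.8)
\textbf{Proof plan for \autoref{lem:bisection}.}

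The plan is to reduce from the version of \maxcut established in \autoref{lem-mc}, where the input graph $G$ has $\Delta(G)\leq 6$ and $|V(G)|=O(n)$, and to append a gadget that forces every maximum cut to be balanced without changing the difficulty of the instance. First I would recall the well-known ``padding'' trick: given $G$ with $N$ vertices, build $G'$ by adding a disjoint collection of cliques (or a single large carefully-sized bipartite-friendly gadget) whose own optimal cut is uniquely bisecting, and whose size dominates any imbalance that $G$ could contribute. Concretely, the cleanest choice is to add two large disjoint cliques $K$ and $K'$ of equal size $p$ (with $p$ chosen as a suitable multiple of $N$, say $p = \Theta(N)$ so that $|V(G')| = O(N) = O(n)$ and the ETH lower bound is preserved), together with a perfect matching between $K$ and $K'$, or alternatively a single clique on an even number of vertices. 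The key arithmetic fact is that $\maxcut(K_{2q}) = q^2$ and that \emph{every} maximum cut of $K_{2q}$ splits its vertex set into two halves of size exactly $q$; moreover moving a single vertex across the partition of $K_{2q}$ costs at least $1$ edge, while it can gain at most $\deg_G \le 6$ edges on the $G$-side. Choosing $q$ larger than $6N$ then guarantees that in any maximum cut of $G'$ the clique part is split exactly in half.

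The steps, in order, would be: (1) State the construction $G' = G \cup K_{2q}$ (disjoint union) with $q$ an explicit function of $N=|V(G)|$, large enough (the bound $q \ge 7N$ comfortably suffices given $\Delta(G)\le 6$), and also chosen so that $2q + N$ is even and $q = |V(G)| + r$ for an appropriate correction term $r$ that makes the two sides of the global cut equinumerous. (2) Show $\maxcut(G') = \maxcut(G) + q^2$: the ``$\ge$'' direction is immediate by taking an optimal cut of $G$ and an optimal (balanced) cut of $K_{2q}$; the ``$\le$'' direction uses that the two components are disjoint so cuts add. (3) Show that in \emph{any} maximum cut $(A,B)$ of $G'$, the restriction to $V(K_{2q})$ is a perfectly balanced cut of $K_{2q}$: if it were unbalanced, say with parts of size $q+s$ and $q-s$ for $s\ge 1$, then the clique contributes at most $q^2 - s^2 \le q^2 - 1$ edges, and the whole $G$-side contributes at most $|E(G)| \le 3N < q$ edges... wait, that bound is not tight enough on its own, so instead I would argue locally: the cut restricted to $K_{2q}$ must itself be a maximum cut of $K_{2q}$ (else re-balancing the clique strictly improves the global cut, since the clique and $G$ share no edges), hence it is balanced. (4) Finally, arrange the size bookkeeping so that "every maximum cut of $G'$ is a bisection of $V(G')$": since the $K_{2q}$-part always contributes exactly $q$ vertices to each side, and $G$ contributes between $0$ and $N$ to each side, I need the \emph{specific} $G$ we feed in to already have the property that all its maximum cuts have a fixed imbalance, OR — the simpler route — I would instead add \emph{two} cliques of sizes chosen so that the freedom on the $G$-side can always be absorbed; the slickest version is to take the clique $K_{t}$ on $t$ vertices where $t$ is chosen so that $t + N$ is even and then note that a maximum cut of $K_t$ for $t$ even is balanced, for $t$ odd is off by one, and tune $t$'s parity against $N$'s parity. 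Since any maximum cut of $G$ can be completed to a global cut and the global cut's balance is then determined only by the $G$-part, I actually want the \emph{opposite}: I should pad so that the imbalance contributed by the clique can cancel \emph{any} imbalance in $G$. That is achieved by adding, for each vertex of $G$, a private pendant structure — e.g. replace the single clique by a clique on $\Theta(N)$ vertices together with $N$ extra ``free'' vertices forming an independent set joined appropriately — so that the optimal cut can freely choose where these free vertices go to equalize the two sides at no cost to optimality.

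The step I expect to be the main obstacle is exactly step (4): making sure that forcing balance of the \emph{auxiliary} part translates into balance of the \emph{entire} vertex set, rather than just of the auxiliary part. The clean way around it, which I would write up, is: add a large clique $Q$ of even order $2q$ with $q \ge 7|V(G)|$ (guaranteeing its part is always split $q$/$q$), and separately add an independent set $I$ of exactly $|V(G)|$ vertices that are pairwise nonadjacent and nonadjacent to everything else; then any cut can place vertices of $I$ freely, and since $|I|=|V(G)|$ and $|V(Q)|=2q$, one can always choose the placement of $I$ to make the global cut a bisection, so the maximum cut value is unchanged and \emph{some} maximum cut is a bisection — but to get that \emph{every} maximum cut is a bisection I instead make $I$ adjacent to nothing and argue that among maximum cuts there is always a balanced one and that the construction/normalization lets us assume it; if the lemma statement genuinely needs \emph{every} maximum cut balanced, I would instead use the trick of adding a clique $K_{2q}$ and then, separately, a gadget that strictly penalizes imbalance by $\Theta(1)$ per unit, e.g. a long path or a disjoint $K_{2q'}$ attached so that unbalancing the global cut necessarily unbalances a clique and loses edges. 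In all cases $|V(G')| = O(|V(G)|) = O(n)$ and $\Delta(G') \le \max(\Delta(G), 2q-1) = 2q-1$, but note $2q-1$ is \emph{not} bounded by $7$; to hit the claimed bound $\Delta(G')\le 7$ I must replace the single large clique by a \emph{disjoint union of many small bounded-degree gadgets} (e.g. many copies of a fixed small graph such as $K_4$ or $C_5$ whose unique maximum cuts are balanced) — taking $\Theta(N)$ disjoint copies of $C_5$ (each with $\maxcut=4$, every max cut splitting it $2$/$3$) or of $K_4$ ($\maxcut = 4$, every max cut $2$/$2$), we get $\Delta \le 6$ from those copies, at most $6$ from $G$, hence $\Delta(G') \le 7$ once the small overlap at shared... actually the copies are disjoint from $G$, so $\Delta(G')=\max(6, \Delta_{\text{gadget}})$; picking the gadget to be $K_4$ gives degree $3$, well under $7$, and the parity bookkeeping (choosing how many copies of which gadget) handles the bisection requirement. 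Thus the real work is purely the counting argument that a maximum cut of the disjoint union is maximum on each gadget and that the number of gadgets can be tuned so the forced per-gadget balance sums to a global bisection; the degree bound $7$ is then slack.
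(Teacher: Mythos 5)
There is a genuine gap, and it is exactly at the step you yourself flagged as the main obstacle (your step (4)): none of the constructions you finally settle on actually forces \emph{every} maximum cut to be a bisection. All of your concrete gadgets (a disjoint $K_{2q}$, an isolated independent set, or $\Theta(N)$ disjoint copies of $K_4$ or $C_5$) are vertex-disjoint from $G$ and share no edges with it. For a disjoint union, a maximum cut is just a maximum cut on each component independently, so the restriction of any maximum cut of $G'$ to $V(G)$ is an arbitrary maximum cut of $G$ --- which may be badly unbalanced, since nothing in \autoref{lem-mc} controls the balance of the optimal cuts of $G$. Balanced gadgets ($K_4$, $K_{2q}$) then leave the global imbalance equal to $G$'s imbalance, and flippable gadgets ($C_5$, split $2/3$ or $3/2$ at will) make things worse by admitting maximum cuts of essentially any imbalance. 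Parity bookkeeping on the number of gadget copies cannot repair this: to enforce the universally quantified property you need a gadget whose edges touch $V(G)$ and strictly penalize imbalance, and your only gestures in that direction (a matching between two cliques, ``attached so that unbalancing the global cut necessarily unbalances a clique'') are never made precise and would in any case blow past the degree bound if realized with large cliques.

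The paper's proof closes this gap with a much lighter attachment: add to each vertex $v$ of $G$ a private pendant neighbor $v'$. Then $|V(F)|=2n$ and $\Delta(F)\leq 7$. If a cut of $F$ is unbalanced, by pigeonhole some pair $\{v,v'\}$ lies entirely on the larger side, and moving the degree-one vertex $v'$ across strictly increases the cut; hence every maximum cut of $F$ is a bisection. The value transfer is immediate: $G$ has a cut of size at least $k$ iff $F$ has a (bisecting) cut of size at least $n+k$, since each pendant edge can always be cut and at most $n$ edges of $F$ lie outside $E(G)$. If you want to salvage your write-up, replace the disjoint padding entirely by this pendant construction; the rest of your reduction outline (linear size, ETH transfer from \autoref{lem-mc}) then goes through unchanged.
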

\begin{proof}
Let $(G,k)$ be an instance of \maxcut, such that $\Delta(G)\leq 6$ and $|V(G)|=n$. Construct a graph $F$ as follows: take a copy of $G$, denoted by $G'$, and for each its vertex $v$ add a vertex $v'$ and an edge $vv'$. Clearly, $|V(F)|=2n$ and $\Delta(F)\leq 7$.

First, we observe that if $(A_F,B_F)$ is a maximum cut of $F$, then it is a bisection. Assume the opposite, i.e., $(A_F,B_F)$ is a maximum cut such that $|A_F|<|B_F|$. As $|V(G')|=|V(F)\setminus V(G')|$, there exists $v \in V(G')$, such that $v, v' \in B_F$. Note that $(A_F \cup \{v'\}, B_F \setminus \{v'\})$ is a cut of larger size, a contradiction.

Now let us show that graph $G$ has a cut of size at least $k$ if and only if there exists a bisection in $F$ of size at least $n+k$. Assume that $(A,B)$ is a cut in $G$ of size at least $k$. Let $(A_F,B_F)$ be a cut in $F$ such that every vertex $v$ of $G'$ is in $A_F$ iff $v \in A$, and every vertex $v' \in V(F)\setminus V(G')$ is in $A_F$ iff its neighbor is in $B$.
Note that $(A_F,B_F)$ is a bisection, because $|A_F|=|B_F|=|A|+|B|$. There are at least $k$ edges between $A_F\cap V(G')$ and $B_F\cap V(G')$ and another $n$ edges between $V(G')$ and $V(F)\setminus V(G')$, so $(A_F,B_F)$ is a bisection of $F$ of size at least $n+k$.
For the converse, assume that $(A_F,B_F)$ is a bisection in $F$ of size at least $n+k$. Observe that at most $n$ edges in $F$ do not belong to $E(G')$, which means $(A_F \cap V(G'), B_F \cap V(G'))$ is a cut of $G$ of size at least $k$. 

This, combined with Lemma \ref{lem-mc}, completes the proof.
\end{proof}

Finally, we are ready to prove \autoref{thm-mchomo}.

\mchomo*

\begin{proof}
Let $(G,k)$ be a given instance of \maxbis, such that $\Delta(G)\leq 7$ and every maximum cut in $G$ is a bisection. Let $V(G)=\{v_1, \dots, v_n\}$ and $m:=|E(G)|$. We provide an instance $G^*$ of \maxcut, where $G^*$ is a segment graph, such that $G$ has a bisection of size at least $k$ if and only if $G^*$ has a cut of size at least \siz, which equivalently means that there exists a homomorphism from $G$ to $H$ of weight at least \siz.

We start constructing $G^*$ with two sets $X=\{x_1, \dots, x_n\}$ and $Y=\{y_1, \dots, y_n\}$ of segments. The segments of each set intersect in a single point, and all segments altogether are arranged in a grid-like manner (see Figure~\ref{fig-mcgrid}). 
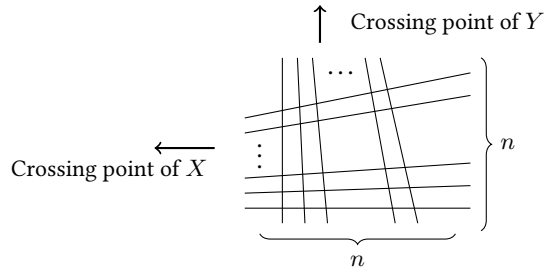
\begin{figure}[h]
\centering\begin{tikzpicture}[]
\draw (0,0) -- (3,0);
\draw (0,0.2) -- (3, 0.3); 
\draw (0,0.4) -- (3,0.6); 
\node at (0.2,0.8) {\vdots};
\draw (0,1) -- (3,1.5); 
\draw (0,1.2) -- (3,1.8); 
\draw (0.5,2) -- (0.5,-0.2);
\draw (0.7,2) -- (0.8, -0.2); 
\draw (0.9,2) -- (1.1,-0.2); 
\node at (1.3,1.8) {\ldots};
\draw (1.6,2) -- (2,-0.2); 
\draw (1.8,2) -- (2.3,-0.2);
\draw [decorate,decoration={brace,amplitude=4pt,mirror,raise=4pt},yshift=0pt]
(3,-0.3) -- (3,2) node [black,midway,xshift=0.5cm] {\footnotesize
{$n$}};
\draw [decorate,decoration={brace,amplitude=4pt,mirror,raise=4pt},yshift=0pt]
(0.2,-0.2) -- (2.8,-0.2) node [black,midway,yshift=-0.5cm] {\footnotesize
{$n$}};
\draw [thick,->] (-0.4,0.8) -- (-1.2,0.8) node [black,midway,yshift=-0.3cm,xshift=-1cm] {\footnotesize{Crossing point of $X$}};
\draw [thick,->] (1,2.2) -- (1,2.7) node [black,midway,xshift=1.7cm] {\footnotesize{Crossing point of $Y$}};
\end{tikzpicture}
\caption{First step of the construction of $G^*$.}
\label{fig-mcgrid}
\end{figure}

For each vertex $v_i$ of $G$, we add a {\em vertex gadget} on the intersection of $x_i$ and $y_i$. It contains a set $D_i$ of 16 parallel non-overlapping segments, crossing both $x_i$ and $y_i$ (see Figure~\ref{fig-mcgadgets}). For each edge $v_iv_j$ of $G$ we define an {\em edge gadget} on the intersection of $x_i$ and $y_j$ by putting two additional segments $\alpha_{ij}$ and $\beta_{ij}$, crossing each other and, respectively, $x_i$ or $y_j$. Note that each edge of $v_iv_j$ of $G$ is represented by two edge gadgets, one on the intersection of $x_i$ and $y_j$, and another one on the intersection of $x_j$ and $y_i$. Define $E_i:=\bigcup_{v_j \in N(v_i)}\{\alpha_{ij},\beta_{ij}\}$.
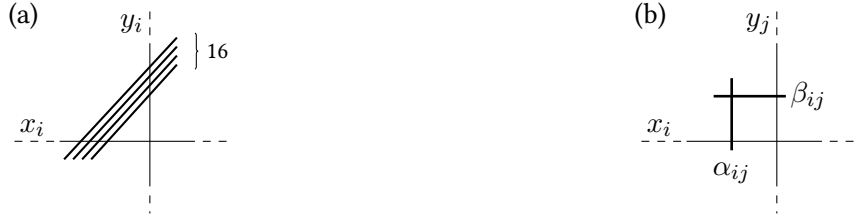
\begin{figure}[h]
\begin{subfigure}[b]{0.5\textwidth}
\centering\begin{tikzpicture}[scale=1.2]
\node at (0.1,2.9) {(a)};
\draw (1.5,1) -- (1.5,2.5);
\draw (0.5,1.5) -- (2, 1.5); 
\draw[line width=0.8] (0.85,1.3) -- (1.8,2.35); 
\draw[line width=0.8] (0.75,1.3) -- (1.8,2.45); 
\draw[line width=0.8] (0.65,1.3) -- (1.8,2.55); 
\draw[line width=0.8] (0.55,1.3) -- (1.8,2.65); 
\draw [decorate,decoration={brace,amplitude=1pt,mirror,raise=0pt}]
(2,2.3) -- (2,2.7) node [black,midway,xshift=0.3cm] {\footnotesize
{16}};
\draw[dashed] (1.5,0.9) -- (1.5,0.7);
\draw[dashed] (1.5,2.5) -- (1.5,3);
\draw[dashed] (0,1.5) -- (0.5,1.5);
\draw[dashed] (2.35,1.5) -- (2.05,1.5);
\node[fill=none, draw=none] at (.2, 1.65) {$x_i$};
\node[fill=none, draw=none] at (1.3,2.8) {$y_i$};
\end{tikzpicture}
\end{subfigure}%
\begin{subfigure}[b]{0.5\textwidth}
\centering\begin{tikzpicture}[scale=1.2]
\node at (0.1,2.9) {(b)};
\draw (1.5,1) -- (1.5,2.5);
\draw (0.5,1.5) -- (2, 1.5); 
\draw[line width=1](0.8,2) -- (1.6,2);
\draw[line width=1](1,2.2) -- (1,1.4);
\draw[dashed] (1.5,0.9) -- (1.5,0.7);
\draw[dashed] (1.5,2.5) -- (1.5,3);
\draw[dashed] (0,1.5) -- (0.5,1.5);
\draw[dashed] (2.35,1.5) -- (2.05,1.5);
\node[fill=none, draw=none] at (.2, 1.65) {$x_i$};
\node[fill=none, draw=none] at (1.3,2.8) {$y_j$};
\node[fill=none, draw=none] at (1,1.15) {$\alpha_{ij}$};
\node[fill=none, draw=none] at (1.85,2) {$\beta_{ij}$};
\end{tikzpicture}
\end{subfigure}
\caption{(a) A vertex gadget for $v_i \in V(G)$ and (b) an edge gadget for $v_iv_j \in E(G)$.}
\label{fig-mcgadgets}
\end{figure}

We say a homomorphism $h: G^* \rightarrow H$ is \emph{pretty}, if the following properties are satisfied:
\begin{compactitem}
\item[P1.] $|h(D_i)|=1$ for every $v_i \in V(G)$,
\item[P2.] $h(x_i)=h(y_i)$ and $h(x_i)\not\in h(D_i)$ for every $v_i \in V(G)$,
\item[P3.] $h(\alpha_{ij})\neq h(x_i)$ and $h(\beta_{ij})\neq h(y_j)$ for every $v_iv_j \in E(G)$.
\end{compactitem}

\begin{claimm}
If there exist a homomorphism $h:G^* \rightarrow H$ of weight $p$, then there exists a pretty homomorphism from $G^*$ to $H$ of weight at least $p$.
\label{cla-mcpretty}
\end{claimm}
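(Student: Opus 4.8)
The plan is to take an arbitrary homomorphism $h\colon G^*\to H$ of weight $p$ and locally "repair" it in three stages — first forcing property P1, then P2, then P3 — checking at each stage that the weight does not decrease and that repairs performed for one gadget do not destroy properties already secured elsewhere. The key structural fact I would use is that the gadgets are almost independent: the only segments shared between different gadgets are the ``long'' segments $x_i$ and $y_i$, so a modification confined to the private segments of one gadget ($D_i$, or $\alpha_{ij},\beta_{ij}$) cannot interfere with another gadget, and a modification of an $x_i$ or $y_i$ only affects the vertex gadget at $(x_i,y_i)$ and the edge gadgets on $x_i$ (resp. $y_i$). Recall also that on $H$ (the vertex with a loop joined to another vertex with a loop) every map is a homomorphism, so there are no edge-constraints to worry about — the only thing that matters is the weight, which counts edges of $G^*$ whose endpoints receive different colors.

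First I would establish P1. Fix $v_i$ and look at the 16 segments of $D_i$; each of them meets exactly $x_i$ and $y_i$ and nothing else. If $h(x_i)=h(y_i)$, then recoloring every segment of $D_i$ with the color \emph{different} from $h(x_i)$ puts all $32$ incident gadget-edges into the cut, which is clearly at least as good as before; this makes $|h(D_i)|=1$ and is compatible with whatever colors $x_i,y_i$ currently have. If $h(x_i)\neq h(y_i)$, then each $D$-segment contributes a crossing edge with exactly one of $x_i,y_i$ regardless of its color, so the contribution of $D_i$ is exactly $16$ no matter how we color $D_i$; hence we may freely recolor all of $D_i$ to a single color without changing the weight, again securing P1. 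Since this step only touches segments in $D_i$, doing it for all $i$ in turn secures P1 globally.

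Next, P2. With P1 in hand, consider a vertex gadget where $h(x_i)\neq h(y_i)$; say $h(D_i)=\{c\}$. I would recolor $x_i,y_i$, and $D_i$ to make $h(x_i)=h(y_i)\neq c$: this must be checked not to lose weight by comparing, for the ``cluster'' of edges touching $x_i$ or $y_i$, the old configuration with the new one — the $32$ $D$-edges jump from contributing $16$ to contributing $32$, while each edge-gadget segment on $x_i$ or $y_i$ and each grid-crossing $x_i\cap y_{i'}$ or $x_{i'}\cap y_i$ can lose at most what the $D$-edges gain; a short counting (there are $\le 7$ neighbours, hence $\le 14$ edge-gadget segments, plus $n-1$ grid crossings on $x_i$ but those are crossings with $y$-segments and the choice $h(x_i)=h(y_i)$ is symmetric, so one picks the better of the two global assignments) shows this is a net gain, and in any case one is free to also re-optimize the colors of the private edge-gadget segments afterwards. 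The part needing care is that fixing $x_i,y_i$ for P2 can change $h(x_i)$, which could disturb P1 at gadget $i$ (handled by also recoloring $D_i$, already allowed) and P3 at edge gadgets on $x_i$ (not yet established, so no harm) — so P2 is secured while preserving P1.

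Finally, P3 is the easy closing step and, I expect, the one with the least subtlety: $\alpha_{ij}$ crosses only $x_i$ (among the long segments) and $\beta_{ij}$, while $\beta_{ij}$ crosses only $y_j$ and $\alpha_{ij}$. Given the fixed colors of $x_i$ and $y_j$, choose $h(\alpha_{ij})\neq h(x_i)$ and $h(\beta_{ij})\neq h(y_j)$; this maximizes the two crossing edges $\alpha_{ij}\cap x_i$ and $\beta_{ij}\cap y_j$, and the remaining incident edge $\alpha_{ij}\cap\beta_{ij}$ can then be thrown in favourably if a choice remains (since $H$ has two colors, $h(\alpha_{ij})=h(x_i)^{c}$ and $h(\beta_{ij})=h(y_j)^{c}$ are forced, and whether $\alpha_{ij}\beta_{ij}$ is cut is then determined — but it never costs more than it would under any assignment satisfying P3, and the original $h$ could only have done better on this single edge by paying on the two crossing edges, which dominates). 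Since this only touches the private segments $\alpha_{ij},\beta_{ij}$, it preserves P1 and P2. Collecting the three stages gives a pretty homomorphism of weight $\ge p$.

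The main obstacle is the P2 step: one must verify carefully that flipping the long segments $x_i,y_i$ to a common color genuinely does not lose weight, accounting for \emph{all} edges meeting $x_i$ or $y_i$ (the $32$ $D$-edges, the $O(1)$ edge-gadget crossings, and the $n-1$ grid crossings with $y$-segments resp. $x$-segments), and the cleanest way to do this is to argue by a local exchange/averaging argument — either $h(x_i)=h(y_i)$ is already at least as good as $h(x_i)\neq h(y_i)$ after the $D$-edges are fully optimized, or we simply take whichever of the two ``all-$x$-type'' vs.\ ``all-$y$-type'' reassignments is better — rather than a brittle edge-by-edge count.
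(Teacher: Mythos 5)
Your repair strategy---first enforcing P1, then P2 by giving $x_i,y_i$ the better of the two common colours and recolouring $D_i$ to the opposite one, then P3 by the forced recolouring of $\alpha_{ij},\beta_{ij}$---is essentially identical to the paper's proof: your ``pick the better global assignment'' is exactly the paper's case distinction $|A|\ge|B|$ versus $|A|<|B|$, and the decisive inequality is the same ($16$ gained on the $D_i$-edges versus at most $1+14$ lost on the edge $x_iy_i$ and the at most $14$ edge-gadget crossings). One small correction to your description of the construction: since the segments of $X$ (and of $Y$) all pass through a common point, $x_i$ also crosses every other $x_{i'}$ as well as all of $Y$, so $x_i$ and $y_i$ each have $2n-1$ long-segment neighbours rather than $n-1$; your pairing/averaging argument applies verbatim to these extra edges, so the proof is unaffected.
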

\begin{inproof}
Consider a homomorphism $h \colon G^* \to H$. 
Note that recoloring all segments in $D_i$ to the color other than $h(x_i)$ does not decrease the weight, so we can assume that P1 holds for $h$.
So suppose that P2 does not hold and let $v_i$ be a vertex for which $h(x_i) \neq h(y_i)$, without loss of generality $h(x_i)=a$. Let $A$ ($B$, resp.) be the set of segments from $(X \cup Y) \setminus \{x_i,y_i\}$, that are mapped to $a$ ($b$, resp.) by $h$.

\noindent\textbf{Case 1:} If $|A|\geq |B|$, then consider a homomorphism $h':G^* \rightarrow H$, obtained from $h$ by recoloring $x_i$ to $b$ and all segments in $D_i$ to $a$.
We will show that $w_{h'}(E(G^*)) > w_h(E(G^*))$. Clearly, the weight can differ only on edges which have at least one endpoint in the set $D_i \cup \{x_i,y_i\}$, denote the set of these edges by $E'$. First, let us count the edges from $E'$ that were mapped to $ab$ by $h$, i.e., that contribute to $w_h(E(G^*))$.
There are $|A|+|B|+1$ such edges with both endpoints in $X \cup Y$, $16$ with one endpoint in $D_i$ and at most 14 with one endpoint in $E_i$, which gives at most $|A|+|B|+31$ in total.
Now let us count the edges from $E'$ that are mapped to $ab$ by $h'$. There are $2|A|$ such edges with both endpoints in $X \cup Y$, $32$ with one endpoint in $D_i$ and some number (possibly zero) of edges with one endpoint in $E_i$. Observe that $w_{h'}(E(G^*))-w_h(E(G^*)) \geq (2|A|+32) - (|A| + |B| + 31) > 0.$

\noindent\textbf{Case 2:} If $|A| <|B|$, then we consider a homomorphism $h':G^* \rightarrow H$, obtained from $h$ by recoloring $y_i$ to $a$ and all segments in $D_i$ to $b$. With the reasoning analogous to the one in Case 1, we show that the weight of $h'$ is at least the weight of $h$.

After at most $n$ such recolorings, we receive a homomorphism satisfying P1 and P2, so now we will assume that P1 and P2 hold in $h$. Suppose that there is $v_iv_j$ violating P3. Again, we recolor some segments to obtain a new homomorphism $h'$. 
If $h(x_i)=h(y_j)$, then we recolor $\alpha_{ij}$ and $\beta_{ij}$ to the color other than $h(y_i)$.
If $h(x_i)\neq h(y_j)$, then we color $\alpha_{ij}$ with the color $h(y_j)$, and $\beta_{ij}$ with the color $h(x_i)$. Again, observe that the weight of $h'$ is not lower than the weight of $h$ and $h'$ still satisfies P1 and P2.
Repeat this for every edge which does not satisfy P3. This way, after at most $2m$ steps, we obtain a pretty homomorphism.
\end{inproof}

Now let us show that $G$ has a bisection of size at least $k$ if and only if there exists homomorphism from $G^*$ to $H$ of weight at least \siz, i.e., a cut in $G^*$ of size at least \siz.

First assume there exists a bisection $(A_G,B_G)$ of $G$ of size at least $k$. Let us define a homomorphism $h: G^* \rightarrow H$. For every $v_i \in A_G$, we set $h(x_i)=h(y_i)=a$, and for every $v_i \in B_G$, we set $h(x_i)=h(y_i)=b$. Moreover, for every $i$, we color all segments from $D_i$ with the color other than $h(x_i)$. Finally, note that every $\alpha_{ij}$ and every $\beta_{ij}$ has exactly one neighbor in $X \cup Y$, and we color it with the color other than the color of this neighbor.

Let us count edges mapped to $ab$. Clearly, there are $n^2$ such edges with both endpoints in $X \cup Y$, and another $32n$ in vertex gadgets. For every $i,j$, such that $v_iv_j$ is in the cut $(A_G,B_G)$, we get another three edges mapped to $ab$ from the edge gadget, which gives $6k$ in total (recall that each edge is represented twice). If $v_iv_j$ is not in the cut, then only two edges from each edge gadget are mapped to $ab$, which gives $4(m-k)$ edges in total. Summing up, the weight of $h$ is \siz.

Now let $h: G^* \rightarrow H$ be a homomorphism of weight at least \siz. According to Claim \ref{cla-mcpretty}, we can assume that $h$ is pretty. Define a partition $(A_G,B_G)$ of $V(G)$ in the following way: if $h(x_i)=a$, then $v_i \in A_G$ and if $h(x_i)=b$, then $v_i \in B_G$. Let $k'$ be the number of edges between the sets $A_G$ and $B_G$.

As $h$ is pretty, we can easily count the edges mapped to $ab$. There are at most $n^2$ such edges between the vertices of $X \cup Y$, $32n$ edges with one endpoint in $\bigcup_{i \in [n]} D_i$, and $4m+2k'$ edges with at least one endpoint in $\bigcup_{i \in [n]} E_i$. This means that $h$ has weight of at most $n^2+32n+4m+2k'$. As we assumed that the weight is at least \siz, this implies that $k' \geq k$. This means that there exists a cut in $G$ of size at least $k$. As we know that each maximum cut of $G$ is a bisection, we know that there also exists a bisection of $G$ of size at least $k$.

To complete the proof, we observe that $G^*$ has $18n+4m \leq 30n=O(n)$ vertices. 
\end{proof}
\subsubsection{Minimum odd-cycle transversal.}\label{sec-oct}
Now let us consider the case when $H$ is the graph (b) in \autoref{fig-themagnificentseven}. This time we will consider a vertex-weighted variant. Denote the vertices of $H$ by $a,b,c$, where $a$ is the vertex with the loop.
All edge-weights are set to 0. For vertex weights, for every $v \in V(G)$ we set $w(v,b)=w(v,c)=1$ and $w(v,a)=0$. Note that again the weights do not depend on the choice of $v$, so we can think of $w$ as a vertex-weighting of $H$ (see Figure \ref{fig-oct}).

\begin{figure}[h]
\centering\begin{tikzpicture}[scale=1, every node/.style={draw,circle,fill=white,inner sep=1pt,minimum size=5pt}]
\draw[line width=1] (1,.5) -- (0,0);
\draw[line width=1] (1,.5) -- (0,1);
\draw[line width=1] (0,0) -- (0,1);
\node[label = {above:a}] at (1,.5) {0} edge [line width=1,in=-35,out=35,loop] ();
\node[label = {left:b}] at (0,1) {1};
\node[label = {left:c}] at (0,0) {1};
\end{tikzpicture}
\caption{Graph $H$ with its corresponding weights defined by $w$}
\label{fig-oct}
\end{figure}
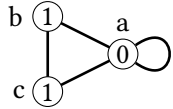

We observe that finding a homomorphism of maximum weight is equivalent to the problem of finding the maximum number of vertices of $G$ which induce a bipartite subgraph, or, equivalently, the minimum number of vertices, whose removal destroys all odd cycles. This problem is known as \oct. We will show the following.

\begin{theorem} \label{thm-octhomo}
The \oct problem in \kdir{$2$} graphs with $n$ vertices cannot be solved in time $2^{o(n)}$, unless the ETH fails.
\end{theorem}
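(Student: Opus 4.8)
The plan is to give a linear-size reduction from a problem already known to be ETH-hard with linear blow-up, routed through a geometric "grid of segments" construction of exactly the same flavour as the one used for \maxcut in the proof of \autoref{thm-mchomo}. A natural starting point is \maxcut (equivalently, \textsc{Max Bisection}) on bounded-degree graphs, which by \autoref{lem:bisection} cannot be solved in $2^{o(n)}$ time; alternatively one can go directly from \pnaesat. Given such an instance $G$ on $n$ vertices with $\Delta(G)\le 7$, I would build a \kdir{2} graph $G^*$ together with a target weight, so that $G$ has the desired cut/bisection if and only if $G^*$ admits a homomorphism to $H$ (the graph (b) of \autoref{fig-themagnificentseven}, i.e.\ a triangle with a loop on one vertex) of weight at least the target value; by the discussion preceding the theorem, this is exactly asking whether $G^*$ has a large induced bipartite subgraph, i.e.\ an \oct of small size.

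First I would lay out the same two orthogonal pencils $X=\{x_1,\dots,x_n\}$ and $Y=\{y_1,\dots,y_n\}$ of segments as in Figure~\ref{fig-mcgrid}, each pencil concurrent at one point and the two pencils arranged grid-like; since all segments in each pencil share a point, the vertices of $X$ (and of $Y$) form a clique, and $x_i$ crosses $y_j$ for all $i,j$. In a homomorphism to $H$, a clique of size $\ge 3$ must hit the looped vertex $a$; I would use vertex-weight gadgets (bundles of parallel segments through the crossing point of $x_i$ and $y_i$, analogous to the sets $D_i$) to force, in any maximum-weight colouring, a "canonical" behaviour: the gadget segments take the profitable colours $b,c$, while each pencil segment $x_i,y_i$ is pushed to colour $a$ and $h(x_i)=h(y_i)$. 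This is the exact analogue of \autoref{cla-mcpretty}: a cleaning lemma saying every homomorphism can be massaged, without loss of weight, into a "pretty" one with this structure, the weight bookkeeping being a matter of counting how many gadget segments can simultaneously avoid the loop. On top of the grid I would attach, at the crossing of $x_i$ and $y_j$ for each edge $v_iv_j$ of $G$, a small constant-size edge gadget (a few segments, realizable with only the two slopes already in use) whose role is to read off whether $v_i$ and $v_j$ got the "same side" and to contribute extra weight precisely when they land on opposite sides — so that the total attainable weight is a fixed constant plus a linear term in the size of the cut that the colouring of the pencils induces on $G$.

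The two directions are then routine. Given a large bisection $(A_G,B_G)$ of $G$, colour $x_i,y_i$ by $a$ for all $i$, colour the vertex gadget of $v_i$ greedily with $b$'s and $c$'s, and colour each edge gadget optimally according to whether $v_iv_j$ crosses the partition, recording the side information through an auxiliary two-colouring $\{b,c\}$ of the grid that encodes membership in $A_G$ vs.\ $B_G$; summing gives at least the target weight. Conversely, from a pretty homomorphism of weight at least the target, the induced partition of $V(G)$ has a cut of the required size by the same edge-counting inequality, and one invokes the fact that a maximum cut in $G$ is a bisection. Finally $|V(G^*)|=O(n+m)=O(n)$ since $\Delta(G)\le 7$, which gives the claimed $2^{o(n)}$ lower bound under ETH. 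The main obstacle I anticipate is not the geometry — the \kdir{2} realizability is inherited almost verbatim from the \maxcut construction — but the design of the edge gadget and the weight accounting: one must ensure that the loop at $a$ cannot be abused to gain weight in ways that break the correspondence (the gadgets must be "tight"), and that the cleaning lemma genuinely succeeds edge-by-edge without the recolourings interfering with one another, exactly as in the delicate Case~1/Case~2 analysis of \autoref{cla-mcpretty}.
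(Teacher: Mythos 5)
There is a genuine gap, and it stems from importing the \emph{pencil} grid of the \maxcut construction. For the target graph here (a triangle $a,b,c$ with a loop only at $a$), the vertices $b$ and $c$ are loopless, so any clique of $G^*$ can contain at most one vertex mapped to $b$ and at most one mapped to $c$; in your two concurrent pencils all but at most two segments of $X$ (and of $Y$) are therefore forced to the looped colour $a$ — as you yourself note. But $a$ is adjacent to every vertex of $H$, so an $a$-coloured grid segment imposes no constraint whatsoever on the gadgets it crosses. Consequently the grid lines cannot transmit any information: the ``side'' of $v_i$, which at best lives in the bundle $D_i$ at the crossing of $x_i$ and $y_i$, cannot propagate along $x_i$ to the edge gadget at the crossing of $x_i$ and $y_j$, and your edge gadget has nothing to ``read off''. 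The proposed fix — an ``auxiliary two-colouring $\{b,c\}$ of the grid'' encoding $A_G$ versus $B_G$ — directly contradicts the forcing of the pencils to $a$ and is impossible inside a clique. A secondary but concrete problem is the claim that \kdir{2} realizability is ``inherited almost verbatim'' from the \maxcut construction: that construction uses two pencils of concurrent segments, hence $n$ distinct slopes, and is only stated for general segment graphs, whereas \autoref{thm-octhomo} asserts hardness for \kdir{2} graphs.

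The paper avoids both issues by changing the base of the construction and the source problem. It reduces from \is (not \maxcut), and the grid consists of \emph{disjoint parallel horizontal} segments $X$ and \emph{disjoint parallel vertical} segments $Y$ — so $X$ and $Y$ are independent sets and the grid is a biclique, realizable with two directions. Membership of $v_i$ in the solution is then encoded directly on the grid lines, $h(x_i)=b$ and $h(y_i)=c$ versus $h(x_i)=h(y_i)=a$, which is now possible because nothing forces the grid segments to $a$; a \emph{single} edge segment $e_{ij}$ crossing $x_i$ and $y_j$, together with a cleaning claim analogous to \autoref{cla-mcpretty}, forbids selecting both endpoints of an edge, and a $7$-segment vertex gadget does the weight bookkeeping ($7n+2m+k$). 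If you want to salvage your plan, you would have to abandon the pencils (cliques) and let the grid lines themselves carry the $\{b,c\}$-information, at which point you essentially arrive at the paper's construction — and reducing from \is is then more natural than from \maxcut, since the problem being encoded is a vertex-deletion problem rather than an edge-cut problem.
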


\begin{proof}
This time we will reduce from the \is problem, which cannot be solved in time $2^{o(n)}$, even if the input graph has a bounded maximum degree.
Let $(G,k)$ be an instance of \is, such that $V(G)=\{v_1, \dots, v_n\}$, $|E(G)|=m = O(n)$.
We will construct a segment graph $G^*$, which admits a homomorphism to $H$ of weight at least $7n+2m+k$ if and only if $G$ has an independent set of size at least $k$.

We start the construction of $G^*$ with introducing the set $X=\{x_1, \dots, x_n\}$ of disjoint, parallel, horizontal segments and the set $Y=\{y_1, \dots, y_n\}$ of vertical segments, such that $X$ and $Y$ form a grid.
For every vertex $v_i$, on the intersection of $x_i$ and $y_i$ we add a {\em vertex gadget} shown on Figure \ref{fig-octgadgets} (a).  Let $D_i=\{d_1, \dots, d_7\}$.
For every $i,j$, such that $v_iv_j \in E(G)$, on the intersection of $x_i$ and $y_j$ we add an {\em edge gadget}  shown on Figure \ref{fig-octgadgets} (b). Note that again each edge is represented by two edge gadgets.
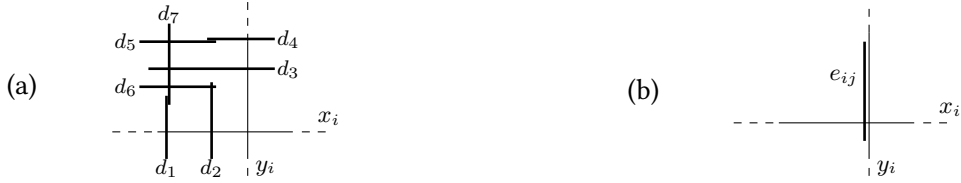
\begin{figure}[h]
\begin{subfigure}[b]{0.5\textwidth}
\centering\begin{tikzpicture}[scale=1.2]
\draw (1.5,1.2) -- (1.5,2.5);
\draw (0.5,1.5) -- (2, 1.5); 
\draw[line width=1] (1.1,2.05) -- (1.1, 1.2); 
\draw[line width=1] (.6,1.9) -- (.6, 1.2); 
\draw[line width=1] (.63,1.8) -- (.63, 2.7);
\draw[line width=1] (.3,2) -- (1.15,2); 
\draw[line width=1] (1.8,2.2) -- (.4, 2.2);
\draw[line width=1] (1.15,2.5) -- (.3, 2.5); 
\draw[line width=1] (1.05,2.53) -- (1.8, 2.53);
\draw[dashed] (1.5,1.1) -- (1.5,0.9);
\draw[dashed] (1.5,2.5) -- (1.5,3);
\draw[dashed] (0,1.5) -- (0.5,1.5);
\draw[dashed] (2.1,1.5) -- (2.4,1.5);
\node[fill=none, draw=none] at (2.4, 1.65) {\footnotesize{$x_i$}};
\node[fill=none, draw=none] at (1.7,1.1) {\footnotesize{$y_i$}};
\node[fill=none, draw=none] at (.6,1.1) {\scriptsize{$d_1$}};
\node[fill=none, draw=none] at (1.1,1.1) {\scriptsize{$d_2$}};
\node[fill=none, draw=none] at (1.95,2.2) {\scriptsize{$d_3$}};
\node[fill=none, draw=none] at (1.95,2.53) {\scriptsize{$d_4$}};
\node[fill=none, draw=none] at (0.15,2.5) {\scriptsize{$d_5$}};
\node[fill=none, draw=none] at (0.15,2) {\scriptsize{$d_6$}};
\node[fill=none, draw=none] at (.63,2.8) {\scriptsize{$d_7$}};
\node[fill=none, draw=none] at (-1,2) {(a)};
\end{tikzpicture}
\end{subfigure}%
\begin{subfigure}[b]{0.5\textwidth}
\centering\begin{tikzpicture}[scale=1.2]
\draw (1.5,1.1) -- (1.5,2.5);
\draw (0.5,1.5) -- (2, 1.5); 
\draw[line width=1] (1.45,1.3) -- (1.45,2.4); 
\draw[dashed] (1.5,1) -- (1.5,0.8);
\draw[dashed] (1.5,2.5) -- (1.5,2.8);
\draw[dashed] (0,1.5) -- (0.5,1.5);
\draw[dashed] (2.1,1.5) -- (2.4,1.5);
\node[fill=none, draw=none] at (2.4, 1.65) {\footnotesize{$x_i$}};
\node[fill=none, draw=none] at (1.7,1) {\footnotesize{$y_i$}};
\node[fill=none, draw=none] at (1.22,2) {\footnotesize{$e_{ij}$}};
\node[fill=none, draw=none] at (-1,1.85) {(b)};
\end{tikzpicture}
\end{subfigure}
\caption{(a) A vertex gadget for $v_i \in V(G)$ and (b) an edge gadget for $v_iv_j \in E(G)$.} 
\label{fig-octgadgets}
\end{figure}

We say that a homomorphism $h: G^* \rightarrow H$ is \emph{pretty}, if
\begin{compactenum}[P1.]
\item $h(e_{ij})\in\{b,c\}$ for every $v_iv_j \in E(G)$,
\item $h(x_i)=a$ iff $h(y_i)=a$ for every $v_i \in V(G)$.
\item if $h(x_i)=h(y_i)=a$, then $w_h(D_i)=7$, otherwise $w_h(D_i)=6$.
\end{compactenum}

\begin{claimm} \label{cla-octpretty}
Let $p >7n+2m$. If there exist a homomorphism $h:G^* \rightarrow H$ of weight $p$, then there exists a pretty homomorphism from $G^*$ to $H$ of weight at least $p$.
\end{claimm}
\begin{inproof}
First, observe that it is impossible that $\{b,c\} \subseteq h(X)$. Indeed, in such a case we would have $h(Y) = \{a\}$, and for every $v_i \in V(G)$ such that $h(x_i) \in \{b,c\}$, the total weight of $D_i$ is at most 6. Since the total weight of all segments in edge gadgets is at most $2m$, we obtain that $p\leq n + 6n+2m$, a contradiction.
Analogously we can show that $\{b,c\} \not\subseteq h(Y)$. Thus, by symmetry, we may assume that $h(X) \subseteq \{a,b\}$ and $h(Y) \subseteq \{a,c\}$.

Assume that P1 is not satisfied. Let $E'$ be the union of sets $\{e_{ij}, x_i, x_j, y_i, y_j\}$ over all $i,j$ violating P1, i.e., for which $e_{ij}=a$.
We will show that we can recolor some vertices from $E'$ in order to obtain a homomorphism $h'$ with weight at least $p$, satisfying P1. We will use the iterative procedure described below; if the color of some vertex is not specified, it means that it is not changed.

\noindent\textbf{Step 1.} For any $e_{ij}\in E'$ such that $h(x_i)=a$ or $h(y_j)=a$ set $h'(e_{ij})=b$ or $h'(e_{ij})=c$, respectively. Observe that $h'$ is a homomorphism and clearly $w_{h'}(\{e_{ij}\})>w_h(\{e_{ij}\})$, so the total weight is not decreased.

\noindent\textbf{Step 2.} For any $e_{ij} \in E'$ such that $h(e_{ji}) \in \{b,c\}$ observe that $h(x_j)=a$ or $h(y_i)=a$. If $h(x_j)=a$, then set $h'(y_j)=a$ and $h'(e_{ij})=c$. Otherwise, set $h'(x_i)=a$ and $h'(e_{ij})=b$. After that, go back to the Step 1. Note that $h'$ is a homomorphism and $w_{h'}(\{y_j,e_{ij},x_i,e_{ij}\})= w_h(\{y_j,e_{ij},x_i,e_{ij}\})$.

\noindent\textbf{Step 3.} If there is still some $e_{ij} \in E'$, observe that $h(x_i)=h(x_j)=b$ and $h(y_i)=h(y_j)=c$ and $h(e_{ij})=h(e_{ji})=a$. Set $h'(x_i)=h'(y_i)=a$ and $h'(e_{ij})=h'(e_{ji})=b$, then go back to the Step 1. We note that $h'$ is a homomorphism and $w_{h'}(\{x_i,y_i,e_{ij},e_{ji}\})= w_h(\{x_i,y_i,e_{ij},e_{ji}\})$.

We may change colors of vertices from $(X \cup Y) \cap E'$ more than once, but we always remove $e_{ij}$ from $E'$ and thus the procedure terminates after at most $2m$ steps. Thus from now on we may assume that P1 holds for $h$.

Now assume that P2 does not hold, without loss of generality suppose that $h(x_i)=b$ and $h(y_i)=a$. Observe that $w_h(D_i)\leq 6$. Let $h'$ be obtained from $h$ by recoloring the segment $x_i$ to the color $a$, the segments $d_1, d_3, d_5, d_6$ to the color $b$, and the segments $d_2, d_4, d_7$ to the color $c$. Clearly, $h'$ is a homomorphism and its weight is at least the weight of $h$. Moreover, if P1 holds for $h$, then it holds in $h'$ as well. After at most $n$ such recolorings we obtain a homomorphism satisfying P1 and P2.

Finally, suppose that P3 is violated for some $i$. If $h(x_i)=h(y_i)=a$, we can safely color $d_1, d_3, d_5, d_6$ to $b$ and $d_2, d_4, d_7$ to $c$. If $h(x_i)=b$ and $h(y_i)=c$, then note that at least one segment from the gadget must be colored to $a$. We can color $d_7$ with $a$ and other segments from $D_i$ with $b$ and $c$. Note that this does not decrease the weight. We repeat this step while possible, and after at most $n$ steps we obtain a pretty homomorphism.
\end{inproof}

Now let us show that $G$ has an independent set of size at least $k$ if and only if there exists a homomorphism $G^* \rightarrow H$ of weight at least $7n+2m+k$.
First, let $I$ be an independent set in $G$, such that $|I|\geq k$.
Define a mapping $h \colon V(G^*) \to \{a,b,c\}$ as follows. For every $v_i \in I$, we color $d_7$ to $a$ and $x_i,d_3,d_4,d_6$ to $b$ and $y_i,d_1,d_2,d_5$ to $c$. For every $v_i \notin I$ we set $x_i,y_i$ to $a$ and $d_1,d_3,d_5,d_6$ to $b$ and $d_2,d_4,d_7$ to $c$. Note that for every edge $v_iv_j$ we either have $h(x_i)=a$ or $h(y_j)=a$, so we can always color $e_{ij}$ with the color $b$ or $c$. Observe that the weight of such defined $h$ is at least $2k + 6k + 7(n-k) + 2m= 7n+2m+k$.

Now assume that there exists a homomorphism $h: G^* \rightarrow H$ of weight at least $7n+2m+k$. By Claim \ref{cla-octpretty} we can assume that $h$ is pretty. Let $I$ be the set of vertices $v_i$ of $G$, such that $h(x_i)=b$ and $h(y_i)=c$. If $I$ is not independent, then there exist $v_iv_j \in E(G)$, such that $h(x_i)=h(x_j)=b$ and $h(y_i)=h(y_j)=c$. But then observe it implies that $h(e_{ij})=h(e_{ji})=a$, which contradicts the fact that $h$ is pretty.
Let $k'$ be the size of $I$ and compute the weight of $h$. Since $h$ is pretty, every $e_{ij}$ contributes to the total weight. Moreover, $2k'$ vertices from $(X \cup Y)$ contribute to the weight. Finally, for every $v_i \in I$ we have $w_h(D_i) = 6$ and for every $v_i \notin I$ we have $w_h(D_i) = 7$. Summing up, the total weight of $h$ is $2m + 2k' + 6k' + 7(n-k') = 7n+2m+k'$. Since this is at least $7n+2m+k$, we conclude that $k' \geq k$ and thus $I$ is an independent set with at least $k$ vertices.

Finally, we observe that $|V(G^*)|\leq 2n+7n+2m=O(n)$, which completes the proof.
\end{proof}
\subsubsection{Other cases.}\label{sec-c}
Now we show the hardness result, when the target graph is $C_4$, i.e., the graph (c) in \autoref{fig-themagnificentseven}. 
This is actually the only place we use (almost) full expressibility of the \whomo{$H$} problem, i.e., the fact that edge weights may vary for different vertices of $G$. The graph created in the reduction will be a complete bipartite graph and most information will be encoded by the edge weights. Note that this is quite similar to the observation by Bodlaender and Jansen that \textsc{Weighted} \maxcut is NP-complete even for complete graphs~\cite{DBLP:journals/njc/BodlaenderJ00}.

\begin{theorem} \label{thm-homo-c4}
The \whomo{$C_4$} problem in \kdir{2} graphs with $n$ vertices cannot be solved in the time $2^{o(n)}$, unless the ETH fails.
\end{theorem}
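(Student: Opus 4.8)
The plan is to reduce from \maxcut on bounded-degree graphs, which has no $2^{o(n)}$ algorithm under the ETH by \autoref{lem-mc}, and to encode essentially everything in the edge weights, as the remark before the statement suggests. Given such an instance $(G,k)$ with $V(G)=\{v_1,\dots,v_n\}$ and $m$ edges, I would let $G^*$ be a plain grid: $n$ pairwise non-crossing horizontal segments $X=\{x_1,\dots,x_n\}$ and $n$ pairwise non-crossing vertical segments $Y=\{y_1,\dots,y_n\}$, positioned so that $x_i$ meets $y_j$ for every $i,j$. Then $G^*$ is a \kdir{2} graph on $2n$ vertices, with an explicit representation, and $G^*\cong K_{n,n}$. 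The key structural fact is a description of all homomorphisms $h\colon G^*\to C_4$: since $C_4=K_{2,2}$ and $G^*$ are both connected and bipartite, $h$ must send all of $X$ into one part of $C_4$ and all of $Y$ into the other, and since $C_4$ is \emph{complete} bipartite, conversely \emph{every} such assignment is a homomorphism. Thus, after fixing which part of $C_4$ receives $X$, a homomorphism is nothing but an arbitrary $2$-colouring $c\colon[n]\to\{1,2\}$ of $X$ together with an arbitrary $2$-colouring $d\colon[n]\to\{1,2\}$ of $Y$, and the weight of $h$ depends only on the pairs $\bigl(c(i),d(j)\bigr)$ over the edges $x_iy_j$.

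With that picture, I would set all vertex weights to $0$ and, with $M:=n^2$, put weight $M$ on each diagonal edge $x_iy_i$ for the two colour-pairs with $c(i)=d(i)$ and weight $0$ for the two with $c(i)\neq d(i)$; for each edge $v_iv_j\in E(G)$ put weight $1$ on \emph{both} $x_iy_j$ and $x_jy_i$ for the pairs with different colours and $0$ otherwise; and give every remaining edge of $G^*$ weight $0$. Each of these $2\times2$ tables is symmetric under simultaneously swapping the colour on the $X$-side and on the $Y$-side, and the whole construction is symmetric under the automorphism of $G^*$ exchanging $x_i$ with $y_i$; either observation shows that composing a homomorphism with the $X\leftrightarrow Y$ swap turns one of the two orientations into the other without changing the weight, so I may assume the orientation is fixed and argue about a single pair $c,d$.

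The equivalence is then immediate. From a cut $(A,B)$ of $G$ of size at least $k$, colour $x_i$ and $y_i$ by $1$ if $v_i\in A$ and by $2$ otherwise; each diagonal contributes $M$ and each cut edge contributes $2$, so the weight is at least $nM+2k$. Conversely, suppose $h$ has weight at least $nM+2k$. If some diagonal had $c(i)\neq d(i)$, then, since the only off-diagonal edges with nonzero weight are the $2m$ edges $x_iy_j,x_jy_i$ with $v_iv_j\in E(G)$ and each contributes at most $1$, the weight of $h$ would be at most $(n-1)M+2m<nM\le nM+2k$ (using $2m<M$), a contradiction; hence $c=d$. The off-diagonal weight of $h$ is then exactly twice the number of edges of $G$ between $c^{-1}(1)$ and $c^{-1}(2)$, so this partition is a cut of size at least $k$. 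Since $|V(G^*)|=2n$, a $2^{o(n)}$ algorithm for \whomo{$C_4$} on \kdir{2} graphs would solve bounded-degree \maxcut in time $2^{o(n)}$, contradicting the ETH.

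The step I expect to demand the most care is the combination of the structural description of homomorphisms to $C_4$ with the orientation argument: one must check both that a homomorphism really cannot do anything beyond independently $2$-colouring the two sides, and that the freedom in choosing the orientation cannot be exploited to exceed the intended optimum. After that, the rest is routine bookkeeping with the weights, in the spirit of Bodlaender and Jansen's reduction for weighted \maxcut on complete graphs.
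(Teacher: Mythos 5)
Your proof is correct and takes essentially the same route as the paper's: the instance is the same $2$-DIR grid inducing $K_{n,n}$ on $2n$ vertices, all of the information is carried by edge weights, and the key structural fact that homomorphisms to $C_4=K_{2,2}$ amount to independently $2$-colouring the two sides is used in the same way. The only differences are cosmetic: the paper reduces from bounded-degree \is and uses $-\infty$ edge weights as de facto edge lists, while you reduce from bounded-degree \maxcut (via \autoref{lem-mc}) and replace the lists by a large finite diagonal penalty $M$ — both give the same linear-size reduction and the same conclusion.
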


\begin{proof}
Again, we will show a reduction from \is. The proof is analogous to the proof of \autoref{thm-octhomo}, but simpler. However, this time we will make use of edge weights.

Let $a,b,c,d$ be consecutive vertices of $C_4$. We will build an instance $G^*$ of \whomo{$C_4$}.
To define vertex weights, we set $w(v,a)=w(v,b)=1$ and $w(v,c)=w(v,d)=0$ for every $v \in V(G^*)$ (see \autoref{fig-ch}).
\begin{figure}[h]
\centering\begin{tikzpicture}[scale=.7, every node/.style={draw,circle,fill=white,inner sep=0pt,minimum size=9pt}]
\draw[line width=1] (0,0) -- (2,0);
\draw[line width=1] (0,2) -- (0,0);
\draw[line width=1] (2,2) -- (2,0);
\draw[line width=1] (0,2) -- (2,2);
\node at (0,0) {\scriptsize{1}};
\node at (2,0) {\scriptsize{1}};
\node at (0,2) {\scriptsize{0}};
\node at (2,2) {\scriptsize{0}};
\node[draw=none,fill=none] at (-0.4,0) {\footnotesize{a}};
\node[draw=none,fill=none] at (-0.4,2) {\footnotesize{d}};
\node[draw=none,fill=none] at (2.4,0) {\footnotesize{b}};
\node[draw=none,fill=none] at (2.4,2) {\footnotesize{c}};
\end{tikzpicture}
\caption{Graph $C_4$ with its corresponding vertex weights defined by $w$.}
\label{fig-ch}
\end{figure}
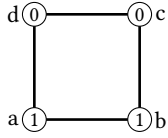

Let $(G,k)$ be an instance of \is with $V(G)=\{v_1, \dots, v_n\}$ and $|E(G)|=m=O(n)$.
Exactly like in section \ref{sec-oct}, we build a grid from two sets $X=\{x_1, \dots, x_n\}$ and $Y=\{y_1, \dots, y_n\}$, each containing $n$ disjoint parallel segments. There are no additional gadgets this time, only edge weights defined as follows. For every $i \in [n]$, we set $w(x_iy_i,e) = 0$ if $e \in \{ab,cd\}$ and $w(x_iy_i,e)=-\infty$ otherwise.
For every $i,j$, such that $v_iv_j \in E(G)$, we set $w(x_iy_j,e) = 0$ if $e \in \{cb,cd,ad\}$ and $w(x_iy_j,ab)=-\infty$.
All remaining edge weights are set to 0.

We claim that $G$ has an independent set of size at least $k$ if and only if there exists a weighted list homomorphism $h:G^*\rightarrow C_4$ of weight at least $2k$.  Note that in order to obtain a homomorphism with positive weight, no edge can obtain the weight $-\infty$. So edge weights are actually edge lists.
First, suppose there exists an independent set $I$ of size at least $k$ in $G$. We define a mapping $h$ as follows. For every $i$, such that $v_i \in I$, we set $h(x_i)=a$ and $h(y_i)=b$. For every $i$, such that $v_i \notin I$, we set $h(x_i)=c$ and $h(y_i)=d$. Note that $h$ is a homomorphism from $G^*$ to $C_4$. Moreover, if $v_i \in I$, then $h(x_iy_i)=ab$ and if $v_i \notin I$, then $h(x_iy_i)=cd$, so always $w_h(x_iy_i) \neq -\infty$. Also, note that for every edge $x_iy_j$ of $G$ we have $h(x_iy_j) \neq ab$, because $I$ is an independent set. This means that no edge gets weight $-\infty$ in $h$. Clearly the weight of $h$ is at least $2k$.

Now assume that there exists a homomorphism $h: G^* \rightarrow C_4$ with weight at least $2k$, respecting the lists. Observe that $G^*$ is bipartite, so one of its bipartition classes must be mapped to $\{a,c\}$ and another one to $\{b,d\}$. Without loss of generality assume that $h(X) \subseteq \{a,c\}$. Define a subset $I$ of the vertices of $G$ as follows: $v_i \in I$ if and only if $h(x_i)=a$. Let $k':=|I|$. We need to show that $I$ is an independent set and $k' \geq k$. 

First, observe that for every $v_i \in V(G)$ we have that $h(x_iy_i) \in \{ab,cd\}$, so $h(x_i)=a$ if and only if $h(y_i)=b$. This means there are $2k'$ vertices of $G^*$ mapped to $a$ or $b$. The weight of $h$ is at least $2k$, so $k' \geq k$. To show that $I$ is independent, assume that there exists $v_iv_j \in E(G)$ such that $\{v_i, v_j\} \subseteq I$. It means that $h(x_i)=a$ and $h(y_j)=b$, but then the weight of $ab$ in $h$ is $-\infty$, a contradiction. So $G$ has an independent set of size at least $k$.

To complete the proof, note that graph $G^*$ has $2n$ vertices.
\end{proof}

Note that Theorem \ref{thm-homo-c4} implies the hardness for all graphs $H$ that contain $C_4$ as a subgraph. It is because we can set the edge weights related to vertices and edges which do not belong to $C_4$ to $-\infty$. This completes the proof of \autoref{thm-mainhom} (b).

%%%%%%%%%%%%%%%%%%%%%%%%%%%

\section{Locally injective and locally bijective homomorphism}\label{sec:lih}
Now let us turn our attention to two other variants of the graph homomorphism problem, i.e., locally injective and locally bijective homomorphism. Recall that for a fixed $H$, the \lihomo{$H$} (\lbhomo{$H$}, resp.) problem asks if a given graph $G$ admits a homomorphism $h$ to $H$ with a restriction that for every $v \in V(G)$, the mapping $h$ is injective (bijective, resp.) on the set $N_G(v)$. Local injectivity can be equivalently seen as ``no two vertices of $G$ with a common neighbor may be mapped to the same vertex of $H$''. Moreover, every locally bijective homomorphism is also locally injective.

We show that unlike the \whomo{$H$}, both \lihomo{$H$} and \lbhomo{$H$} can be solved in subexponential time on string graphs for every $H$.
The crucial observation is all yes-instances have bounded degree.
\mainlihom*
\begin{proof} First, we prove the statement for \lihomo{$H$}, and then for \lbhomo{$H$}.

\paragraph{Locally injective homomorphisms.}
To ensure consistency of solutions found in recursive calls, we will solve a more general problem. First, every vertex $v$ of $G$ is equipped with a list $L(v)$ of vertices of $H$, and we ask for a locally injective homomorphism respecting these lists. Second, we are given a subset $X$ of vertices of $G$ and a function $\sigma \colon X \to 2^{V(H)}$. We require that in a solution $h$, for every vertex $v \in X$ it holds that $h(N_G(v)) = \sigma(v)$, i.e., $\sigma(v)$ is the set of colors appearing on the neighbors of $v$. Clearly, if $L(v) = V(H)$ for every $v$, and $X=\emptyset$, then we obtain the \lihomo{$H$} problem. The algorithm will be recursive and the constrains related to the set $X$ will be checked at leaves of tree of recursive calls.

Observe that if $G$ has a vertex with degree larger than $|H|$, there is no way to map in an injective way. Thus in this case we immediately report a no-instance. So let us assume that every vertex has degree at most $|H|$, which means that $G$ has $O(|H|n)$ edges and thus, by Theorem \ref{thm-stringsep}, there is a balanced separator $S$ of size $O(\sqrt m) = O(\sqrt{n})$, which can be found in time $2^{O(\sqrt{n} \log n)}$ by exhaustive search or in polynomial time, using the geometric representation.
Let $V_1,V_2$ be sets such that $V(G) = V_1 \uplus V_2 \uplus S$, there is no $V_1$-$V_2$-path in $G-S$, and $V_1,V_2 \leq c' \cdot n$ for a constant $c'$.

Let $h$ be a fixed (unknown) solution. For each $v \in S$, we exhaustively guess the color $h(v)$ (respecting the list $L(v)$) and the sets $\sigma_1 := h(N_G(v) \cap V_1)$ and $\sigma_2 := h(N_G(v) \cap V_2)$, respecting $\sigma(v)$, if $v \in X$. Note that, as $h$ is locally injective, $\sigma_1$ and $\sigma_2$ are disjoint. Then recursively solve the problem in $G_1 := G[V_1 \cup S]$ and $G_2 := G[V_2 \cup S]$. In the recursive calls we set $L(v) = \{h(v)\}$ for every $v \in S$. Moreover, we include each such $v$ in $X$ and set $\sigma(v)=\sigma_1$ in the recursive call for $G_1$ and $\sigma(v)=\sigma_2$ in the call for $G_2$. This ensures that no vertex from $S$ has two neighbors of the same color, one in each of $G_1,G_2$. The total number of recursive calls is at most $(|H| \cdot 3^{|H|})^{|S|} = 2^{\tilde{O} (\sqrt{n})}$ and the complexity of the whole algorithm is also $2^{{O} (\sqrt{n} \log n)}$.

\paragraph{Locally bijective homomorphisms.}
The proof follows very similarly to the previous case. It varies only in the part in which we guess the colors of vertices belonging to the separator $S$. Again, let $h$ be a fixed, unknown solution. Observe that as $h$ is locally bijective, each vertex $v$ of $G$ must be mapped by $h$ to a vertex of equal degree. To ensure that, when guessing the color of $v \in S$, we just remove from $L(v)$ all elements $a$ for which $d_G(v)\neq d_H(a)$.
\end{proof}

As mentioned, locally injective homomorphisms generalize some well-studied graph labeling problems, so \autoref{thm-mainlihom} implies the following.

\begin{corollary}
For any fixed $k$,
\begin{enumerate*}[label=(\roman*)] 
\item the $k$-$L(2,1)$-labeling,
\item the $k$-coloring of the square of a graph,
\item the injective $k$-coloring,
\end{enumerate*}
can be solved in time $2^{\tilde{O}(\sqrt{n})}$ in string graphs. \qed
\end{corollary}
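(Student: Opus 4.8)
The plan is to observe that for every fixed $k$ each of the three problems is literally an instance of \lihomo{$H$} for an appropriately chosen fixed graph $H$, and then to invoke \autoref{thm-mainlihom}. There is no transformation of the input graph involved: we run the algorithm of \autoref{thm-mainlihom} directly on the given string graph $G$, only changing the target $H$.

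First I would record the three instantiations, all of which were already sketched in the introduction. For $k$-$L(2,1)$-labeling, take $H=\overline{P_k}$, the complement of the $k$-vertex path on $\{1,\dots,k\}$ with $i\sim j$ iff $|i-j|\ge 2$: a homomorphism $G\to H$ forces adjacent vertices of $G$ to receive labels differing by at least $2$, and local injectivity forces vertices sharing a neighbour (i.e.\ at distance $2$ in $G$) to receive distinct labels, which is exactly a $k$-$L(2,1)$-labeling. For $k$-coloring of the square of $G$, take $H=K_k$: a homomorphism to $K_k$ is a proper $k$-coloring of $G$, and local injectivity additionally forbids two vertices with a common neighbour from sharing a colour, so the two constraints together are precisely those defining a proper $k$-coloring of $G^2$. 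For injective $k$-coloring, take $H$ to be $K_k$ with a loop at every vertex: now every map $V(G)\to V(H)$ is a homomorphism (each edge may be sent to a loop or to an edge of $K_k$), so the only effective restriction is local injectivity, i.e.\ that no two vertices with a common neighbour get the same colour, which is the definition of an injective $k$-coloring.

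In each case $H$ depends only on the fixed parameter $k$, so \autoref{thm-mainlihom} applied with that $H$ decides the problem on a string graph with $n$ vertices in time $2^{O(\sqrt n\log n)}=2^{\widetilde O(\sqrt n)}$; moreover the algorithm behind \autoref{thm-mainlihom} is robust, so no geometric representation is needed. I would phrase the corollary's proof as just these three one-line reductions followed by the citation of \autoref{thm-mainlihom}.

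The only point that needs a moment's care — the closest thing to an obstacle — is the square-of-a-graph case: one must keep working on $G$ through \lihomo{$K_k$}, and \emph{not} form $G^2$ and try to colour it, since $G^2$ need not be a string graph even when $G$ is, and then the separator bound of \autoref{thm-stringsep}, on which \autoref{thm-mainlihom} relies, would no longer be available. Everything else is a direct substitution.
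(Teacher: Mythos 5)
Your proposal is correct and matches the paper's (implicit) argument exactly: the corollary is stated as an immediate consequence of \autoref{thm-mainlihom} via the same three instantiations of \lihomo{$H$} (with $H=\overline{P_k}$, $H=K_k$, and $H$ the reflexive complete graph) that the paper already records in the introduction. Your remark that one should not pass to $G^2$ in the square-coloring case is a sensible clarification but introduces nothing beyond the paper's route.
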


On the other hand, as every planar graph is a segment graph \cite{ChalopinG09,DBLP:conf/soda/GoncalvesIP18}, hardness results for planar graphs can be used to derive ETH-lower bounds for \lihomo{$H$} -- in particular, the following ones follow from the hardness results for $k$-$L(2,1)$-labeling~\cite{DBLP:journals/dam/EggemannHN10}, 7-coloring of the square of a graph \cite{227702}, and injective 3-coloring \cite{413218}.
\begin{theorem}[Eggemann {\em et al.} \cite{DBLP:journals/dam/EggemannHN10}, Ramanathan, Lloyd \cite{227702}, Bertossi, Bonuccelli~\cite{413218}]
Unless the ETH fails, there is no algorithm for \lihomo{$H$} in segment graphs working in time $2^{o(\sqrt{n})}$,
where $H$ is 
\begin{enumerate*}[label=(\roman*)]
\item the complement of a path with at least 4 vertices,
\item a complete graph with 7 vertices,
\item a triangle with additional loop on its every vertex. \qed
\end{enumerate*}
\end{theorem}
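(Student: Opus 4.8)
The plan is to route each of the three cases through the corresponding graph-labeling problem on \emph{planar} graphs, and then lift the lower bound to segment graphs via the fact that every planar graph is a segment graph \cite{ChalopinG09,DBLP:conf/soda/GoncalvesIP18}.

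First I would invoke the three equivalences already recorded in the introduction, now read as reductions between decision problems on the \emph{same} input graph. A homomorphism $h\colon G\to \overline{P_k}$ (the complement of the $k$-vertex path) is locally injective exactly when $h$ encodes a $k$-$L(2,1)$-labeling of $G$; a locally injective homomorphism $G\to K_7$ is exactly a proper $7$-coloring of the square $G^2$; and a locally injective homomorphism from $G$ to the triangle with a loop at each vertex is exactly an injective $3$-coloring of $G$. Hence for each of the three graphs $H$ in the statement, \lihomo{$H$} restricted to planar inputs coincides, as a decision problem, with the corresponding planar labeling problem, on the same graph and hence with the same number of vertices.

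Second, for those three planar labeling problems I would use the NP-hardness constructions of Eggemann, Havet and Noble \cite{DBLP:journals/dam/EggemannHN10}, of Ramanathan and Lloyd \cite{227702}, and of Bertossi and Bonuccelli \cite{413218}. The point to extract from each paper is that the reduction is \emph{linear}: a size-$N$ instance of the planar source problem is turned into a planar graph on $O(N)$ vertices. Combined with the ETH \cite{ImpagliazzoPaturi} — under which the planar source problem (e.g.\ \textsc{Planar 3-Sat}) on size-$N$ instances has no $2^{o(\sqrt N)}$-time algorithm, by the standard crossover-gadget argument — this shows that none of these labeling problems can be solved in time $2^{o(\sqrt n)}$ on $n$-vertex planar graphs unless the ETH fails. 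Since planar graphs are segment graphs \cite{ChalopinG09,DBLP:conf/soda/GoncalvesIP18}, a hypothetical $2^{o(\sqrt n)}$-time algorithm for \lihomo{$H$} on $n$-vertex segment graphs would in particular solve the corresponding planar labeling problem on $n$-vertex planar graphs in time $2^{o(\sqrt n)}$, which we just ruled out; this yields the claimed bound for all three choices of $H$.

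The only genuine work here is bookkeeping, and that is the step I expect to be the main obstacle: one has to check in each of the three cited papers that the reduction really does produce a \emph{planar} graph of \emph{linear} size, and that it is anchored, possibly after one trivial intermediate step, at a planar problem for which the $2^{o(\sqrt{\cdot})}$ lower bound under the ETH is available. Beyond that, no new combinatorial construction is required.
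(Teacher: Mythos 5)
Your proposal is correct and follows essentially the same route as the paper: the paper derives this theorem exactly by reading \lihomo{$H$} for these three graphs $H$ as $k$-$L(2,1)$-labeling, coloring of the square, and injective coloring, respectively, taking the ETH-based $2^{o(\sqrt n)}$ lower bounds from the cited planar hardness constructions, and lifting them to segment graphs via the fact that every planar graph is a segment graph. The only caveat you already flag yourself—verifying that each cited reduction is planar, linear-size, and anchored at a planar problem with the square-root ETH bound—is likewise left implicit in the paper.
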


%%%%%%%%%%%%%%%%%%%%%%%%%%%%%%

\section{Locally surjective homomorphism}\label{sec:lsh}
In this section we consider the problem of {locally surjective} homomorphism, denoted by \lshomo{$H$}. For a fixed graph $H$, the \lshomo{$H$} problem asks whether a given graph $G$ admits homomorphism to $H$, which is surjective on $N_G(v)$ for every $v \in V(G)$. In other words, if $h(v)=a$, then every neighbor of $a$ must appear on some neighbor of $v$.
If there exists a locally surjective homomorphism from $G$ to $H$, we denote this fact by by $G \tos H$. For a homomorphism $h: G \to H$ we say that a vertex $v$ is {\em happy} if $h(N_G(v))=N_H(h(v))$. Clearly $h$ is locally surjective if every vertex of $G$ is happy.

We aim to prove \autoref{thm:lsh-together}, i.e., prove a dichotomy for simple target graphs $H$ with $\Delta(H) \leq 2$.
Let us start with a simple observation, that will be used many times.

\begin{observation} \label{obs-lsh}
Let $h:G \tos H$ and $\delta(H) \geq 1$. For every $v \in V(G)$ it holds that:
\begin{compactenum}[a)]
\item  $\deg_G(v)\geq \deg_H(h(v))$. In particular, if $\deg_G(v)=1$, then $\deg_H(h(v))=1$, \label{it-obs-lsh-a}
\item if $\deg_G(v)=\deg_H(h(v))=2$, then $h(v_1) \neq h(v_2)$ for distinct neighbors $v_1, v_2$ of~$v$.  \label{it-obs-lsh-b}\qed
\end{compactenum}
\end{observation}

\subsection{Paths} \label{sec-paths}
In this section, the the target graph is a path $P_k$ with consecutive vertices $1,2,\ldots,k$.

First, let us discuss the case, when $k=3$. Let $G$ be an instance of \lshomo{$P_3$}. By \autoref{obs-lsh} \ref{it-obs-lsh-a}) we can assume that $G$ does not have isolated vertices. We can also assume that an input graph $G$ is bipartite with bipartition classes $X$ and $Y$, as otherwise any homomorphism to $P_3$ cannot exist. Moreover, in any homomorphism, one of bipartition classes, say $Y$, will be entirely mapped to $2$. Note that since no vertex is isolated, vertices of $X$ will always be happy. Thus $G \tos P_3$ if and only if one can color vertices of $X$ with two colors (1 and 3), so that every vertex from $Y$ has at least one neighbor in each color. We observe that this is exactly the {\sc Not All Equal Sat} problem, where $G$ is an incidence graph of the input formula.
From this we conclude that \lshomo{$P_3$} does not have a subexponential algorithm in general graphs, but is solvable in polynomial time in planar graphs (since  {\sc Planar Not All Equal Sat} is in P, see Moret~\cite{Moret:1988:PNP:49097.49099}. Moreover, the list variant of \lshomo{$P_3$} in planar graphs in NP-complete, see Dehghan~\cite{Dehghan2016}.

The win-win approach of \autoref{thm-mainhom} (a) cannot be directly applied for \lshomo{$P_3$}, as there is no good branching on a high-degree vertex. Instead, we will use the following result.
\begin{theorem}[Lee~\cite{Lee16}]\label{thm-stringedges}
There is a constant $c >0$ such that for every $t \geq 1$, it holds that every $K_{t,t}$-free string graph on $n$ vertices has at most $c\cdot n \cdot t\log{t}$ edges.
\end{theorem}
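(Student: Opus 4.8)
The plan is to derive the bound from the string separator theorem (Theorem~\ref{thm-stringsep}) together with the K\H{o}v\'ari--S\'os--Tur\'an theorem, by a recursive decomposition. Write $\tau := t\log(t+1)$; for $t=1$ the graph is edgeless, so assume $t\geq 2$. Since $K_{t,t}$-freeness is inherited by induced subgraphs, and induced subgraphs of string graphs are string graphs, every graph produced by the recursion below is again a $K_{t,t}$-free string graph, so the argument can proceed by strong induction on the number of vertices: when $n=O(\tau)$ the trivial estimate $e(G)\leq\binom{n}{2}$ already gives $e(G)=O(n\tau)$, which serves as the base case.

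For the inductive step, let $G$ be a $K_{t,t}$-free string graph on $n$ vertices with $m$ edges, and suppose $m$ exceeds the target $c\cdot n\tau$. Apply Theorem~\ref{thm-stringsep} to obtain a balanced separator $S$ with $|S|=O(\sqrt m)$, splitting $G-S$ into $V_1,V_2$ with $|V_i|\leq\tfrac23 n$. Write $m$ as the sum of the edges inside $G[V_1]$, inside $G[V_2]$, inside $G[S]$, and those across the bipartition $(S,V_1\cup V_2)$. The first three are bounded by the induction hypothesis, applied to strictly smaller vertex sets, and since $|V_1|+|V_2|+|S|=n$ these contribute exactly $O(n\tau)$; the fourth is a $K_{t,t}$-free bipartite graph with one side of size $|S|$, so K\H{o}v\'ari--S\'os--Tur\'an bounds it in terms of $|S|$ and $n$. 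Iterating down the resulting decomposition tree (which has logarithmic depth once $n$ is large relative to $\tau$, since the pieces shrink by a constant factor each time) and summing the leaf and separator contributions — estimating the total size of all separators used by a level-by-level Cauchy--Schwarz bound that exploits that a separator of a $K_{t,t}$-free string graph is sublinear in its number of vertices — should yield $e(G)=O(n\tau)$.

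The main obstacle, and the real content of Lee's theorem, is the quantitative bookkeeping rather than any single new idea. A general $K_{t,t}$-free graph can have $\Theta(n^{2-1/t})$ edges, which is superlinear, so the recursion must be organized so that the string-graph structure (through the separator theorem) is brought to bear at \emph{every} scale, not merely at the top, and the K\H{o}v\'ari--S\'os--Tur\'an estimate for the edges meeting a separator must be invoked only when that separator is genuinely small compared with the rest of the current piece; otherwise the lower-order error terms accumulate over the $\Theta(\log n)$ levels and swamp the linear bound. Choosing the recursion thresholds (polynomial in $t$) so that these errors are absorbed while keeping the dependence on $t$ down to exactly $t\log t$ — rather than a larger power of $t$, or something exponential in $t$ — is the delicate step, and it is exactly the strengthening of the separator size in Theorem~\ref{thm-stringsep} from the earlier $O(m^{3/4}\sqrt{\log m})$ and $O(\sqrt m\,\log m)$ bounds down to $O(\sqrt m)$ that lets the clean $O(n\,t\log t)$ estimate close.
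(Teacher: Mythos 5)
First, a point of reference: the paper does not prove this statement at all — it is imported verbatim from Lee~\cite{Lee16} — so your sketch has to stand on its own as a proof, and it does not. The decisive gap is the step where you bound the edges between the separator $S$ and $V_1\cup V_2$ by the K\H{o}v\'ari--S\'os--Tur\'an theorem. KST applied to a $K_{t,t}$-free bipartite graph with sides $|S|$ and $n$ gives at best roughly $t^{1/t}\,|S|\,n^{1-1/t}+t\,n$ edges, and with $|S|=O(\sqrt{m})$ and $m\approx c\,n\,t\log t$ this is of order $\mathrm{poly}(t)\cdot n^{3/2-1/t}$, which is superlinear in $n$ for every $t\geq 3$. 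Summing such terms over the decomposition tree does not help: the contributions are dominated by the root, so the recursion closes only to a bound of roughly $\mathrm{poly}(t)\cdot n^{3/2-1/t}$ (about $n^{2-2/t}$ even after bootstrapping the improved edge bound back into the separator size), not $O(n\,t\log t)$. A second, related problem is that a target that is \emph{linear} in $n$ leaves no slack in the separator recursion: $c|V_1|\tau+c|V_2|\tau+c|S|\tau$ already equals $c\,n\,\tau$, so the crossing edges cannot be charged anywhere without strengthening the inductive hypothesis; and if you instead run the standard minimal-counterexample/threshold variant, the recursion only makes progress once KST forces the pieces below a size threshold that is exponential in $t$, which yields a constant of the form $t\cdot 2^{\Theta(t)}$ rather than $t\log t$. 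Your closing claim — that the improvement of the separator bound from $O(\sqrt{m}\log m)$ to $O(\sqrt{m})$ is ``exactly'' what lets the clean $O(n\,t\log t)$ bound close — is therefore not accurate: even with the $O(\sqrt m)$ separator of Theorem~\ref{thm-stringsep}, the scheme ``separator plus KST plus recursion'' provably cannot reach a bound linear in $n$ with the stated dependence on $t$.

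The missing ingredient is a statement that exploits the string structure \emph{locally}, not just through balanced separators: one needs that a string graph whose density is large relative to its number of vertices contains a complete bipartite subgraph whose parts have size about (average degree)$/\log(\text{average degree})$ — equivalently, the theorem itself — and KST goes in the opposite direction and is tight for abstract bipartite graphs, so it can never supply this. This is precisely the content of the Fox--Pach line of work (which obtained $t(\log t)^{O(1)}\cdot n$ via their results on bicliques in dense string graphs) and of Lee's argument for region intersection graphs, which sharpens the factor to $t\log t$; neither is a formal consequence of Theorem~\ref{thm-stringsep} together with K\H{o}v\'ari--S\'os--Tur\'an. So the proposal, beyond explicitly deferring the ``quantitative bookkeeping,'' is built on a mechanism that cannot deliver the stated bound.
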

Now we can show that \lshomo{$P_3$} can be solved in subexponential time in string graphs. 

\begin{theorem} \label{thm:lsh-p3}
\lshomo{$P_3$} can be solved in time $2^{O(n^{2/3}\log^{3/2} n)}$ for string graph on $n$ vertices, even if geometric representation is not given.
\end{theorem}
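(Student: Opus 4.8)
\emph{Plan.}
The starting point is the reduction already sketched above of $\lshomo{$P_3$}$ to a hypergraph $2$-colouring problem. We may assume $G$ has no isolated vertex and is connected and bipartite with parts $X,Y$ (otherwise there is trivially no solution, or we handle components separately); in any $h\colon G\tos P_3$ the set $h^{-1}(2)$ is one of $X,Y$, so after branching into two cases assume it is $X$. Then, in this case, $G\tos P_3$ iff the vertices of $Y$ can be coloured with $\{1,3\}$ so that for every $x\in X$ the set $N_G(x)$ is non-monochromatic. To keep the recursion self-contained I will solve a slightly more general problem on $G$, where additionally each ``variable'' of $Y$ carries a list $\subseteq\{1,3\}$, and some vertices come with a prescribed subset of their neighbourhood on which a simple constraint (non-monochromaticity, or ``colour $3$ must occur'') is imposed; all these generalisations leave the underlying graph an \emph{induced} subgraph of $G$, hence a string graph, which is what matters for the separator argument.

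Set $t:=\lceil (n')^{1/3}\rceil$ for the current vertex count $n'$, and split on the edge count. If $|E(G')|$ is below the $K_{t,t}$-free threshold of \autoref{thm-stringedges}, i.e.\ $|E(G')|=O((n')^{4/3}\log n')$, then by \autoref{thm-stringsep} there is a balanced separator $S$ with $|S|=O((n')^{2/3}\sqrt{\log n'})$; with the representation we find it in polynomial time, otherwise we search exhaustively among all subsets of that size, at cost $n'^{\,O(|S|)}=2^{O((n')^{2/3}\log^{3/2}n')}$ (if none is found we may report that the input is not a string graph). We then guess the colour of each $v\in S$ together with, for each such $v$, how the set of colours it must see splits between the two sides of $S$ -- this is $2^{O(|S|)}$ choices since $\Delta(P_3)=2$ -- and recurse on the two sides, exactly as in the proof of \autoref{thm-mainlihom}. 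Thus this case contributes only $2^{O((n')^{2/3}\log^{3/2}n')}$, and each subproblem has at most a constant fraction of $n'$ vertices.

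If instead $|E(G')|$ exceeds the threshold, then, $G'$ being a string graph, \autoref{thm-stringedges} guarantees a copy of $K_{t,t}$, which we locate by brute force over $t$-subsets in time $2^{O((n')^{1/3}\log n')}$. Write its sides as $A$ (on the clause side) and $B$ (on the variable side), so every clause of $A$ contains every variable of $B$. Branch on the coarse behaviour of $B$ in the sought solution: either $B$ is monochromatic (two cases, one per colour), in which case we fix all $|B|\ge t$ variables of $B$ and each clause of $A$ is left needing only to see the missing colour on its remaining neighbours; or $B$ is non-monochromatic, in which case every clause of $A$ is already satisfied, so we delete all clauses of $A$ but one, record on the surviving one the requirement that $B$ be non-monochromatic, and delete every variable thereby left without a colour-$2$ neighbour (such a variable is happy for free). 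In each of the $O(1)$ branches at least $t-1=\Omega((n')^{1/3})$ vertices are removed. A standard potential argument -- balancing the $O(1)$ branching of a biclique step against the $\Omega(\log^{3/2}n')$ increase it causes in $(n')^{2/3}\log^{3/2}n'$, and the $2^{O((n')^{2/3}\log^{3/2}n')}$ branching of a separator step against the constant-factor shrink of $n'$ -- then shows the total running time is $2^{O(n^{2/3}\log^{3/2}n)}$.

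The step I expect to be the main obstacle is the bookkeeping in the biclique case: one must verify that deleting all but one clause of $A$ together with the stranded variables, while recording the non-monochromaticity requirement, is a faithful reduction -- in particular that no remaining vertex becomes unhappy and the deleted vertices can always be coloured consistently -- and that the extra constraints introduced there (and by the separator step) can still be threaded through later separators without inflating the number of guesses. The other point that needs care is checking that the interleaved two-sided recursion and biclique peeling actually close to the stated bound; this is routine once the edge-count dichotomy is applied with $t=(n')^{1/3}$.
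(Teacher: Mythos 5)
Your overall skeleton coincides with the paper's: reformulate \lshomo{$P_3$} on a bipartite graph as a non-monochromaticity (NAE-type) constraint problem, carry an auxiliary function recording which of the colors $1,3$ a ``clause'' vertex still needs to see, and run a win--win on the edge count with threshold $\Theta(n^{4/3}\log n)$ -- below it, use \autoref{thm-stringsep} to get a separator of size $O(n^{2/3}\log^{1/2}n)$ and guess colors plus the split of the requirements; above it, use \autoref{thm-stringedges} to extract a $K_{n^{1/3},n^{1/3}}$ and branch on the behaviour of its variable side. The separator half of your argument is essentially identical to the paper's and is fine.

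The gap is exactly the step you flag yourself: the third branch of the biclique step. The paper does \emph{not} record a deferred ``$B$ is non-monochromatic'' constraint; it branches over the at most $|B|^2\le n^{2/3}$ choices of two concrete vertices $x_1,x_2\in B$ coloured $1$ and $3$, after which every clause of $A$ is satisfied \emph{with respect to its full neighbourhood}, so all of $A$ (and $x_1,x_2$) can be deleted and every constraint ever present stays of the simple form ``$\sigma(y)\subseteq h(N(y))$''; the polynomial branching factor is harmless in the recurrence $F(n)\le 2^{O(n^{1/3}\log n)}+2F(n-n^{1/3})+n^{2/3}F(n-n^{1/3})$. Your variant instead introduces a new constraint type -- non-monochromaticity on a \emph{specified subset} $B$ of one surviving vertex's neighbourhood -- and this does not compose: if a later biclique step has side $A'$ containing clause vertices that already carry such subset constraints (survivors of earlier steps can certainly reappear in a later $A'$), then asserting ``$B'$ non-monochromatic'' does not imply their recorded constraints, so deleting them is unsound (false positives), while keeping them destroys the $\Omega(n^{1/3})$-vertices-removed progress guarantee once carriers accumulate (up to $\approx n^{2/3}$ of them can exist along a branch). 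Transferring the orphaned constraints to the survivor does not work either, because they need not be subsets of the survivor's neighbourhood, and then a later separator can split a constraint set away from its carrier without touching $S$, so the two-sided guessing no longer captures it. So as written the biclique case is not a faithful reduction; the missing idea is precisely the paper's move of paying the $n^{2/3}$ factor to name the two witnesses explicitly (or some other mechanism that keeps every requirement attached to the full current neighbourhood of its vertex).
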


\begin{proof}
We assume an instance graph $G$ has no isolated vertices and is bipartite, with bipartition classes $X$ and $Y$.
Note that $G$ is a yes-instance if and only if there is a homomorphism $h_X \colon G \tos P_3$, such that $h_X(X)=\{1,3\}$ and $h_X(Y)=\{2\}$, or homomorphism $h_Y \colon G \tos P_3$, such that $h_Y(Y)=\{1,3\}$ and $h_Y(X)=\{2\}$.
Let us assume that $X$ is mapped to $\{1,3\}$, the algorithm will be called twice with roles of $X$ and $Y$ switched. Again, we will solve a more general problem, in which we define an additional function $\sigma: Y \to 2^{\{1,3\}}$ and ask for an existence of a homomorphism $h \colon G \to P_3$, such that $\sigma(y) \subseteq h(N_G(y))$ for every $y \in Y$. Clearly, if $\sigma \equiv \{1,3\}$, then we obtain the \lshomo{$P_3$} problem.

First, we observe that if $G$ has at most $\frac{c}{3} \cdot n^{4/3}\log{n}$ edges (where $c$ is a constant from  \autoref{thm-stringedges}), then we can find a balanced separator $S$ of size $O(n^{2/3}\log^{1/2}{n})$ in time $2^{O(n^{2/3}\log^{3/2}{n})}$.  Denote by $V_1,V_2$ the sets such that $V(G) = V_1 \uplus V_2 \uplus S$ and there is no $V_1$-$V_2$-path in $G-S$. We exhaustively guess $h(x)$ for every $x \in S \cap X$ and the partition $\sigma_1 \uplus \sigma_2$ of $\sigma(y) \setminus h(N_G(y) \cap S)$ for every $y \in S \cap Y$. Then, for every possibility, we consider graphs $G_1:=G[V_1 \cup S]$ and $G_2:=G[V_2 \cup S]$, in which vertices of $S$ are already colored. For every $y \in S \cap Y$ we set $\sigma(y)=\sigma_1$ if $y \in V(G_1)$ or $\sigma(y)=\sigma_2$ otherwise.
Then, for every $x \in S \cap X$, we remove $h(x)$ from $\sigma(y)$, for every neighbor $y$ of $x$, and finally we remove $x$ from the instance. If there exists $y$ for which $\sigma(y)=\phi$, we also remove $y$. Then, if any isolated vertex $x \in X$ appears, we remove it too, as it means that $\sigma(y)$ of its every neighbor $y$ was already empty, so the color of $x$ does not matter.
We call the algorithm recursively for graphs $G_1$ and $G_2$, together with their corresponding functions $\sigma$.
Note that $G_1$ or $G_2$ may contain an isolated vertex $y \in Y$ with  $\sigma(y)\neq \phi$, in this case we terminate the current recursive call. Observe that the total number of recursive calls is $2^{|X \cap S|}\cdot 4^{|Y \cap S|}= 2^{{O}(n^{2/3} \log^{1/2}n)}$ and the overall complexity of this step is  $2^{O(n^{2/3}\log^{3/2} n)}$.

If $G$ has more than $\frac{c}{3} \cdot n^{4/3}\log{n}$ edges, we know from Theorem \ref{thm-stringedges} that it has a bipartite subgraph $K_{n^{1/3},n^{1/3}}$. We find it exhaustively in time $n^{O(n^{1/3})} = 2^{O(n^{1/3}\log{n})}$. Let $X' \subseteq X$ and $Y' \subseteq Y$ be its bipartition classes. We branch on three possibilities. Either we set $h(x)=1$ for every $x \in X'$, or $h(x)=3$ for every $x \in X'$ or we choose $x_1, x_2 \in X'$ and set $h(x_1)=1$ and $h(x_2)=3$. In first two cases we can proceed to the graph $G - X'$ (and remove $h(X')$ from $\sigma(y)$ of every $y \in N(X')$), and in the last case we can remove $Y'$, together with $x_1$ and $x_2$ (also adjusting the function $\sigma$ for their neighbors), as all elements of $Y'$ are happy. Denote by $F(n)$ the complexity of this step and observe that
\[F(n)\leq 2^{O(n^{1/3}\log{n})} + 2F(n-n^{1/3}) + n^{2/3}F(n-n^{1/3}) \leq 2^{O(n^{2/3}\log{n})},\]
so the total complexity of algorithm is also $2^{O(n^{2/3}\log^{3/2}{n})}$.
\end{proof}

For paths with at least 4 vertices, the existence of  subexponential algorithms are unlikely.

\begin{theorem} \label{thm:lsh-pp-lower}
For any $k \geq 4$, the \lshomo{$P_k$} problem on \kdir{$2$} graphs with $n$ vertices cannot be solved in time $2^{o(n)}$, unless the ETH fails.
\end{theorem}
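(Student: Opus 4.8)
The plan is to give a polynomial-time reduction with linear size blow-up from a bounded-degree, ETH-hard problem — I would use \is on graphs of bounded maximum degree, as in the proofs of \autoref{thm-octhomo} and \autoref{thm-homo-c4} (one could equally start from a bounded-occurrence variant of \pnaesat) — to \lshomo{$P_k$} on \kdir{$2$} graphs. Given an instance $(G,k_0)$ of \is with $V(G)=\{v_1,\dots,v_n\}$ and $m=|E(G)|=O(n)$, I would construct a \kdir{$2$} graph $G^*$ starting, as in \autoref{thm-homo-c4}, from a grid formed by $n$ pairwise-disjoint parallel horizontal segments $X=\{x_1,\dots,x_n\}$ and $n$ pairwise-disjoint parallel vertical segments $Y=\{y_1,\dots,y_n\}$, so that the grid part is the complete bipartite graph on $X,Y$ and is triangle-free (as it must be, since $P_k$ is). At the crossing of $x_i$ and $y_i$ I would place a constant-size \emph{vertex gadget} — a handful of extra segments of the two allowed slopes, in the spirit of the sets $D_i$ of \autoref{thm-octhomo} — whose purpose is to force the pair $(h(x_i),h(y_i))$ into one of exactly two prescribed colour patterns of $P_k$, corresponding to ``$v_i$ is in the independent set'' and ``$v_i$ is out''; and at the crossing of $x_i$ and $y_j$ for each $v_iv_j\in E(G)$ I would place a constant-size \emph{edge gadget} whose sole effect is to rule out the single ``both-in'' combination, exactly as the edge gadget of \autoref{thm-octhomo} rules out a specific pattern. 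Since a segment of one slope can be crossed at arbitrarily many disjoint points, the row segment $x_i$ can carry the gadget of $v_i$ and all its incident edge gadgets; the whole picture uses only two slopes and $|V(G^*)|=O(n)$.

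For correctness I would follow the ``pretty homomorphism'' methodology: first prove a normalization claim that every locally surjective $h\colon G^*\tos P_k$ can be rerouted, while remaining locally surjective, into one in which every vertex gadget occupies one of its two intended states and every edge gadget is consistent with the states of its two endpoints. The gadgets must be engineered — using \autoref{obs-lsh}, i.e. the degree inequality $\deg_{G^*}(v)\ge\deg_{P_k}(h(v))$ and the fact that a degree-$2$ vertex whose image has degree $2$ must see two \emph{distinct} colours — so that these are essentially the only locally surjective behaviours available locally. Granting the claim, the forward direction is the obvious colouring read off from an independent set, and the backward direction extracts from a pretty homomorphism an independent set whose size is controlled by counting how many vertex gadgets are in the ``in'' state against the happiness of the grid vertices, all in complete analogy with the argument of \autoref{thm-octhomo}; the ETH lower bound then follows since $n = O(|V(G^*)|)$.

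The step I expect to be the main obstacle — and the reason this is genuinely harder than the easy cases of \autoref{thm:lsh-together} — is that a locally surjective homomorphism to $P_k$ cannot be confined to a short sub-path: if any vertex of $G^*$ is coloured $c$ it forces a neighbour coloured $c+1$, which forces a neighbour coloured $c+2$, and so on all the way up to the endpoint $k$ (and symmetrically down to $1$). Hence the construction must additionally include, attached to the gadgets, some mechanism — e.g. constant-size ``reservoir'' subgraphs that realise the whole colour gradient $1,2,\dots,k,k-1,\dots$ as a bouncing walk, each of them just a bounded-length staircase of alternating horizontal and vertical segments, hence \kdir{$2$}-realisable and of size $O(k)=O(1)$ — designed to absorb the cascading colour demands generated by the grid without disturbing the encoding. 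Getting these reservoirs to interact cleanly with the two states of each vertex gadget, uniformly for every $k\ge 4$, is the delicate part of the argument; the remaining bookkeeping (triangle-freeness, the $O(n)$ vertex bound, and the counting in the two directions) is routine.
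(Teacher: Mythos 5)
Your plan has a genuine gap, and it is not the one you flag (the cascading colours); it is the choice of source problem and of the grid-based encoding. \lshomo{$P_k$} is an \emph{unweighted} decision problem, so the machinery of \autoref{thm-octhomo} and \autoref{thm-homo-c4} does not transfer: in those reductions the cardinality threshold of \is is encoded entirely through the weight function, and the ``pretty homomorphism'' claims are weight-exchange arguments. Here constant-size local gadgets can only enforce local, covering-type constraints (``this vertex must see such-and-such colours among its neighbours''); they cannot express ``at least $k_0$ vertex gadgets are in the in-state''. Concretely, in your construction the all-out labelling satisfies every edge gadget, so $G^*$ would admit a locally surjective homomorphism to $P_k$ irrespective of whether $G$ has a large independent set, and the backward direction has nothing to count against -- happiness of the grid vertices only yields covering conditions, never a size bound. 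A second, independent failure is the grid itself: two distinct vertices of $P_k$ have at most one common neighbour, so \emph{any} homomorphism from the complete bipartite grid $K_{n,n}$ to $P_k$ maps one side to a single colour $c$ and the other side into $\{c-1,c+1\}$. Hence $h(y_i)$ cannot mirror the state of $v_i$, the intended two ``colour patterns'' for the pair $(h(x_i),h(y_i))$ collapse, and an edge gadget placed at the crossing of $x_i$ with $y_j$ never sees the state of $v_j$ (while $x_i$ and $x_j$ are parallel and do not cross). In the weighted reductions the coupling $h(x_i)=h(y_i)$ is bought with weights; without them it is simply unavailable.

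The paper's proof avoids both problems by reducing from 3-\sat, which is exactly the kind of constraint local surjectivity can express: a vertex mapped to $3$ is happy only if some neighbour is mapped to $4$, which is used to state ``this clause contains a true literal'' (compare the discussion of \lshomo{$P_3$} versus \textsc{Not All Equal Sat} in Section \ref{sec-paths}). The crossing structure is sparse (occurrence segments versus clause segments, not a complete grid carrying information), truth values are stored in which end of a $(2k-3)$-vertex variable path receives colour $1$, and membership gadgets of size $\Theta(k)$ deliver colour $4$ to a clause segment exactly when the corresponding literal is true. Your ``reservoir'' intuition about absorbing the colour gradient is on the right track, but you are missing the anchoring mechanism that makes the whole picture rigid: the paper attaches a path $T$ on $2k-1$ vertices with a degree-one endpoint, and \autoref{obs-lsh} forces that endpoint onto an endpoint of $P_k$, which determines the colours along $T$, hence of $t_k$ and $t_1$, and from there pins the colours of all clause segments, occurrence segments and auxiliary paths; only then do the variable and membership gadgets have well-defined ``states''. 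Note also that the gadget sizes must grow with $k$ (they realise the full gradient up to $k$), which is harmless since $k$ is fixed, but it means ``constant-size gadgets in the spirit of $D_i$'' is not the right template. As written, your proposal would not yield a correct reduction, and the delicate step you defer is precisely where a different global design is required.
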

\begin{proof}
Let $k \geq 4$ be fixed. We reduce from 3-\sat, consider an instance $\Phi$ of 3-\sat with variables $u_1,\dots,u_n$ and clauses $c_1,\dots,c_m$. We assume that every variable $u$ appears at least once as positive and once as negative literal. Indeed, otherwise we can set its value and remove it from the formula, along with all clauses containing $u$.

For each occurrence of a variable $u_i$ in a clause, we introduce a vertical \emph{occurrence segment} $x$. We denote the sets of positive and negative occurrence segments of $u_i$ by $X_i$ and $\tilde{X}_i$ respectively. Let $X=\bigcup_{i \in [k]} (X_i \cup \tilde{X}_i)$. We place the elements of $X$ in a following order: leftmost segments from $X_1$, then the ones from $\tilde{X}_1$, $X_2$, $\tilde{X}_2$, etc. Moreover, each segment is slightly shorter than the one on its left (see Figure \ref{fig-lsh-construction} (a)). For each clause $c_p$, we introduce a horizontal \emph{clause segment} $y_p$, intersecting all segments from $X$. Let $Y=\bigcup_{p \in [m]}y_p$. We also we add a vertical path $T=(t_1, \dots,t_{2k-1})$ on the right side of the picture. For each $y \in Y$ we introduce a horizontal segment $y'$ on the right side of $y$, which intersects $y$ and $t_1$. Let $Y'$ be the  set of all these segments $y'$. For every $x \in X$ we add a horizontal path $(q_1, \dots, q_{k-3})$ such that $q_1$ intersects $x$ and $q_{k-3}$ intersects $t_k$.

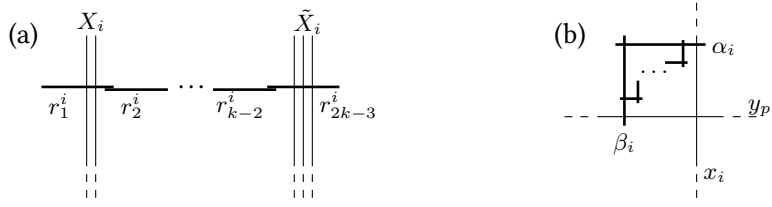
\begin{figure}[h]
\centering\begin{tikzpicture}[scale=1.2]
\draw[dashed] (-1.3,0.2) -- (-1.3,0.5);
\draw (-1.3,2) -- (-1.3,0.55);
\draw[dashed] (-1.2,0.2) -- (-1.2,0.5);
\draw (-1.2,2) -- (-1.2,0.55);
\foreach \i in {0,...,2}
{\draw[dashed] (1+0.1*\i,0.2) -- (1+0.1*\i,0.5);
\draw (1+0.1*\i,2) -- (1+0.1*\i,0.55);}
\node[draw=none,fill=none] at (-2,2) {(a)};
\draw[line width=1] (-1.8,1.43) -- (-1,1.43);
\draw[line width=1] (-1.1,1.4) -- (-.4,1.4);
\draw[line width=1] (.7,1.43) -- (1.5,1.43);
\draw[line width=1] (.1,1.4) -- (.8,1.4);
\node at (-.12,1.43) {\dots};
\node[draw=none,fill=none] at (-.8,1.22) {\footnotesize{$r_2^i$}};
\node[draw=none,fill=none] at (0.4,1.22) {\footnotesize{$r_{k-2}^i$}};
\node[draw=none,fill=none] at (-1.6,1.22) {\footnotesize{$r_1^i$}};
\node[draw=none,fill=none] at (1.6,1.22) {\footnotesize{$r_{2k-3}^i$}};
\node at (-1.25,2.15) {\footnotesize{$X_i$}};
\node at (1.15,2.15) {\footnotesize{$\tilde{X}_i$}};
\end{tikzpicture}
\hskip 2cm
\centering\begin{tikzpicture}[scale=1.2]
\draw[dashed] (-0.6,2.1) -- (-0.6,2.4); %dzida
\node[draw=none,fill=none] at (-0.4,0.45) {\footnotesize{$x_i$}};
\draw[dashed] (-0.6,0.2) -- (-0.6,0.5);
\draw (-0.6,2) -- (-0.6,0.55);
\node[draw=none,fill=none] at (-2,2) {(b)};
\draw (-0.3,1.1) -- (-1.7,1.1);
\draw[dashed] (-0.2,1.1) -- (0.1,1.1);
\draw[dashed] (-1.8,1.1) -- (-2.1,1.1);
\node[draw=none,fill=none] at (0.1,1.2) {\footnotesize{$y_p$}};
\draw[line width=1] (-0.5,1.9) -- (-1.5,1.9);
\draw[line width=1] (-1.4,1) -- (-1.4,2);
\node[draw=none,fill=none] at (-1.4,0.8) {\footnotesize{$\beta_i$}};
\node[draw=none,fill=none] at (-0.3,1.85) {\footnotesize{$\alpha_i$}};
\draw[line width=1] (-1.45,1.3) -- (-1.2,1.3);
\draw[line width=1] (-1.25,1.25) -- (-1.25,1.5);
\draw[line width=1] (-0.95,1.7) -- (-0.7,1.7);
\draw[line width=1] (-0.75,1.65) -- (-0.75,1.95);
%\node[draw=none,fill=none] at (-1.05,1.6) {\footnotesize{$\udots$}}; mam problem z pakietem mnsymbol, komentuję
\node[draw=none,fill=none] at (-1.05,1.6) {\footnotesize{$\dots$}};
\end{tikzpicture}
\caption{Construction of gadgets for both Theorems \ref{thm:lsh-pp-lower} and \ref{thm:lsh-cp-lower}. (a) A variable gadget for $H=P_k$. If $H=C_k$, then the path should contain $k-1$ segments. (b) A membership gadget. The path between $\alpha_i$ and $\beta_i$ contains $2k-4$ segments if $H=P_k$ or $k-2$ segments if $H=C_k$. If $H=C_3$, the only segment from the path is placed on $\alpha_i$ in a way that it intersects $\beta_i$ but not $x_i$.}
\label{fig-cthulhu}
\end{figure}

For each $u_i$ we add a \emph{variable gadget} on top of its occurrence segments. It is a path $(r_1^i, \dots,r_{2k-3}^i)$ such that $r_1^i$ ($r_{2k-3}^i$, resp.) intersects all segments from $X_i$ ($\tilde{X}_i$, resp., see Figure \ref{fig-cthulhu} (a)).
Now consider an occurrence segment $x_i$, and let $c_p$ be the clause containing this particular occurrence of a variable. On the intersection of $x_i$ with $y_p$ we introduce a \emph{membership gadget}, containing two segments $\alpha_i$ and $\beta_i$, crossing each other and $x_i$ or $y_p$, resp. Also, we add a path $(s_1, \dots,s_{2k-4})$ such that $s_1$ and $s_{2k-4}$ also intersect $\alpha_i$ and $\beta_i$, resp. (see Figure \ref{fig-cthulhu} (b)). Let $E^i$ be the set of segments of the membership gadget of~$x_i$. 

Assume that there exists $h: G \tos P_k$. We will show the satisfying truth assignment $\varphi$ of $\Phi$. First, note that without loss of generality $h(t_{2k-1})=1$, as $\deg(t_{2k-1})=1$. From Obs. \ref{obs-lsh} b) we get that $h(t_k)=k$ and $h(t_1)=1$. It implies that $h(Y')=2$ and, by Obs. \ref{obs-lsh} b), $h(Y)=3$. Moreover, as $h(t_k)=k$, then for every $x \in X$ its corresponding path between $x$ and $t_k$ can be colored in only one way, and $h(X)=2$.
Also, for every $v_i \in V(G)$ it holds that $\{h(r_1^i), h(r_{2k-3}^i)\}=\{1,3\}$. Define $\varphi$ as follows: if $h(r_1^i)=1$, then $\varphi(u_i)=1$, if $h(r_{2k-3}^i)=1$, then $\varphi(u_i)=0$. Assume that there exists a clause $c_p$, which is not satisfied by $\varphi$, and let $x_i, x_j, x_l$ be the occurrence segments corresponding to the literals of $c_p$. As all literals of $c_p$ are false, the neighbors of $x_i, x_j, x_l$ in their variable gadgets are mapped to 3. To make $x_i,x_j,x_l$ happy, we need to have $h(\alpha_i)=h(\alpha_j)=h(\alpha_l)=1$, which implies that  $h(\beta_i)=h(\beta_j)=h(\beta_l)=2$. But then $y_p$ is not happy, a contradiction.

Now we show that if $\Phi$ has a satisfying assignment $\varphi$, then there exists $h: G \tos P_k$. We set the coloring $h$ of all segments of $G$, which do not belong to variable or membership gadgets, exactly like in previous step. If $\varphi(u_i)=1$, we set $h(r_1^i)=1$ and $h(r_{2k-3}^i)=3$. Otherwise, $h(r_1^i)=3$ and $h(r_{2k-3}^i)=1$. In both there is only way to color the path $(r_2^i,\dots,r_{2k-4}^i)$ in a locally surjective way.

Observe that every $x_i \in X$ which is adjacent to $r_1^i$ if $\varphi(u_i)=1$ or to $r_{2k-3}^i$ if $\varphi(u_i)=0$ is already happy.
In this case we set $h(\alpha_i)=3$ and $h(\beta_i)=4$, and color $s_1,s_2,\ldots,s_{2k-4}$ with $2,1,2\ldots,k-1,k,k-1,\ldots,5$ (or $2,1,2,3$ for $k=4$). If the neighbor of $x_i \in X$ is not mapped to 1 (i.e., $x_i$ corresponds to a false literal), we must make it happy by setting $h(\alpha_i)=1$, which implies that $h(\beta_i)=2$. We color the path $s_1,\ldots,s_{2k-4}$ with $2,3,\ldots,k-1,k,k-1,\ldots,3$. This makes all segments inside $E^i$ happy.

Finally, observe that each since $\varphi$ is a satisfying assignment, each clause $c_p$ contains a true literal $x_i$, and thus $y_p$ has a nieghbor $\beta_i$ mapped to 4, so each variable segment is happy. This means that $h$ is a locally surjective homomorphism.
\end{proof}
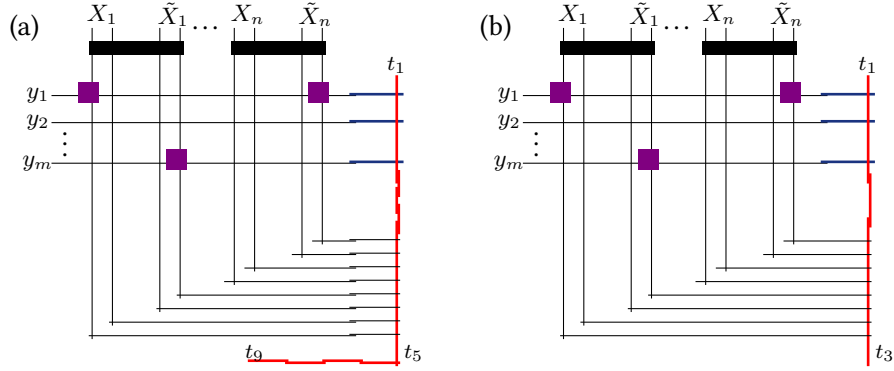
\begin{figure}[h]
\centering\begin{tikzpicture}[scale=.9]
\node[draw=none,fill=none] at (-2.4,3.2) {(a)};
\draw (-1.1,-1.2) -- (-1.1,3.2);
\draw (-0.1,-0.8) -- (-0.1,3.2);
\draw (-0.4,-1) -- (-0.4,3.2);
\draw (-1.4,-1.4) -- (-1.4,3.2);
\draw (0.7,-0.6) -- (0.7,3.2);
\draw (1.7,-0.2) -- (1.7,3.2);
\draw (1,-0.4) -- (1,3.2); 
\draw (2,0) -- (2,3.2);
\draw (-2,1.2) -- (2.5,1.2); \draw[line width=1,color=blue] (3.2,1.22) -- (2.4,1.22);
\draw (-2,1.8) -- (2.5,1.8); \draw[line width=1,color=blue] (3.2,1.82) -- (2.4,1.82);
\draw (-2,2.2) -- (2.5,2.2); \draw[line width=1,color=blue] (3.2,2.22) -- (2.4,2.22);
\node at (-1.8,1.6) {\vdots};
\node at (0.3,3.2) {\dots};
\node at (-1.25,3.4) {\footnotesize{$X_1$}};
\node at (-0.2,3.4) {\footnotesize{$\tilde{X}_1$}};
\node at (0.9,3.4) {\footnotesize{$X_n$}};
\node at (1.9,3.4) {\footnotesize{$\tilde{X}_n$}};
\node at (-2.2,1.2) {\footnotesize{$y_m$}};
\node at (-2.2,1.8) {\footnotesize{$y_2$}};
\node at (-2.2,2.2) {\footnotesize{$y_1$}};
\draw[line width=1,color=red] (3.1,.9) -- (3.1,2.5);
\draw[line width=1,color=red] (3.13,1.1) -- (3.13,0.7); 
\draw[line width=1,color=red] (3.1,0.35) -- (3.1,-1.8); 
\draw[line width=1,color=red] (3.1,0.45) -- (3.1,0.85); 
\draw[line width=1,color=red] (3.13,0.6) -- (3.13,0.15);
\node at (3.1,2.65) {\scriptsize{$t_1$}};
\node at (3.35,-1.6) {\scriptsize{$t_5$}};
\fill[black] (-1.45,2.8) rectangle (-.05,3);
\fill[black] (0.65,2.8) rectangle (2.05,3);
\fill[violet] (-0.3,1.1) rectangle (0,1.4);
\fill[violet] (-1.6,2.1) rectangle (-1.3,2.4);
\fill[violet] (1.8,2.1) rectangle (2.1,2.4);
\foreach \i in {0,...,3}
{\draw (-1.45+\i,-1.35+0.4*\i) -- (2.5,-1.35+0.4*\i); \draw (2.4,-1.33+0.4*\i) -- (3.15,-1.33+0.4*\i);
\draw (-1.15+\i,-1.15+0.4*\i) -- (2.5,-1.15+0.4*\i); \draw (2.4,-1.13+0.4*\i) -- (3.15,-1.13+0.4*\i);}
\draw[line width=1,color=red] (2.55,-1.75) -- (3.15,-1.75); 
\draw[line width=1,color=red] (2,-1.72) -- (2.6,-1.72); 
\draw[line width=1,color=red] (1.45,-1.75) -- (2.05,-1.75); 
\draw[line width=1,color=red] (1.5,-1.72) -- (0.9,-1.72); 
\node at (1,-1.6) {\scriptsize{$t_9$}};
\end{tikzpicture}
\hskip .4cm
\centering\begin{tikzpicture}[scale=.9]
\node[draw=none,fill=none] at (-2.4,3.2) {(b)};
\draw (-1.1,-1.2) -- (-1.1,3.2);
\draw (-0.1,-0.8) -- (-0.1,3.2);
\draw (-0.4,-1) -- (-0.4,3.2);
\draw (-1.4,-1.4) -- (-1.4,3.2);
\draw (0.7,-0.6) -- (0.7,3.2);
\draw (1.7,-0.2) -- (1.7,3.2);
\draw (1,-0.4) -- (1,3.2); 
\draw (2,0) -- (2,3.2);
\draw (-2,1.2) -- (2.5,1.2); \draw[line width=1,color=blue] (3.2,1.22) -- (2.4,1.22);
\draw (-2,1.8) -- (2.5,1.8); \draw[line width=1,color=blue] (3.2,1.82) -- (2.4,1.82);
\draw (-2,2.2) -- (2.5,2.2); \draw[line width=1,color=blue] (3.2,2.22) -- (2.4,2.22);
\node at (-1.8,1.6) {\vdots};
\node at (0.3,3.2) {\dots};
\node at (-1.25,3.4) {\footnotesize{$X_1$}};
\node at (-0.2,3.4) {\footnotesize{$\tilde{X}_1$}};
\node at (0.9,3.4) {\footnotesize{$X_n$}};
\node at (1.9,3.4) {\footnotesize{$\tilde{X}_n$}};
\node at (-2.2,1.2) {\footnotesize{$y_m$}};
\node at (-2.2,1.8) {\footnotesize{$y_2$}};
\node at (-2.2,2.2) {\footnotesize{$y_1$}};
\draw[line width=1,color=red] (3.1,.9) -- (3.1,2.5); 
\draw[line width=1,color=red] (3.1,0.4) -- (3.1,-1.8); 
\draw[line width=1,color=red] (3.13,0.25) -- (3.13,1.05); 
\node at (3.1,2.65) {\scriptsize{$t_1$}};
\node at (3.35,-1.6) {\scriptsize{$t_3$}};
\fill[black] (-1.45,2.8) rectangle (-.05,3);
\fill[black] (0.65,2.8) rectangle (2.05,3);
\fill[violet] (-1.6,2.1) rectangle (-1.3,2.4);
\fill[violet] (1.8,2.1) rectangle (2.1,2.4);
\fill[violet] (-0.3,1.1) rectangle (0,1.4);
\foreach \i in {0,...,3}
{\draw (-1.45+\i,-1.35+0.4*\i) -- (3.15,-1.35+0.4*\i);
\draw (-1.15+\i,-1.15+0.4*\i) -- (3.15,-1.15+0.4*\i);}
\end{tikzpicture}
\caption{An overall construction of $G$ for (a) path $P_5$ and (b) cycle $C_5$. The red segments stands for $T$, blue ones are the elements of $Y'$. Black rectangles are variable gadgets with details shown on Figure \ref{fig-cthulhu} (a). Violet squares are membership gadgets with details shown on Figure \ref{fig-cthulhu} (b).}
\label{fig-overall}
\end{figure}

\subsection{Cycles}
\label{sec-cycles}
In this section we assume that the target graph is a cycle with consecutive vertices $1,2,\ldots,k$ (so $1$ is adjacent to $k$). Let us start with the case $k=4$.

\begin{theorem} \label{thm:lsh-c4}
\lshomo{$C_4$} can be solved in time $2^{O(n^{2/3}\log^{3/2} n)}$ for string graph on $n$ vertices, even if a geometric representation is not given.
\end{theorem}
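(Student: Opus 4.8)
The plan is to reduce \lshomo{$C_4$} to two independent instances of the subproblem already solved inside the proof of Theorem~\ref{thm:lsh-p3}, exploiting that $C_4$ is the complete bipartite graph $K_{2,2}$ and that in $C_4$ the two vertices of each bipartition class have the same neighbourhood. First I would dispose of the trivial cases: if $G$ is not bipartite there is no homomorphism to $C_4$ at all, so we answer no; otherwise fix a bipartition $V(G)=X\uplus Y$ (the union of bipartitions of the connected components). In any homomorphism $h\colon G\to C_4$, within every component one class is sent into $\{1,3\}$ and the other into $\{2,4\}$; conversely, since $C_4$ is complete bipartite between $\{1,3\}$ and $\{2,4\}$, \emph{any} map sending $X$ into $\{1,3\}$ and $Y$ into $\{2,4\}$ is automatically a homomorphism.

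The key observation is a decoupling. Since $N_{C_4}(1)=N_{C_4}(3)=\{2,4\}$, a vertex $x\in X$ with $h(x)\in\{1,3\}$ is happy if and only if $x$ has a neighbour coloured $2$ and a neighbour coloured $4$ -- a condition depending only on $h|_Y$; symmetrically, happiness of a $y\in Y$ depends only on $h|_X$. Hence $G\tos C_4$ if and only if \emph{both} of the following hold: (i) $X$ admits a $2$-colouring with colours $\{1,3\}$ in which every $y\in Y$ sees both colours, and (ii) $Y$ admits a $2$-colouring with colours $\{2,4\}$ in which every $x\in X$ sees both colours. A short check, component by component, confirms that (i) and (ii) together are equivalent to $G\tos C_4$, independently of which orientation of each component's bipartition we picked.

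Now condition (i) is exactly the question whether there is a locally surjective homomorphism $h\colon G\to P_3$ with $h(X)\subseteq\{1,3\}$ and $h(Y)=\{2\}$, which is precisely one of the two symmetric cases handled -- and whose running time $2^{O(n^{2/3}\log^{3/2}n)}$ is established -- in the proof of Theorem~\ref{thm:lsh-p3}, including the generalised version with the auxiliary function $\sigma$ needed for the recursion and the robust exhaustive search for separators and for a large complete bipartite subgraph. Likewise (ii) is the same computation with the roles of $X$ and $Y$ exchanged. So the algorithm is literally that of Theorem~\ref{thm:lsh-p3} with its final ``or'' replaced by ``and'': run the two sub-computations and answer yes iff both succeed. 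The total running time is $2^{O(n^{2/3}\log^{3/2}n)}$, and robustness (no geometric representation required) is inherited verbatim: whenever an exhaustive separator search, or the search for $K_{n^{1/3},n^{1/3}}$ on a graph with more than $\tfrac{c}{3}n^{4/3}\log n$ edges, fails, we may safely report that $G$ is not a string graph.

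The step that needs the most care is the decoupling claim itself, in particular checking that it survives disconnectedness and the low-degree corner cases: a vertex of degree at most $1$ must force the answer no (as $\delta(C_4)=2$), and one must verify that the two sub-computations (i) and (ii) jointly reject exactly in those situations, so that taking their conjunction is sound. Everything else -- the win-win/separator machinery and the bookkeeping bounding the number of recursive calls -- is reused unchanged from Theorem~\ref{thm:lsh-p3}; the whole point of the decoupling is to avoid deriving a fresh branching rule for $C_4$, since a naive branch on a large $K_{t,t}$ runs into the awkward case where one side becomes monochromatic and the other side is left ``happy but still uncoloured'', which does not shrink the instance.
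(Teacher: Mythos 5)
Your proposal is correct and matches the paper's proof in essence: your decoupling conditions (i) and (ii) are exactly the paper's equivalence between a locally surjective homomorphism $G \tos C_4$ with $h(X)=\{1,3\}$ and a pair $h_1,h_2 \colon G \tos P_3$ with $h_1(X)=h_2(Y)=\{1,3\}$ and $h_1(Y)=h_2(X)=\{2\}$, and the algorithm is likewise two runs of the \lshomo{$P_3$} routine of Theorem~\ref{thm:lsh-p3} with the roles of $X$ and $Y$ switched, answering yes iff both succeed. The running time and robustness are inherited exactly as you state.
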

\begin{proof}
Again, we assume that an instance graph $G$ is bipartite with bipartition classes $X$ and $Y$, without isolated vertices. Clearly in any solution $h$ we either have $h(X) = \{1,3\}$ and $h(Y)=\{2,4\}$, or $h(X) = \{2,4\}$ and $h(Y)=\{1,3\}$.

Let us show that there exists $h:G \tos C_4$ such that $h(X)=\{1,3\}$ if and only if there exist $h_1,h_2:G \tos P_3$ for which $h_1(X)=h_2(Y)=\{1,3\}$ and $h_1(Y)=h_2(X)=\{2\}$.
First, consider $h: G \tos C_4$ such that $h(X)=\{1,3\}$, which implies that $h(Y)=\{2,4\}$. Define $h_1(z) = 2$ if $z \in Y$ and $h_1(z)=h(z)$ otherwise. Define $h_2(z)=2$ if $z \in X$ and $h_1(z)=h(z)-1$ otherwise. Clearly, $h_1,h_2 \colon G \to P_3$ and $h_1(X)=h_2(Y)=\{1,3\}$. Assume $h_1$ is not locally surjective, i.e., there exists $v$ which is not happy. Note that if $v \in X$, then $v$ must be an isolated vertex, a contradiction. If $v \in Y$ and, without loss of generality, $h_1(N(v))=\{1\}$, this means that $h(N(v))=\{1\}$, again, a contradiction, because $h$ is locally surjective. Analogous argument works for $h_2$.

Now assume that there exist $h_1,h_2 :G \tos P_3$ such that $h_1(X)=h_2(Y)=\{1,3\}$. For every $z \in X$ let $h(z):=h_1(z)$ and for every $z \in Y$ let $h(z):=h_2(z)+1$. Observe that $h(X) = \{1,3\}$ and $h(Y)=\{2,4\}$, so $h$ is a homomorphism. Assume it is not locally surjective, and $v$ is not happy. Without loss of generality let $h(v)=2$ and $h(N(v))=\{1\}$. This means that $h_1(v)=2$ and, as $h_1$ was locally surjective, there is $u \in N(v)$ for which $h_1(u)=3$. But $u \in X$, so $h_1(u)=h(u)=3$, a contradiction.

To solve \lshomo{$C_4$}, we run the algorithm from Theorem \ref{thm:lsh-p3} twice, switching the roles of $X$ and $Y$. We return true for \lshomo{$C_4$} if both calls return true.

The total running time is $2^{O(n^{2/3}\log^{3/2} n)}$.
\end{proof}

It appears that existence of subexponential algorithms for remaining~$k$ is unlikely.

\begin{theorem} \label{thm:lsh-cp-lower}
Let $k \geq 3$, $k \neq 4$. There is no algorithm solving \lshomo{$C_k$} on a \kdir{$2$} graph with $n$ vertices in time $2^{o(n)}$, unless the ETH fails.
\end{theorem}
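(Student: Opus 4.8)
The plan is to adapt the reduction from $3$-\sat used in the proof of \autoref{thm:lsh-pp-lower}, keeping the same geometric skeleton but rescaling the gadgets to the cyclic target exactly as indicated in the captions of Figure~\ref{fig-cthulhu} and Figure~\ref{fig-overall}. Starting from a $3$-\sat instance $\Phi$ with $n$ variables and $m=O(n)$ clauses in which every variable occurs at least once positively and once negatively, I would build a \kdir{$2$} graph $G$ consisting of: vertical occurrence segments $X$ grouped by variable; horizontal clause segments $Y$, each crossing all of $X$; the connector segments $Y'$ and a vertical ``backbone'' path $T$ on the right; a variable gadget over each block of occurrence segments, now realized by a path of $k-1$ segments; and, at every crossing of an occurrence segment with its clause segment, a membership gadget whose internal path has $k-2$ segments (collapsing to a single segment for $k=3$, placed so it meets $\beta_i$ but not $x_i$). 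All segments are axis-parallel, so $G$ is a \kdir{$2$} graph on $O(n+m)=O(n)$ vertices, and a $2^{o(n)}$-time algorithm for \lshomo{$C_k$} on such graphs would decide $3$-\sat in time $2^{o(n)}$, contradicting the ETH.

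The new ingredient, replacing the degree-$1$ anchor $h(t_{2k-1})=1$ that was available for $P_k$, is a rigidity argument for the backbone, and this is where $k\neq 4$ is used. Let $h\colon G\tos C_k$. The occurrence and clause segments span a complete bipartite subgraph $K_{|X|,|Y|}$, so happiness of each of its vertices forces $h(X)\subseteq N_{C_k}(c)$ for every $c\in h(Y)$ and $h(Y)\subseteq N_{C_k}(c)$ for every $c\in h(X)$; since any two distinct vertices of $C_k$ have at most one common neighbour when $k\neq 4$ (for $C_4$ this fails, which is exactly why $C_4$ is the easy case), at least one of $h(X),h(Y)$ must be a single colour. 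Because every vertex of $C_k$ has degree $2$, Observation~\ref{obs-lsh}\,b) constrains every degree-$2$ segment; pushing this along the connector segments $Y'$ (which all meet the same backbone segment) and along $T$, together with the happiness of the biclique vertices, should force both $h(X)$ and $h(Y)$ to be monochromatic and fix their values up to an automorphism of $C_k$, which I would fix so that $h(X)$ and $h(Y)$ play the roles of the colours $2$ and $3$ from the path reduction; the short paths joining each occurrence segment to $T$ then force $h(X)$ to the intended single colour. Granting this, the invariants of \autoref{thm:lsh-pp-lower} survive verbatim: the two endpoints $r_1^i$ and $r_{k-1}^i$ of the $i$-th variable gadget receive the two ``free'' colours, one each --- this is the truth value of $u_i$ --- and in a membership gadget the only freedom is whether $\alpha_i$ takes the ``satisfied'' colour or the ``forcing'' one, the latter pinning $\beta_i$ to a fixed colour on the clause segment $y_p$.

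With the invariants in hand both implications go through as before. From a satisfying assignment $\varphi$ one colours each variable gadget according to $\varphi(u_i)$, colours $T$, $Y'$, $X$ and $Y$ rigidly, and in each membership gadget uses the ``satisfied'' option for a true literal and the ``forcing'' option otherwise; a clause segment $y_p$ then sees a $\beta$-segment of the non-forcing colour, since some literal is true, and is happy. Conversely, from $h\colon G\tos C_k$ one reads $\varphi(u_i)$ off the colour of $r_1^i$; were some clause $c_p$ unsatisfied, all three of its literal occurrences would be driven through the ``forcing'' option, so the three corresponding $\beta$-segments on $y_p$ would carry the same colour and $y_p$ could not be happy --- a contradiction. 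I expect the main obstacle to be the rigidity step: in the $P_k$ reduction it is a one-liner off a single degree-$1$ vertex, whereas here one must chase the happiness constraints through the biclique and the connectors, rule out the ``wrong-winding'' colourings of $C_k$ systematically, and separately verify the smallest cycles and the degenerate $C_3$ membership gadget, where the degree-$2$ structure the argument leans on is the most fragile.
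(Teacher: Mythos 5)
Your overall template is the paper's: the same reduction from 3-\sat, with the geometric skeleton of \autoref{thm:lsh-pp-lower} and the gadgets rescaled for a cyclic target. But there is a genuine gap at exactly the point you flag yourself: the rigidity of the backbone is only asserted (``should force''), and that step is the entire content of the proof. Moreover, your construction is left undetermined in precisely the places this step depends on. You cannot keep the $P_k$ backbone as it stands: its free end $t_{2k-1}$ has degree $1$, and by Observation~\ref{obs-lsh}~a) a degree-$1$ vertex has no admissible image in $C_k$, so the unmodified skeleton would be a trivial no-instance; once you change it, the exact lengths matter. In the paper, $T$ has $k-2$ vertices with all of $Y'$ attached to $t_1$, and each occurrence segment is joined to $t_{k-2}$ by a \emph{single} segment $x'$. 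These numbers are what make the forcing go through: up to an automorphism of $C_k$ one may assume $h(t_1)=1$ and $h(t_2)=k$; the degree-$2$ interior of $T$ winds monotonically by Observation~\ref{obs-lsh}~b), so $h(t_{k-2})=4$; happiness of $t_1$ and of $t_{k-2}$ produces some $y'$ coloured $2$ and some $x'$ coloured $3$, hence a vertex of $X$ coloured $2$ and a vertex of $Y$ coloured $3$; and the biclique between $X$ and $Y$ then excludes the alternative colour $5$ on every other $x'$ (and symmetrically on $Y'$), giving $h(X)=\{2\}$, $h(Y)=\{3\}$, $h(X')=\{3\}$, $h(Y')=\{2\}$. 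Without fixing the backbone length and the attachment of $X$ to it, this chain cannot even be stated, and with other natural choices it fails. The case $k=3$, where $T$ degenerates to the single vertex $t_1$ carrying both $X'$ and $Y'$, needs (and in the paper gets) its own short argument, which your sketch also omits.

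Your one concrete new ingredient --- in the biclique every colour of $h(Y)$ is a common neighbour of all colours of $h(X)$, and distinct vertices of $C_k$ with $k\neq 4$ have at most one common neighbour, so at least one of $h(X),h(Y)$ is a singleton --- is correct, but it is strictly weaker than what is needed: it does not make the other side monochromatic, and it says nothing about \emph{which} adjacent pair of colours the two sides take relative to the colours reachable inside the variable and membership gadgets. That calibration (the ``forcing'' colour on $\beta_i$ being the unique colour, namely $4$, or $1$ when $k=3$, that can make a clause segment happy) is what both directions of the correctness argument rest on. So the proposal identifies the right construction template and correctly locates the difficulty, but the theorem is not proved: the central forcing step is missing, and the gadget dimensions it requires are unspecified.
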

\begin{proof}
Again, we reduce from 3-\sat. Assume that every variable $u$ appears at least once as a positive and once as a negative literal. Our construction is similar to the one in the proof of Theorem \ref{thm:lsh-pp-lower}, we just change the length of some paths. We construct the sets $X$, $Y$ and $Y'$ in the same way. Each variable gadget is now a path $(r_1^i,\dots,r_{k-1}^i)$ such that $r_1^i$ and $r_{k-1}^i$ intersects also all segments from $X_i$ and $\tilde{X}_i$ respectively. In each membership gadget $E^i$ the path between $\alpha_i$ and $\beta_i$ has now $k-2$ segments. Also the path $T$ has length $k-2$ and still each $y' \in Y'$ intersects $t_1$. For every $x \in X$, we add a single segment $x'$, intersecting $x$ and $t_{k-2}$ (instead of a path, see Figure~\ref{fig-cthulhu}~(a)). Denote the set of these segments $x'$ by $X'$.

We show that $G \tos C_k$ iff $\Phi$ is satisfiable. Assume that there exists $h \colon G \tos C_k$. Let us start with analyzing its structure, the argument will be split into two cases.

First consider $k\geq 5$. By symmetry of $C_k$, we can assume that $h(t_1)=1$ and $h(t_2)=k$. By \autoref{obs-lsh} b), it implies that $h(t_{k-2})=4$. Vertices $t_1$ and $t_{k-2}$ must be happy, so there exist $x'\in X'$ and $y' \in Y'$, such that $h(x')=3$ and $h(y')=2$, which means $h(x)=2$ and $h(y)=3$ for their corresponding neighbors $x \in X$ and $y \in Y$. Note that if there exists $z' \in X'$ such that $h(z')=5$, then its neighbor $z$ from $X$ must be mapped to 6 (or 1, if $k=5$). However, then $h(x)$ is not a neighbor of $h(y)=3$, a contradiction.

If $k=3$, then, by symmetry, we assume that $h(t_1)=1$ and $h(y')=2$ for some $y' \in Y'$. By Obs. \ref{obs-lsh} b), note that $h(y)=3$ for the neighbor $y$ of $y'$. If there is some $x' \in X'$, such that $h(x')=2$, then its neighbor $x$ must be mapped to 3, which is impossible since $h(y)=3$.

In both cases we obtain that every segment from $X'$ is mapped to $3$ and every segment from $X$ is mapped to 2. Analogously we can show that $h(Y')=2$ and $h(Y)=3$.

For each $i$, we have $\{h(r_1^i),h(r_{k-1}^i)\}=\{1,3\}$.
We define $\varphi(u_i)=1$ if $h(r_1^i)=1$, otherwise $\varphi(u_i)=0$.
Suppose that $\varphi$ does not satisfy $\Phi$. Let $c_p$ be an unsatisfied clause. Since $h$ is locally surjective, $y_p$ is happy, so it has a neighbor $\beta_i$ such that $h(\beta_i)=4$ (or $h(\beta_i)=1$ if $k=3$). It implies that  $h(\alpha_i)=3$ and thus the neighbor of the occurrence segment $x_i$ in the variable gadget is mapped to 1. Therefore $x_i$ corresponds to a true literal, a contradiction.

Now assume that $\varphi$ is a satisfying assignment for $\Phi$. We define the coloring $h$ of all vertices of $G$ except the ones in variable or membership gadgets exactly as above. For each variable $u_i$, if $\varphi(u_i)=1$, we set $h(r_1^i)=1$, otherwise $h(r_{k-1}^i)=1$. We color remaining vertices of vertex gadgets in the only possible way. Observe that every $x_i \in X$ which has a neighbor $r_1^i$ or $r_{k-1}^i$ mapped to $1$ is already happy, so we can set $h(\alpha_i)=3$ and $h(\beta_i)=4$ (or $h(\beta_i)=1$ if $k=3$), and color all remaining vertices of $E^i$ such that $h(E^i)=[k]$. Such vertices $x_i$ corresponds to true literals.
If $x_i$ still has no neighbors mapped to 1, we need to set $h(\alpha_i)=1$, which implies $h(\beta_i)=2$. Note that this coloring can be extended to the remaining segments in the membership gadget.
Observe that a clause segment $y_p$ is happy only if it has a neighbor $\beta_i$ mapped to $4$ (or 1 for $k=3$), and recall that for such $\beta_i$, the segment $x_i$ corresponds to a true literal. As $\varphi$ is a satisfying assignment, such literal exists in each clause, so all vertices of $Y$ must be happy, which means $h$ is locally surjective.
\end{proof}

%Let us point out that \lshomo{$C_4$}, in which we additionally weights, does not have a subexponential algorithm in \kdir{$2$}, unless the ETH fails. This can be easily seen by modifying the construction from the proof of \autoref{thm-homo-c4}. Let $G^*$ be the originally constructed \kdir{$2$} graph, we will modify it in order to obtain a graph $G^{**}$. For each segment $v$ of $G$ we add three new segments, so that altogether with $v$ they induce $C_4$. All weights for newly added vertices are set to 0. Is it easy to observe that $G^{**}$  has a locally surjective homomorphism to $C_4$, respecting lists, if and only if $G^*$ has a homomorphism to $C_4$, respecting  lists.

\subsection{One more hard graph}
Finally, let us consider  the graph $H$ in Fig.~\ref{fig:lsh} (left). We will show the following.

\begin{figure}[h]
\centering\begin{tikzpicture}[scale=1, every node/.style={draw,circle,fill=white,inner sep=0pt,minimum size=5pt}]
\draw[line width=1] (0,0) -- (2,0);
\node at (0,0) {} ;
\node at (2,0) {} edge [line width=1,in=50,out=130,loop] ();
\node[draw=none,fill=none] at (0,-0.4) {\footnotesize{a}};
\node[draw=none,fill=none] at (2,-0.4) {\footnotesize{b}};
\end{tikzpicture}
\hskip 2cm
\includegraphics[scale=0.8,page=1]{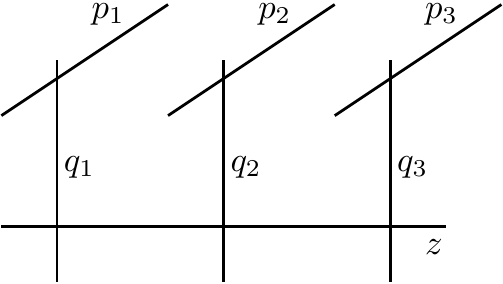}
\caption{A graph $H$ (left) and a clause gadget (right).}
\label{fig:lsh}
\end{figure}

\begin{theorem} \label{thm:lsh-lower}
Assuming the ETH, there is no algorithm solving the \lshomo{$H$} on a segment graph with $n$ vertices in time $2^{o(n)}$.
\end{theorem}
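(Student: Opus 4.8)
The plan is a reduction from $3$-\sat with a bounded number of occurrences per variable, which by the ETH cannot be solved in time $2^{o(n)}$ and in which the number $m$ of clauses is $O(n)$. From a formula $\Phi$ I will build a segment graph $G$ with $|V(G)|=O(n)$, together with an explicit segment representation, such that $G \tos H$ if and only if $\Phi$ is satisfiable; a $2^{o(n)}$ algorithm for \lshomo{$H$} would then refute the ETH. The construction follows the template of the earlier hardness proofs: a grid-like family of \emph{occurrence segments} for the literals, \emph{clause segments} crossing them, a rigid \emph{backbone} path $T$ placed as in the proofs of Theorems~\ref{thm:lsh-pp-lower} and~\ref{thm:lsh-cp-lower}, and constant-size \emph{variable gadgets} and \emph{clause gadgets} (the latter being the one in Figure~\ref{fig:lsh}, right).

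The first step is to understand how $H$ rigidifies a locally surjective homomorphism $h$. Put $A=h^{-1}(a)$ and $B=h^{-1}(b)$. Since $aa\notin E(H)$, the set $A$ is independent; by \autoref{obs-lsh} \ref{it-obs-lsh-a}) every degree-one segment lies in $A$ and every segment of $A$ is non-isolated; and a segment of $B$ is happy exactly when it has a neighbour in $A$ and a neighbour in $B$. The crucial building block is a path: its endpoints are forced into $A$, and propagating the constraints shows that a path admits a locally surjective homomorphism to $H$ only if its number of edges is divisible by $3$, in which case the colouring is unique, namely the period-$3$ pattern $a,b,b,a,b,b,\dots,a$. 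Hence paths can transmit the colour $a$ over prescribed distances and, by the right choice of lengths modulo $3$, can either force or avoid an $a$ on a particular segment; this is the mechanism used to encode bits. I would use this to make: (i) the backbone $T$ rigid up to the single symmetry of $H$, which in turn pins down the colours of all grid segments; (ii) for each variable $u_i$ a gadget attached to its occurrence segments that admits precisely two locally surjective states, one proclaiming $u_i$ true and one false, the state being readable on the occurrence segments; and (iii) for each clause a gadget whose central segment, once the rest of $G$ is coloured, is happy if and only if at least one of its three literal-occurrence segments carries the ``true'' colour (the colour $a$), while its auxiliary segments are happy in every case. Encoding ``true'' as ``mapped to $a$'' fits the clause gadget because a $b$-coloured central segment needs an $a$-neighbour.

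Given the gadgets, correctness is routine in both directions. From a satisfying assignment $\varphi$ I colour the occurrence segments by the literal values, extend the forced colourings along $T$ and inside every variable gadget, and in each clause use a true literal to provide the required $a$-neighbour of the central segment; a short case analysis confirms that every segment is happy. Conversely, from $h\colon G \tos H$ the rigidity of $T$ forces the colours of the grid (up to the symmetry of $H$), so each variable gadget sits in one of its two states, which defines $\varphi$; happiness of a clause segment then forces one of its literals true, so $\varphi$ satisfies $\Phi$. I expect the main difficulty to be the design and analysis of the variable gadget: because the looped vertex $b$ makes regions coloured entirely by $b$ locally flexible, the gadget could a priori admit unintended colourings (for instance a stray low-degree auxiliary segment forcing an $a$ in the wrong place, or the loop ``absorbing'' a constraint), and the bulk of the argument is to show that no such colouring exists, so that the only freedom left is the intended true/false choice. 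Verifying that the whole arrangement is realisable by straight-line segments is a further, but routine, part of the construction.
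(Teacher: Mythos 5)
Your preliminary analysis of $H$ is sound (the independence of $h^{-1}(a)$, the forcing of degree-one vertices to $a$ via \autoref{obs-lsh}, and the mod-$3$ rigidity of paths), and reducing from 3-\sat with the clause gadget of Figure~\ref{fig:lsh} is exactly the right target. But what you have written is a plan, not a proof: the variable gadget --- which you yourself identify as carrying ``the bulk of the argument'' --- is never constructed, the overall layout (which segments cross which) is never pinned down, and the correctness claims are conditional on gadgets whose existence is precisely what has to be established. ``Given the gadgets, correctness is routine'' is not available when the gadgets are the missing content. Two further points in your sketch would need repair even as a plan. First, $H$ has no nontrivial automorphism ($a$ is loopless, $b$ is looped), so there is no symmetry to break and no need for a rigid backbone $T$ or a grid of clause segments at all; a single pendant path suffices to pin colours absolutely. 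Second, your encoding ``true $=$ occurrence segment mapped to $a$'' interacts badly with a grid-like arrangement in which occurrence segments may cross each other or long clause segments: $h^{-1}(a)$ is independent, so two true occurrences could never cross, and you never address how the layout avoids this.

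For comparison, the paper's construction shows the hard part can be made trivial. Each occurrence segment $s$ gets a pendant path $s-e-f$: since $\deg(f)=1$ we get $h(f)=a$, happiness of $f$ and $e$ forces $h(e)=b$ and then $h(s)=b$, so \emph{every} occurrence segment is forced to $b$ and truth values are read off the two crossing variable segments $x_i,y_i$ instead (they meet only each other and $b$-coloured occurrence segments, so happiness forces exactly one of them to $a$). An occurrence segment of a false literal, being $b$ with a $b$-neighbour $e$, demands an $a$-neighbour inside the clause gadget; if all three literals of a clause are false this forces $h(p_1)=h(p_2)=h(p_3)=a$, hence $h(q_1)=h(q_2)=h(q_3)=b$ and $h(z)=b$, leaving $z$ without an $a$-neighbour --- the desired contradiction. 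No backbone, no grid, and no delicate variable gadget is needed; until you supply and analyse a concrete variable gadget (or adopt a forcing trick of this kind), your argument has a genuine gap at its centre.
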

\begin{proof}
We reduce from 3-\sat. Consider a 3-\sat formula $\Phi$ with variables $u_1,u_2,\ldots,u_n$ and clauses $c_1,c_2,\ldots,c_m$, each of which is an alternative of exactly three literals. Again, we may assume that each variable $u$ appears at least once as a positive and at least once as a negative literal.

Let us construct a segment graph $G$, which is an instance of \lshomo{$H$}. For each variable $u_i$ we introduce two {\em variable segments} $x_i$ and $y_i$, intersecting each other. The segment $x_i$ will correspond to positive appearances of $u_i$, while $y_i$ will correspond to the negative ones. For each clause we introduce a {\em clause gadget} depicted in Fig.~\ref{fig:lsh} (right). The segments $p_1,p_2,p_3$ correspond to literals of the clause.

For every appearance of $u_i$ in a clause $c_j$, we add an {\em occurrence segment} intersecting the appropriate segment of $x_i,y_i$ and one of $p_1,p_2,p_3$ in the gadget encoding the clause $c_j$. The occurrence segments do not intersect other variable segments and segments in clause gadgets, but may intersect each other. Finally, for every occurrence segment $s$ we add two segments $e$ and $f$, such that $f$ intersects only $e$, and $e$ intersects only $s$ and $f$. The overall picture of the construction is shown in Fig.~\ref{fig-lsh-construction}.

\begin{figure}[h]
\centering
\includegraphics[scale=0.8,page=2]{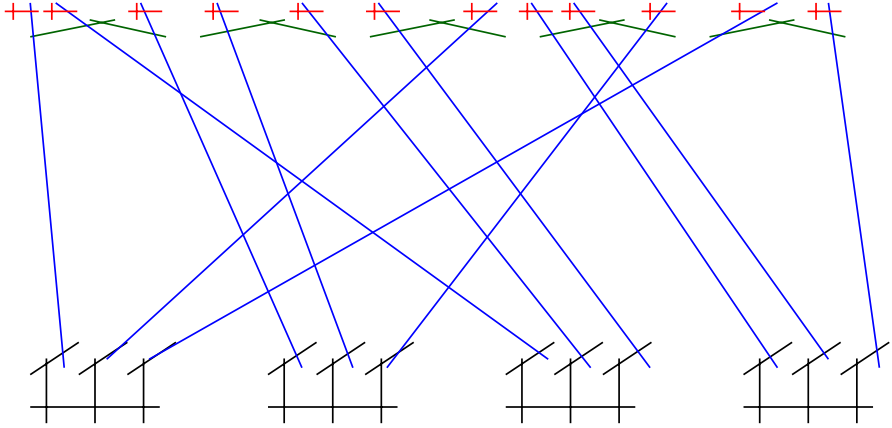}
\caption{A construction in Theorem \ref{thm:lsh-lower}. Clause gadgets are drawn in black, variable segments in green, occurrence segments in blue, and their corresponding segments in red.}
\label{fig-lsh-construction}
\end{figure}

Let us show that $\Phi$ is satisfiable if and only if $G$ has a locally surjective homomorphism to $H$. First, suppose that $\Phi$ is satisfiable and let $\varphi$ be a satisfying truth assignment. Let us define a mapping $h \colon V(G) \to \{a,b\}$. For each variable $u_i$, if $\varphi(u_i)$ is true, then $h(x_i)=a$ and $h(y_i)=b$, otherwise $h(x_i)=b$ and $h(y_i)=a$. For each occurrence segment $s$, and its {\em corresponding segments} $e,f$, we set $h(s)=b$, $h(e)=b$, and $h(f)=a$.
Now consider a clause gadget corresponding to a clause $c_j$. Since $\varphi$ is a satisfying assignment, $c_j$ has at least one true literal, let is be $k$-th literal in $c_j$ for $k \in \{1,2,3\}$. We set $h(p_k)=b$ and $h(q_k)=a$. Moreover, we set $h(p_i)=a$ and $h(q_i)=b$ for $i \neq k$. Finally, we set $h(z)=b$.
It is straightforward to see that $h$ is a homomorphism to $H$. Let us now show that it is locally surjective.

First, we observe that each variable segment is happy. Indeed, recall that $h(x_i) \neq h(y_i)$ and that each of $x_i,y_i$ intersects at least one occurrence segment, which is mapped to $b$. Now consider an occurrence segment $s$ with its corresponding segments $e,f$. The segment $f$ is happy, because it is adjacent to $e$, which is mapped to $b$. The segment $e$ is also happy, since it is adjacent to $f$ and $s$, which are mapped to $a$ and $b$, respectively. Moreover, $s$ is adjacent to $e$, so to make it happy, it needs to be adjacent to a vertex mapped to $a$.
If the literal corresponding to $s$ is true, then such a vertex is either $x_i$ or $y_i$. So assume that $s$ corresponds to a literal that is false. Note that this literal is not satisfying any clause, so $s$ intersects some $p_k$ in a clause gadget, such that $h(p_k)=a$. Therefore $s$ is always happy. Finally, since each occurrence segment is mapped to $b$, it is straightforward to see that each segment in a clause gadget is also happy. This shows that $h$ is locally surjective.

For the other direction, suppose that $h$ is a locally surjective homomorphism from $G$ to $H$.
Consider an occurrence segment $s$ with its corresponding segments $e,f$. Note that since $f$ is happy, it must that $h(f) = a$ and $h(e)=b$. Now, since $e$ is happy, we must have $h(s)= b$. Now consider the variable segments for a variable $u_i$. Note that they only intersect each other and occurrence segments. Thus, to make them happy, one of $x_i,y_i$ must be mapped to $a$ and the other one to $b$. For each variable $u_i$, we set $\varphi(u_i)$ true if an only if $h(x_i)=a$. Let us show that $\varphi$ satisfies $\Phi$.
Suppose the contrary, i.e., there is a clause $c_j$ which is not satisfied by $\varphi$, i.e., all segments corresponding to literals of $c_j$ are mapped to $b$. Consider the segments in the clause gadget corresponding to $c_j$. Note that in order to make the occurrence segments happy, we need to set $h(p_1)=h(p_2)=h(p_3)=a$. Since $h$ is a homomorphism, we need to have $h(q_1)=h(q_2)=h(q_3)=b$. Now, to make $q$'s happy, we need to have $h(z)=b$. However, this way $z$ is not adjacent to any segment mapped to $a$, so it is unhappy, a contradiction.
\end{proof}

%%%%%%%%%%%%%%%%%%%%%%%%%%%%%

\section{Consequences for $P_t$-free graphs}
Let us start with proving \autoref{thm-mainptfree}.

\ptfreethm*

\begin{proof}
Recall that part (a) was proven by Groenland {\em et al.} \cite{DBLP:journals/corr/abs-1803-05396}. We will show that the proof of \autoref{thm-mainhom} (b) implies \autoref{thm-mainptfree} (b).
Let us consider again the problem \whomo{$H$}, for $H$ shown in Figure \ref{fig-themagnificentseven} (a). We go back to the proof of \autoref{thm-mchomo} and observe that the longest induced path of each instance $G^*$ has at most 6 vertices (if $v_iv_{i'}, v_jv_{j'}$ are disjoint edges of $G$, then it is the path $\beta_{ii'},\alpha_{ii'},x_i,y_j,\alpha_{jj'}, \beta_{jj'})$). Clearly, this means that there is no algorithm solving {\sc Max Cut} (and thus \whomo{$H$} for $H$ in Figure \ref{fig-themagnificentseven} (a)) in time $2^{o(n)}$ on $P_7$-free graphs on $n$ vertices, unless the ETH fails. Moreover, if instead of gadgets we used edge-weights, as in the proof of  \autoref{thm-homo-c4}, we obtain hardness of \whomo{$H$} for complete graphs. Note that complete graphs are $P_3$-free, and clearly the problem is polynomially solvable on $P_2$-free graphs.

Analogously we can conclude that, assuming the ETH, there is no subexponential algorithm for \whomo{$H$} for $H$ shown in Figure \ref{fig-themagnificentseven} (b). Note that the instance constructed in the proof of \autoref{thm-octhomo} is always $P_{13}$-free, and substituting gadgets with appropriate edge weight gives the hardness for complete bipartite graphs.
For the remaining graphs in   Figure \ref{fig-themagnificentseven}, the instance constructed in the proof of \autoref{thm-homo-c4} is also complete bipartite. Note that complete bipartite graphs are $P_4$-free, and for all graphs $H$ in Figures \ref{fig-themagnificentseven} (b) -- (g), \whomo{$H$} is polynomially solvable for $P_3$-free graphs.
\end{proof}

In particular, we obtain the following result, answering an open problem of Bonamy {\em et al.}~\cite{DBLP:journals/algorithmica/BonamyDFJP19}.

\begin{corollary} \label{cor:oct-ptfree}
\oct problem is NP-complete and cannot be solved in time $2^{o(n)}$ in $P_{13}$-free graphs, unless the ETH fails. \qed
\end{corollary}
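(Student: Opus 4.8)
The plan is to obtain Corollary~\ref{cor:oct-ptfree} directly from the hardness construction already built for \autoref{thm-octhomo}, together with the induced-path analysis carried out in the proof of \autoref{thm-mainptfree}. Recall first that for $H$ the graph~(b) of Figure~\ref{fig-themagnificentseven} with the uniform vertex-weighting of Figure~\ref{fig-oct}, the problem \whomo{$H$} is exactly \oct: a maximum-weight homomorphism maximizes the number of vertices mapped to the two loopless vertices, i.e.\ a maximum induced bipartite subgraph, whose complement is a minimum odd cycle transversal. Since the weights used are uniform (a fixed weighting of $H$, not a genuine vertex/edge weight function on $G$), the instances produced are honest \oct instances, so it suffices to show that the reduction behind \autoref{thm-octhomo} lands in the class of $P_{13}$-free graphs.

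Next I would recall the structural facts about the graph $G^{*}$ constructed in the proof of \autoref{thm-octhomo}: it is a segment graph, it has $|V(G^{*})| = O(|V(G)|)$ vertices where $G$ is the input \is instance, and — the only point needing verification — it is $P_{13}$-free. For the last point, the segments of $X$ (resp.\ of $Y$) are pairwise disjoint while every $x_{i}$ crosses every $y_{j}$, so $X\cup Y$ induces a complete bipartite graph $K_{n,n}$, which is $P_{4}$-free; hence any induced path of $G^{*}$ meets $X\cup Y$ in only a bounded contiguous piece. Each vertex gadget $D_{i}$ and each edge gadget $\{e_{ij}\}$ has constant size and is attached to $X\cup Y$ only at the corner $x_{i}\cap y_{i}$ (resp.\ $x_{i}\cap y_{j}$). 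A short case analysis on how an induced path can enter and leave these gadgets at each of its two ends then bounds its length by $12$ vertices, so $G^{*}$ is $P_{13}$-free — precisely the claim already invoked in the proof of \autoref{thm-mainptfree}.

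Finally I would assemble the statement. Membership in NP is immediate: an odd cycle transversal is certified by the set itself, and testing bipartiteness of the remaining graph is polynomial; NP-hardness on $P_{13}$-free graphs follows from the polynomial reduction from \is (NP-hard already on bounded-degree graphs) described above. For the fine-grained bound, \is on bounded-degree graphs admits no $2^{o(n)}$ algorithm unless the ETH fails, and the reduction is linear, $|V(G^{*})| = O(|V(G)|)$, so \oct admits no $2^{o(n)}$ algorithm on $P_{13}$-free graphs unless the ETH fails; this resolves the question of Bonamy {\em et al.}~\cite{DBLP:journals/algorithmica/BonamyDFJP19}. The only step carrying real content is the $P_{13}$-freeness verification — a purely combinatorial induced-path analysis of the grid-plus-gadgets construction — and even that is essentially finished once the $K_{n,n}$ observation is in place; the remainder is bookkeeping.
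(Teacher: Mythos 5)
Your proposal is correct and follows essentially the same route as the paper: the corollary is obtained by observing that the grid-plus-gadgets instances $G^*$ from the reduction behind \autoref{thm-octhomo} (which, with the uniform weighting of Figure~\ref{fig-oct}, encode plain \oct) have bounded induced paths, hence are $P_{13}$-free, and that the reduction from bounded-degree \is is linear. Your sketch of the $P_{13}$-freeness verification via the $K_{n,n}$ structure of $X\cup Y$ and the constant-size localized gadgets is exactly the analysis the paper invokes (and asserts without detail) in the proof of \autoref{thm-mainptfree}.
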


Bonamy {\em et al.}~\cite{DBLP:journals/algorithmica/BonamyDFJP19} considered also a closely related problem \ioct, where we additionally require that the removed set of vertices is independent. Interestingly, the hardness result of Corollary \ref{cor:oct-ptfree} does not carry over to this problem. Indeed, \ioct is equivalent to finding a 3-coloring of the input graph, in which the size of one color class is minimized. It is straightforward to see that this problem can be stated as \whomo{$K_3$}, where the weight associated with one vertex is 0, the weights associated with two other vertices are 1, and all edge weights are 0. Thus, by \autoref{thm-mainptfree}, we obtain the following.

\begin{corollary} \label{cor:ioct-ptfree}
For every fixed $t$, the \ioct problem can be solved in time $2^{O(\sqrt{n})}$ for $P_t$-free graphs on $n$ vertices. \qed
\end{corollary}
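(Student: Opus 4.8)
The plan is to reduce \ioct to \whomo{$K_3$} and then invoke \autoref{thm-mainptfree}~(a). First I would record the standard observation that a set $S \subseteq V(G)$ is an independent odd cycle transversal of $G$ if and only if $G$ admits a proper $3$-coloring in which $S$ is one of the color classes: if $S$ is independent and $G - S$ is bipartite, then $2$-color $G - S$ and give every vertex of $S$ the third color; conversely, in any proper $3$-coloring every color class is independent, and deleting one of them leaves a properly $2$-colored, hence bipartite, graph. In particular $G$ has an independent odd cycle transversal of size at most $s$ if and only if $G$ has a proper $3$-coloring in which some fixed color class has at most $s$ vertices.

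Next I would cast this as an instance of \whomo{$K_3$}. Let the vertices of $K_3$ be $1,2,3$. Define the weight function by $w(v,1)=0$ and $w(v,2)=w(v,3)=1$ for every $v \in V(G)$, and $w(e,f)=0$ for every $e \in E(G)$ and $f \in E(K_3)$. A homomorphism $h \colon G \to K_3$ is precisely a proper $3$-coloring of $G$, and its weight equals $|V(G)| - |h^{-1}(1)|$. Hence a homomorphism of weight at least $n-s$ exists if and only if some proper $3$-coloring has a color class of size at most $s$, which is exactly the condition that $G$ has an independent odd cycle transversal of size at most $s$. Thus solving \ioct on $G$ amounts to computing a maximum-weight homomorphism from $G$ to $K_3$ (equivalently, deciding the corresponding \whomo{$K_3$} instance for each candidate value of $s$).

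It remains to observe that $K_3$ is a loopless graph on only three vertices, so any two of its vertices have exactly one common neighbor; in particular $K_3$ has no two vertices with two common neighbors. Therefore \autoref{thm-mainptfree}~(a) applies, and \whomo{$K_3$} — and with it \ioct — can be solved in time $2^{O(\sqrt{n})}$ on $P_t$-free graphs for every fixed $t$. There is essentially no obstacle beyond verifying the color-class equivalence and this trivial structural property of $K_3$; the entire algorithmic content is supplied by \autoref{thm-mainptfree}.
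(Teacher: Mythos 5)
Your proposal is correct and follows essentially the same route as the paper: it states the same equivalence between independent odd cycle transversals and color classes of proper $3$-colorings, encodes the problem as \whomo{$K_3$} with exactly the same weights (one vertex of weight $0$, two of weight $1$, all edge weights $0$), and applies \autoref{thm-mainptfree}~(a) after noting that $K_3$ has no two vertices with two common neighbors. No gaps.
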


Let us also point out that applying the approach of Theorem \ref{thm-mainlihom} to a $P_t$-free graph yields a polynomial algorithm (for fixed $H$). Indeed, a $P_t$-free graph with maximum degree at most $|H|$ has at most $t \cdot |H|^t$ vertices, which is a constant, and thus the problem can be brute-forced in constant time.

Moreover, we observe that also proofs in the Section \ref{sec:lsh} give corollaries for $P_t$-free graphs. Indeed, the graphs constructed in \autoref{thm:lsh-together} (b) are $P_t$-free for some $t$ (depending on $k$). The longest induced path in the graph constructed in the Case 1 of the proof has at most $10k$ vertices: it contains four vertices from $X$, $2(k-1)$ vertices from two variable gadgets, $2(k-3)$ vertices from two paths joining elements of $X$ with $t_k$, $2(2k-3)$ vertices from two membership gadgets and the vertex $t_k$ itself. The longest induced path in  graph constructed in the Case 2 of the proof has $4k+3$ vertices: again, four vertices from $X$, $2(k-1)$ vertices from two variable gadgets, two vertices from $X'$, $2(k-1)$ vertices from two membership gadgets and the vertex $t_k$ itself. From this we conclude that if $H$ is an irreflexive graph with $\Delta(G)\leq 2$ then for some constant $t$ the subexponential algorithm for \lshomo{$H$} for $P_t$-free graphs does not exist, unless the ETH fails.
Finally, note that the construction in the proof of \autoref{thm:lsh-lower} can be modified so that all vertices corresponding to occurrence segments form a clique. After this modification the graph might not be a segment graph anymore, but it is $P_{12}$-free.

%%%%%%%%%%%%%%%%%%%%%%%%%%%%%%

\section{Further research directions}
Let us conclude the paper with pointing out some directions for further research.
First, it would be interesting to obtain a complexity dichotomy for the problems of finding a homomorphism and a list homomorphism from a string graph to a fixed graph $H$.
Next, we think that obtaining a full complexity dichotomy for \lshomo{$H$}  in string graphs is an exciting (and probably difficult) task. Let us mention that the NP-hardness proof by Fiala and Paulusma \cite{DBLP:journals/tcs/FialaP05} implies that if $H$ is a connected graph with at least two edges, then \lshomo{$H$} cannot be solved in subexponential time in general graphs.

Finally, recall that our hardness proofs for \lshomo{$H$} imply hardness for \lshomo{$H$} in $P_t$-free graphs. We think it is interesting whether \lshomo{$P_3$} (and thus \lshomo{$C_4$}, as they are closely related) can be solved in subexponential time in $P_t$-free graphs.

\end{document}